\def\squareforqed{\hbox{\rlap{$\sqcap$}$\sqcup$}}
\def\qed{\ifmmode\squareforqed\else{\unskip\nobreak\hfil
\penalty50\hskip1em\null\nobreak\hfil\squareforqed
\parfillskip=0pt\finalhyphendemerits=0\endgraf}\fi}
\def\duzomniejsze{<\kern-.7mm<}
\def\duzowieksze{>\kern-.7mm>}
\def\textbf#1{{\bf #1}}
\def\beq{\begin{equation}}
\def\eeq{\end{equation}}
\def\be{\begin{equation}}
\def\ee{\end{equation}}
\def\bal{\begin{align}}
\def\eal{\end{align}}
\def\ben{\begin{eqnarray}}
\def\een{\end{eqnarray}}
\def\beqa{\begin{eqnarray}}
\def\eeqa{\end{eqnarray}}
\def\eea{\end{array}}
\def\bea{\begin{array}}
\newcommand{\bei}{\begin{itemize}}
\newcommand{\eei}{\end{itemize}}
\newcommand{\bee}{\begin{enumerate}}
\newcommand{\eee}{\end{enumerate}}
\newcommand{\nc}{\newcommand}
\def\zcal{{\cal Z}}
\def\hcal{{\cal H}}
\def\CC{{\cal C}}
\def\tr{{\rm Tr}}
\def\>{\rangle}
\def\<{\langle}
\def\ot{\otimes}
\newtheorem{lemma}{Lemma}
\newtheorem{corollary}{Corollary}
\newtheorem{proposition}{Proposition}
\newtheorem{theorem}{Theorem}
\newtheorem{definition}{Definition}
\newtheorem{observation}{Observation}
\newtheorem{rem}{Remark}
\newtheorem{example}{Example}
\theoremstyle{plain}
\newtheorem*{Th}{Theorem 1}
\def\bed{\begin{definition}}
\def\eed{\end{definition}}
\def\bel{\begin{lemma}}
\def\eel{\end{lemma}}
\def\bet{\begin{theorem}}
\def\eet{\end{theorem}}
\begin{document}

\title{ On distilling secure key from reducible private states and (non)existence of entangled key-undistillable states}

\begin{abstract}
Hereunder, we study the class of irreducible private states that are private states from which all the secret content is accessible via
measuring their key part. We provide the first protocol which distills key not only from the key part, but also from the shield if only
the state is reducible. We prove also a tighter upper bound on the performance of that protocol, given in terms of regularized relative entropy of entanglement instead of relative entropy of entanglement previously known.
This implies in particular that the irreducible private states are all strictly irreducible if and only if the entangled but key-undistillable states ("entangled key-undistilable states") exist. In turn, all the irreducible private states of the dimension $4\otimes 4$ are strictly irreducible, that is, after an attack on the key part they become separable. Provided the bound key states exist, we consider different subclasses of the irreducible private states and their properties. Finally we provide a lower bound on the trace norm distance between key-undistillable states and private states, in sufficiently high dimensions.
\end{abstract}

\author{K. Horodecki}
\affiliation{Institute of Informatics, National Quantum Information Centre, Faculty of
Mathematics, Physics and Informatics, University of Gda{\'n}sk, 80-308 Gda{\'n}sk, Poland}
\author{P. \'Cwikli\'nski}
\affiliation{Institute of Theoretical Physics and Astrophysics, National Quantum Information Centre, Faculty of
Mathematics, Physics and Informatics, University of Gda{\'n}sk, 80-308 Gda{\'n}sk, Poland}
\author{A. Rutkowski}
\affiliation{Institute of Theoretical Physics and Astrophysics, National Quantum Information Centre, Faculty of
Mathematics, Physics and Informatics, University of Gda{\'n}sk, 80-308 Gda{\'n}sk, Poland}
\author{M. Studzi\'nski}
\affiliation{DAMTP, Centre for Mathematical Sciences, University of Cambridge, Cambridge~CB30WA, UK}

\maketitle

Obtaining the classical secret key for data encryption via quantum states is one of the main contributions of quantum information theory towards security in the era of information \cite{BB84}. This goal has been achieved using the celebrated maximally entangled state \cite{EPR,E91,ShorPreskill}. However, the quantum states which have a property that after the measurement one can get at least $m$ bits of a secret secure key (against quantum eavesdropper) form a much broader class of states called {\it private states} \cite{pptkey}. The maximally entangled state is an example of a private state. In general, a private
state has two subsystems: the key part ($AB$) and the shield ($A'B'$). From the key part one can obtain the secure key via von Neumann measurements on its local subsystems, while
the shield is protecting the key part from the Eavesdropper who holds the purifying system of the total state. More precisely, any private state with {\it at least} $m$ bits of key has the
form
\be
\gamma_{ABA'B'} = {1\over d}\sum_{ij} |ii\>\<jj|_{AB}\otimes U_i \sigma_{A'B'} U_j^{\dagger},
\label{eq:private}
\ee
where $\sigma_{A'B'}\equiv\sigma$ is an arbitrary state on $A'B'$ and $U_i$ are unitary
transformations. 

Private states have been used to formalize quantitative relation between secrecy and entanglement. Namely, it has been shown that the classical secure key, is in, fact an entanglement measure denoted as $K_D$ \cite{pptkey, AugusiakH2008-multi}.
This led to upper bounds on the secure content of quantum states via {\it relative entropy of entanglement} \cite{pptkey,keyhuge,AugusiakH2008-multi} and {\it squashed entanglement} \cite{Christandl-phd,Christandl-Ekert-etal,Yang2007-multi-squash,Wilde-squash} and further generalizations for quantum channels via {\it squashed entanglement of a quantum channel}
\cite{TGW-sq-channel,TGW-sq-natcom} and relative entropy of entanglement extended to quantum channel \cite{Pirandola-key-network} (see also \cite{Takeoka-SW-bound,WTB-convers-priv, Laurenza-relative,Pirandola-review} in this context).
Recently, there has been shown a related impossibility result that one cannot achieve non-negligible amount of key \cite{BCHW2015} in the framework of quantum repeaters \cite{repeatersPRL} using some certain approximate private states (a problem of quantum {\it key-repeaters}). 

 In general the importance of the class of private states follows from the fact that any quantum key distribution protocol (including the so-called quantum device independent ones \cite{BHK,BCK}) is equivalent to a protocol whose output is a private state.
Hence, in a coherent view on quantum mechanics, the output of any quantum key distribution protocol ${\cal P}$ has form of the state (close to) the one given in 
equation (\ref{eq:private}). Having $\gamma_{ABA'B'}$ as the output of ${\cal P}$. The part $AB$, measured in computational basis, yields outcomes which are directly used for one-time-pad encryption. 

By definition a private state has then directly accessible key in the key part while
the shield system $A'B'$ has been considered as a passive resource merely assuring 
security of the key part. Intuitively, however one might consider distillation
of key from also from the shield system. The caveat though is that the latter distillation
should not leak information about the key resting in the key part.
The private states which have the property, that their whole secure content is accessible via measuring their key part in computational basis, i.e. for which $K_D(\gamma_{d_k}) = \log d_k$, where $d_k$ is the local dimension of $A$ and $B$, are called {\it irreducible}. All other private states are called {\it reducible}.

The aim of this manuscript is then twofold: first, to characterize irreducible private states, and study their properties, the second, more importantly, to provide the protocol which distills key from not irreducible (i.e. {\it reducible}) private states at higher rate than the trivial one, equal to $\log d_k$. We then naturally encounter a hard problem of whether there exist entangled, but key-undistillable states, which appears to be tightly connected to the problem of characterization of irreducible private states. We therefore also study properties of the set of key-undistillable states.

Before stating the main result, we first describe an intuition, which motivates our
approach. Let us suppose, that the state $\gamma_{ABA'B'}^{\ot n}$ has been attacked by a hacker Eve i.e. measured on $AB$ part, and Eve has copied the outcomes. The leftover state is $\hat{\gamma}_{ABA'B'}^{\otimes n}= (\sum_i |ii\>\<ii|\otimes U_i\sigma U_i^{\dagger})^{\ot n}$. Intuitively, if $\gamma_{ABA'B'}$ is irreducible, Eve has learned the whole key, as
it was located in its key part. This implies, that the total state $\hat{\gamma}_{ABA'B'}^{\ot n}$ should not have distillable key. Now, having access to
the "flags" $|ii\>\<ii|_{AB}$ Alice and Bob can sort the states $\sigma_i := U_i\sigma U_i^{\dagger}$ on systems $A'B'$ and get, by the typicality argument 
$\approx {n\over d}$ copies of state ${\tilde{\sigma}}\equiv\sigma_0\ot \cdots \ot \sigma_{d-1}$. Hence the state $\tilde{\sigma}$ must be also key-undistillable, as obtained by local operations from the key-undistillable $\hat{\gamma}_{ABA'B'}$. We therefore  arrive at the intuition
that the following statement about irreducible private states is true:
\be
\gamma_{ABA'B'}\,\mbox{ is irreducible} \Rightarrow K_D(\sigma_0\ot \cdots \ot\sigma_{d-1}) = 0.
\label{eq:simple}
\ee
Negating the RHS of the above we obtain:
\be
K_D(\sigma_0\ot \cdots \ot \sigma_{d-1})>0 \Rightarrow \gamma_{ABA'B'} \mbox{ is reducible.}
\ee
Now, as the main result, we prove a much stronger, quantitative version of the above implication. Namely that for any private state $\gamma_{ABA'B'}$, in particular a reducible one, there is:
\be
K_D(\gamma_{ABA'B'})\geq \log d + {1\over d}K_D(\sigma_0\otimes \cdots \otimes \sigma_{d-1}).
\label{eq:performance1}
\ee
We prove this lower bound by providing explicitly a protocol achieving the rate of RHS of equation~\eqref{eq:performance1}.

We then ask if the proposed protocol has an optimal rate. We prove a new upper bound on it of the form:
\be
K_D(\gamma_{ABA'B'}) \leq \log d + {1\over d}E_r^{\infty}(\sigma_0\otimes \cdots \otimes \sigma_{d-1}),
\label{eq:new-bound}
\ee
where $E_r^{\infty}$ is the regularized relative entropy of entanglement (the previous upper bound~\cite{keyhuge,karol-PhD} was in terms of $\sum_i E_r^{\infty}(\sigma_i)$ which is sometimes strictly larger than RHS). We observe however, that this bound may not be tight even for irreducible private states.
This is because there may exist {\it entangled key undistillable states}. Indeed, since 
$E_r^{\infty}(\rho) >0$ iff $\rho$ is entangled i.e. this measure is {\it faithful},
the RHS can differ from LHS in that case by arbitrary amount.

The problem of whether  entangled key undistillable states exist appears to be non-trivial. 
We do not judge if the answer is "yes", such states exist or rather "no", they do not. 
In what follows, much in the spirit of the community of computer science approaching the $P\neq NP$ problem, we check what are the possible consequences of the two answers. Namely, if the answer is "no" there are no entangled key undistillable states, then via equation (\ref{eq:performance1}) we have fully characterized the set of irreducible private states, and the bound (\ref{eq:new-bound}) is tight. If however it is not the case, plenty of questions are ready to be asked.

We study consequences of the mentioned alternative from the perspective of the upper-bounds on distillable key of a private state in terms of measures of entanglement.
If the world is such that entangled key-undistillable states {\it do not} exist, then as we noted, the only irreducible private states are strictly irreducible, which in fact satisfy $K_D(\gamma)=E_r(\gamma)$, see Figure~\ref{sets}, where $E_r$ is the relative entropy of entanglement \cite{VPRK1997}. If however, the world is reach enough, so that such states exist, there must exists a private state which is not strictly irreducible, that is outside of the set of states satisfying $K_D=E_r=E_r^{\infty}$.

Apart from studying consequences of (non)existence of entangled key undistillable states, we study the properties of the set of key-undistillable states $\mathcal{Z}$ in general.
It is known, that this set contains all {\it separable} i.e. disentangled states,
as they can be made from product states (clearly having $K_D=0$) by public discussion \cite{Curty2004}.
Separable states are far from the private states in the norm distance; for any private state, with the dimension of the shield part $d_s$, it holds that~\cite{keyhuge,karol-PhD}:
\be
\label{distSep1}
\sigma \in \mathcal{SEP}^{(d_k,d_s)}\, \Rightarrow\, ||\sigma - \gamma||_{1} \geq 1-\frac{1}{\sqrt{d_k}},
\ee
where by $\mathcal{SEP}^{(d_k,d_s)}$ we denote the set of separable states in the cut $AA':BB'$, and for any $X\in \mathbb{M}(d,\mathbb{C})$ by $||X||_1\equiv \tr \sqrt{XX^{\dagger}}$ we define its trace norm.
What we are further able to show, is an analogous statement for states from ${\cal Z}$: 
\be
\sigma \in {\cal Z} \Rightarrow\, ||\sigma - \gamma||_{1} \geq {1\over 6} + {2\over 3 \log d}.
\ee
It is much weaker than that from equation~\eqref{distSep1}, but we pay a price for generality: we do not know the structure 
of all key undistillable states, and in derivation we can use only the fact that $\sigma$ is key-undistillable.

\begin{figure}[h!]	
	\includegraphics[trim={0 10cm 0 0},width=0.55\textwidth]{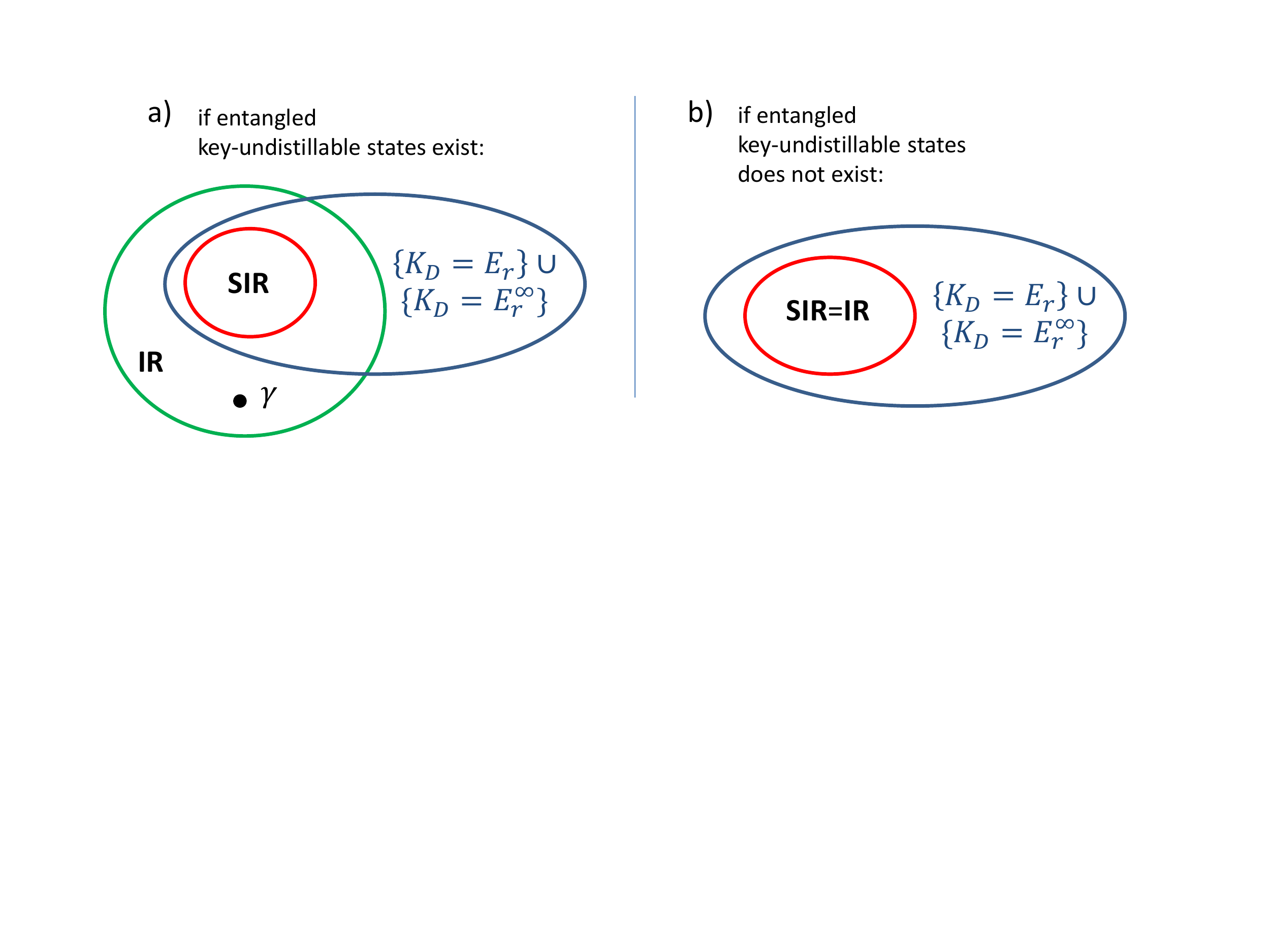}
	\caption{It is a difficult open question if entangled  key undistilable states exist. Given answer to this question we follow what we will know about irreducible private states. On panel a) we present the situation when entangled zero key states exist. Then the set of strictly irreducible private states is a proper subset of all irreducible private states. There also exists then a private state $\gamma$ which is irreducible and is beyond the set of states $\{K_D=E_r\}\cup \{K_D=E_r^{\infty}\}$. On panel b) we present the situation when entangled  key-undistillable states do not exist. This assumption implies that the set of strictly irreducible private states equals to the set of irreducible private states and as a consequence we have their full characterization in this manuscript.}
	\label{sets}
\end{figure}

Let us note here, that the class of irreducible private states has not been characterized so far, except of its one subclass  called in \cite{Ferrara} as {\it strictly irreducible} states. Strictly irreducible states \cite{KH-phd} are the private states satisfying that diagonal blocks of the matrix of a private states, $\sigma_i \equiv U_i\sigma U_i^{\dagger}$ in Eq. (\ref{eq:private}) are separable for all $i \in \{0,\ldots,d-1\}$. These states appeared
recently in the problem of quantum key-repeaters. We therefore discuss also possible further applications of our main result in this context.
Namely M. Christandl and R. Ferrara have shown in \cite{Ferrara} that strictly irreducible private states have the amount of one-way key-repeaters rate bounded from above by one-way distillable entanglement of $\gamma\otimes \gamma$. It would be interesting to extend this result to {\it all} irreducible private states, and present the answer in terms of the amount of key achieved via key-repeaters (key-repeater rate) and distillable entanglement of $\gamma\otimes \gamma$. Such result would answer the question of the role of private states in key-repeaters. The first important step in this direction, that we have done is proving that all key undistillable states are bounded away in trace norm from any private state, since the distance between separable and private states is crucial to the findings of \cite{Ferrara}. This problem seems to be hard, as there is no evidence that there even exist states with zero distillable key. 

The paper is organized as follows. Section \ref{NaD} provides definitions that are further used on. Section \ref{sec:protocol} contains the main result, which is a tighter lower bound on distillable key of private states. We present there a protocol of key distillation which uses both the key part and the shield of a private state. We also give an upper bound on its rate. Section \ref{sec:properties} is devoted
to some properties of the set of irreducible private states, including its partial characterization. In Section \ref{sec:prop-kundist} we provide a lower bound on distance of key-undistillable states from private states via the necessary condition of key-undistillability as well as show its applications to single copy quantum key repeaters.
In Section \ref{sec:approx} we study approximate irreducible states in the light of considerations of Section \ref{sec:properties}.
The last section is devoted to the discussion of the above results.


\section{Private states and irreducible private states}
\label{NaD}
In this section we introduce rigorous definitions of private states as well as irreducible and strictly irreducible private states which are crucial in further parts of this manuscript.  One example of a private state is a \textit{basic private state} $\rho_{ABA'B'}$, acting on a Hilbert space $\mathcal{C}^{d_A}\ot \mathcal{C}^{d_B}\ot \mathcal{C}^{d_{A'}}\ot \mathcal{C}^{d_{B'}}$. A basic pdit is of the form
\be
\label{basic-pdit}
\rho_{ABA'B'}\equiv P^+_{AB}\ot \sigma_{A'B'},
\ee
where $P^+_{AB}$ is a maximally entangled state between on $\mathcal{C}^{d_A}\ot \mathcal{C}^{d_B}$, and $\sigma_{A'B'}$ is an arbitrary bipartitie quantum state on $\mathcal{C}^{d_{A'}}\ot \mathcal{C}^{d_{B'}}$. Imposing now separability of $\sigma_{A'B'}$ we obtain an example of irreducibly private state. It is not hard to see that in this case we can extract the secret key only from the singlet $P^+_{AB}$. Of course both classes introduced above are of much more richer structure and below we present their rigorous definitions. Let us start from the definition of the private state (pdit) \cite{keyhuge}:
\begin{definition}[of private state]
	\label{pbit}
A state $\rho_{ABA'B'}$ on a Hilbert space $\mathcal{C}^{d_A}\ot \mathcal{C}^{d_B}\ot \mathcal{C}^{d_{A'}}\ot \mathcal{C}^{d_{B'}}$ with dimensions $d_A=d_B=d$, $d_{A'}$ and $d_{B'}$, of the form
\be
\label{pdit}
\gamma_d\equiv\frac{1}{d}\sum_{i,j=0}^{d-1}|e_if_i\>\<e_jf_j|_{AB}\ot U_i\sigma_{A'B'}U_j^{\dagger},
\ee
where the state $\sigma_{A'B'}$ is an arbitrary state of subsystem $A'B'$, $U_i'$s are arbitrary unitary transformations and $\left\lbrace|e_i\> \right\rbrace_{i=0}^{d-1}, \left\lbrace |f_j\>\right\rbrace_{j=0}^{d-1}$ are local bases on $A$ and $B$ respectively, is called the private state or pdit. In the case of $d=2$, the state in (\ref{pdit}) is called - pbit.  Further in the text we will use interchangeably the notation of the private state, i.e.  $\gamma_d,\gamma_{ABA'B'}$ or simply $\gamma$, and sometimes we will drop lower indices when there is no danger of confusion.
\end{definition}
In what follows we will set $\{|e_i\>\ot |f_j\>\}$ basis to be computational for simplicity, while all the facts hold for a general case of
arbitrary product basis of a pdit.
The part $AB$ of a pdit is called  {\it the key part}, while the subsystem $A'B'$ its {\it shield}. For any pdit $\gamma_d\in \mathcal{C}^{d}\ot \mathcal{C}^{d}\ot \mathcal{C}^{d_{A'}}\ot \mathcal{C}^{d_{B'}}$, which is secure in the standard basis by $\sigma_i$ for $i=0,\ldots,d-1$ we denote states which appear on the shield of the pdit, after obtaining an outcome $|ii\>_{AB}$ in the measurement performed in the standard basis on its key part. We call the states $\sigma_i$ the {\it conditional states}.

We now come to the central definition to all our further considerations:
\begin{definition} Private state $\gamma_{d,d'}$ is called {\it irreducible} iff $K_D(\gamma_{d,d'})=\log d$. The set of irreducible private states with local dimension of key part $d$ and total dimension of the shield part $d'$, will be denoted as $\mathcal{IR}_{d,d'}$.
\end{definition}

We will sometimes suppress the dimension subscript from notation in $\mathcal{IR}_{d,d'}$.
Having the private state $\gamma$ as in equation~\eqref{pdit}, then the state
\be
\label{key-attacked}
\hat{\gamma}:=\frac{1}{d}\sum_{i=0}^{d-1}|ii\>\<ii|\ot U_i\sigma U_i^{\dagger}
\ee
is called the \textit{key-attacked private state}. It is a private state which has been measured on its key part $AB$.
Now we are ready to introduce important subset of the $\mathcal{IR}$ is the set of strictly irreducible private states $\mathcal{SIR}$. Denoting by $\mathcal{SEP}$ set of bipartite separable states we have the following:
\begin{definition}\cite{Ferrara}
We say a private state $\gamma$ is strictly irreducible if the state
	$\hat{\gamma}=\frac{1}{d}\sum_i |ii\>\<ii| \ot U_i\sigma U_i^{\dagger}$
	 if consists of  separable states with respect to the systems on the shield part on the diagonal, i.e. $U_i\sigma U_i \in \mathcal{SEP}$. We denote such states as $\<\gamma \>$ or $\gamma^{\<d\>}$.
\end{definition}
One of the most natural example of strictly irreducible private state is when we chose in the above definition $\sigma=\mathbf{1}$, where $\mathbf{1}$ is an identity operator on the shield part of $\gamma$. Next,  a less trivial example of state within the set $\mathcal{SIR}$ is the "flower state" $\gamma_{flower}$~\cite{karol-PhD}, which was shown~\cite{2005PhRvL..94t0501H} to lock entanglement cost. We have that $\gamma_{flower}\in \mathcal{C}^{2}\ot \mathcal{C}^{2} \ot \mathcal{C}^{d^2}\ot \mathcal{C}^{d^2}$ is of the form:
\be
\label{flower}
\gamma_{flower}\equiv \frac{1}{2}\begin{pmatrix}
	\sigma & 0 & 0 & \frac{1}{d}U^T\\
	0 & 0 & 0 & 0\\
	0 & 0 & 0 & 0\\
	\frac{1}{d}U^* & 0 & 0 & \sigma
\end{pmatrix},
\ee
where $\sigma=\frac{1}{d}\sum_{i=0}^{d-1}|ii\>\<ii|$ is maximally correlated state, and $U=\sum_{i,j=0}^{d-1}w_{ij}|ii\>\<jj|$ is the embedding of unitary transformation $W=\sum_{i,j=0}^{d-1}w_{ij}|i\>\<j|=H^{\ot \operatorname{log d}}$ with $H$ being Hadamard transformation.

\section{The main result: Protocol of distilling key from private states}
\label{sec:protocol}
Here, we will present the protocol which enables the characterization of irreducible private states. As announced, it can distill the key not only from the key part but also from the shield of a private state in the case where $\sigma_0\ot \cdots \ot\sigma_{d-1}$ is key distillable. We first show its rate, which is the main result of this section and the core of our further reasoning in this manuscript. Here we state the result for bipartite private states, however analogues result
holds as well for the case of multipartite private states \cite{AugusiakH2008-multi}, as we show in Appendix \ref{sec:c}.

The distillable key can be lower bounded as follows:
\begin{theorem}
\label{th:main} For a private state $\gamma_d$, there is
\be
\label{rate-thm}
K_D(\gamma_d) \geq \log d + \frac{1}{d}K_D(\sigma_0 \ot \sigma_1 \ot \cdots \ot \sigma_{d-1}).
\ee
\end{theorem}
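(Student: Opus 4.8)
The plan is to prove the bound by exhibiting a concrete protocol that, run on many copies $\gamma_d^{\otimes n}$, produces roughly $n(\log d + \tfrac1d K_D(\tilde\sigma))$ bits of secure key, where $\tilde\sigma \equiv \sigma_0\otimes\cdots\otimes\sigma_{d-1}$. The protocol has two stages acting on disjoint ``uses'' of the resource. In the first stage Alice and Bob simply measure the key part $AB$ of each copy in the computational basis; by the defining structure~\eqref{pdit} this yields a perfectly correlated, uniformly random dit per copy that is decoupled from Eve (this is the standard ``pdit gives $\log d$ bits'' fact from~\cite{keyhuge}), giving the $\log d$ term. The second, nontrivial stage realizes the $\tfrac1d K_D(\tilde\sigma)$ term: here one must coherently process the shield systems so that the key already extracted is not compromised.

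The key steps, in order, are: (i) Fix the copies to be used for shield distillation. On each, measure $AB$ in the computational basis to obtain outcome $i$ with probability $1/d$; the post-measurement shield state is $\sigma_i = U_i\sigma U_i^\dagger$. Crucially the outcomes are announced/known, so Alice and Bob share the ``flag'' $|ii\rangle\langle ii|$. (ii) By the law of large numbers, among $N$ such copies each value $i\in\{0,\dots,d-1\}$ occurs $\approx N/d$ times with the block of indices being typical; after sorting the shields according to the flags, Alice and Bob hold $\approx N/d$ systems in state $\sigma_0$, $\approx N/d$ in state $\sigma_1$, and so on. Grouping one representative of each type gives $\approx N/d$ approximate copies of $\tilde\sigma = \sigma_0\otimes\cdots\otimes\sigma_{d-1}$ (this is the typicality/sorting argument sketched in the introduction). (iii) Apply the optimal key-distillation protocol for $\tilde\sigma$ to these $\approx N/d$ copies, extracting $\approx (N/d)K_D(\tilde\sigma)$ secure bits. (iv) Check that the key from stage~1 and the key from stage~(iii) are jointly secure against Eve: since they are produced from disjoint copies of $\gamma_d$ and all communication used in stage two concerns only shield systems and flags (which, conditioned on Eve's system, carry the appropriate correlations of a private state), the total state of Eve is (close to) product with the uniform distribution over the concatenated key string. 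Dividing the total key count $n\log d + (n/d)K_D(\tilde\sigma)$ by $n$ and taking $n\to\infty$ gives~\eqref{rate-thm}. For the multipartite extension one repeats the same argument with the multipartite notion of $K_D$ and private state, deferred to Appendix~\ref{sec:c}.

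The main obstacle I expect is step~(iv): making precise that running an LOCC-with-public-communication key-distillation protocol on the \emph{shields}, after the flags have been publicly revealed, does not leak anything about the $\log d$ bits already harvested in stage~1, \emph{and} that the key distilled from the shields is itself secure against an Eve who holds the purification of the whole $\gamma_d^{\otimes n}$ (not merely the purification of $\tilde\sigma^{\otimes N/d}$). The subtlety is that publicly announcing the measurement outcomes $i$ on the key parts of the stage-two copies is exactly the ``attack'' that turns $\gamma_d$ into $\hat\gamma$; one must argue that the reference output of a key-distillation protocol applied to the shields of $\hat\gamma^{\otimes}$ is still $\varepsilon$-close to an ideal private (or ccq) state with Eve decoupled. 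This should follow from composability of the security definition of $K_D$ together with the fact that the residual correlations between the announced flags and Eve are precisely those of a private state's key part — but it needs care, and a clean way to handle it is to phrase the whole two-stage procedure as a single LOCC map taking $\gamma_d^{\otimes n}$ to a state close to $\gamma_{d^{n}}\otimes\gamma'$ where $\gamma'$ is the private state distilled from $\tilde\sigma^{\otimes N/d}$, and then invoke that a tensor product of private states is a private state of the appropriate dimension. A secondary, more technical point is controlling the approximation errors (typicality deviation in the sorting step, the $\varepsilon$ in the definition of $K_D(\tilde\sigma)$, and finite-block corrections) and showing they all vanish in the regularized limit, which is routine but must be stated.
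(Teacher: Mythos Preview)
There is a genuine gap. Your accounting at the end, $n\log d + (n/d)K_D(\tilde\sigma)$ divided by $n$, requires that the \emph{same} $n$ copies contribute both the $\log d$ term (from their key parts) and the $\tfrac1d K_D(\tilde\sigma)$ term (from their shields). But in step~(iv) you justify joint security by saying the two keys ``are produced from disjoint copies of $\gamma_d$''. These are incompatible: with truly disjoint blocks of sizes $n_1,n_2$ you would obtain rate $\frac{n_1}{n}\log d + \frac{n_2}{nd}K_D(\tilde\sigma)$, a convex-type combination strictly below the claimed bound for every split. So the architecture you describe cannot reach \eqref{rate-thm}.

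If instead you use the same copies for both terms (as the arithmetic demands), the situation is more delicate than you indicate. You write in step~(i) that ``the outcomes are announced''; announcing the key-part outcomes on the very copies whose key you want to keep would hand the stage-one key to Eve outright. The sorting can be done \emph{locally} (both parties know the outcome), but even then something leaks: after sorting and truncating to a fixed number of copies of $\tilde\sigma$, what remains on the discarded shield registers --- and hence what Eve can infer --- depends on the \emph{type} (the histogram $(t_0,\dots,t_{d-1})$) of the key string. Thus the stage-one key is no longer perfectly uniform to Eve; she has learned roughly $\log\binom{m+d-1}{m}\sim (d-1)\log m$ bits about an $m$-dit string. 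This leak is sublinear and therefore harmless asymptotically, but it must be removed by privacy amplification. The paper does exactly this: it runs the Devetak--Winter protocol on the joint ccq state at the end and computes $I(AA'{:}BB')-I(AA'{:}E'')$, showing the type term is $o(m)$. Your proposed clean fix --- that the output is close to $\gamma_{d^{n}}\otimes\gamma'$ --- fails as stated, because after measurement the $AB$ systems carry the key-attacked state $\hat\gamma^{\otimes n}$, which is diagonal and not $\epsilon$-close to any private state; the coherence that makes a pdit a pdit is gone. The missing ingredient is precisely a final information-reconciliation/privacy-amplification step (Devetak--Winter or equivalent) together with the entropy estimate bounding Eve's gain by the number of types.
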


We achieve the above result via the following protocol defined by parameters $m$ and $k$. The input to the protocol $P(m,k)$ is $\gamma^{\ot (m\times k)}$.  This protocol depends on a threshold $\delta(m)$ which approaches $0$ with increasing $m$. We fix the value of this threshold later, when study the rate of this protocol.
\begin{enumerate}
\item For each of $k$ blocks of $m$ states Alice and Bob repeat items $2$-$6$ as follows:
\item Alice and Bob measure their subsystems of the key part $A$ and $B$ in the secure basis of $\gamma$.
\label{item:measurement}
\item If number $t_s(m)$ of any symbol $s\in \{0,...,d-1\}$ in the key $|i_0\>\ot \cdots \ot|i_{m}\>$ with $i_j\in\{0,...,d-1\}$ is below the threshold $\delta(m)$, i.e. $|{m\over d} - t_s(m)|>\delta(m)$, they trace out the original state, produce the error state $|e\>_{AA'}\ot|e\>_{BB'}$. They then proceed with step $2$ on the next block if such is left, or go directly to step $7$ otherwise. If however $|{m\over d} -t_s(m)|\leq  \delta(m)$ for all $s$, they proceed with the steps $4-6$ of the protocol.
\label{item:ab}
\item Alice and Bob perform each control-permutation operation: depending on the value of the key $|i_0\>\ot\cdots\ot|i_{m}\>$ with $i_j\in\{0,\ldots,d-1\}$
they permute states on the systems $A'B'$ from its original form $\sigma_{i_0}\ot \cdots\ot\sigma_{i_{m}}$ to $\sigma_0^{\ot t_0(m)}\ot \cdots\ot\sigma_{d-1}^{\ot  t_{d-1}(m)}$.
\label{item:permute}
\item If necessary Alice and Bob trace out some of the states $\sigma_0,\ldots,\sigma_d$ leaving only $t_{min}(m)=\lfloor \frac{m}{d} - \delta(m)\rfloor$ of copies of the state $\widetilde{\sigma} \equiv \sigma_0\ot \cdots \ot\sigma_{d-1}$.
\label{item:trace}
\item As a subprotocol, Alice and Bob perform on systems $A'B'$ any optimal key distillation protocol of the state $\widetilde{\sigma}$, acting on $\widetilde{\sigma}^{\ot t_{min}(m)}$, with output close in the trace norm
to the state $\gamma_{A'B'A''B''}$. The state on systems $AB$ from the item \ref{item:ab1} and state on systems $A'B'A''B''$ form one of the $k$ input states for the final step $7$ of the protocol.
\item On $k$ output states of the $k$ repetitions of items $2$-$6$ (or items $2-3$ if unsuccessful in item $3$ respectively) Alice and Bob perform the Devetak-Winter protocol \cite{DevetakWinter-hash}.

\end{enumerate}

\begin{proof}[Sketch of the proof]
The detailed proof of Theorem \ref{th:main} is given in Appendix~\ref{app:A}.We also refer reader to Figure~\ref{Protocol} on which we present the performance of our protocol  for the finite and fixed number of input states. We provide here briefly the idea.  First let us note, that if the state $\widetilde{\sigma} =\sigma_0\ot \cdots \ot\sigma_{d-1}$ is key-undistillable, then there is obvious protocol of distilling $\log d $ of key from $\gamma_d$ : just via measuring its
key part. Hence, in case $K_D(\widetilde{\sigma})=0$, our bound is trivially satisfied.  Thus, in what follows we will assume that $K_D(\widetilde{\sigma}) > 0$.
It is intuitive that given the key from key part has been used, there are
remaining states on the shield part.  The major obstacle is that these two parts are correlated. We overcome this problem in steps $2-4$.

 In the first step (item \ref{item:measurement}) the parties get to know the value of the key on key part. The next step (item \ref{item:ab}) shows that our protocol is of the kind "all versus noting": if the key has atypical value (having too small number of some of the symbols $0,\ldots,d-1$), the parties trace out the whole state, create the error flags and proceed with step $\ref{item:measurement}$ for the next of $k$ blocks. This however occurs extremely rarely due to concentration property of independent, identically distributed random variables. Hence with probability almost one, the key value is typical, and Alice and Bob proceed with the main step of the protocol (item \ref{item:permute}).
	
 In the fourth step they decouple the key part and the shield by sorting conditional states on the shield part: all states labelled with $0$ together, up to $d-1$. The only correlations which survive between the system $AB$ and $A'B'$ are due to the type that is how many symbols $0$'s, $1$'s up to $d-1$'s are in the value of the key on system $AB$.

 The next step (item \ref{item:trace}) is provided for technical reasons, as it allows for easy calculation of the rate of the protocol. In the pre-last step Alice and Bob perform
a key distillation protocol on $t_{min}(m)$ copies of the state $\widetilde{\sigma}=\sigma_0\ot \cdots \ot\sigma_{d-1}$ which, by definition, ends up with a state close to some private state $\gamma_{A'B'A''B''}$. The state on $ABA'B'A''B''$ has now the
perfect key on $A'B'$ and the partially secure key on $AB$ about which Eve knows the type that is a number of symbols in the string of the value of this key.
Fortunately, this knowledge has very low entropy, sub-linear in $m$, and will not affect the total rate of the protocol.

 After repeating steps $2-6$ $k$ times (or items $2-3$ if unsuccessful in item $3$ respectively), in the last step Alice and Bob perform the Devetak-Winter protocol on the subsystems $AA'BB'$ of $k$ such obtained states  \cite{DevetakWinter-hash}. It then remains to check that the rate of the Devetak-Winter protocol reads in this case approximately $r(m,k)=k(m \log d - (m+d-1)\,h(\frac{m}{m+d-1}) + t_{min}(m)K_D(\widetilde{\sigma})-o(m))- o(k)$. The first term corresponds to $m\log d$ bits of initial ideal key on $AB$
however it is lowered by the second term $(m+d-1)\,h(\frac{m}{m+d-1})$. The latter corresponds to Eve's knowledge of the type. Indeed, there are no more than
\be
{m+d-1\choose m} \leq 2^{(m+d-1)h\left( \frac{m}{m+d-1}\right) }
\ee
of different types of strings of length $m$ of symbols from $d$-ary alphabet \cite{Renner-Phd}. The third term $t_{min}(m)K_D(\widetilde{\sigma})$ is the rate
of the protocol of key distillation from $t_{min}(m)$ copies of state $\widetilde{\sigma}$.
Hence, choosing properly value of $\delta(m)$  approaching $0$ with $m$ going to infinity (see Appendix~\ref{app:A}), we obtain that the rate of our protocol which is $\frac{1}{m \, k} r(m,k)$ reaches asymptotically (taking $m,k\rightarrow \infty$) value $\log d + {1\over d} K_D(\sigma_0\ot \cdots \ot\sigma_{d-1})$, as claimed.
\end{proof}

\begin{figure}[h!]	
		\includegraphics[width=0.5\textwidth]{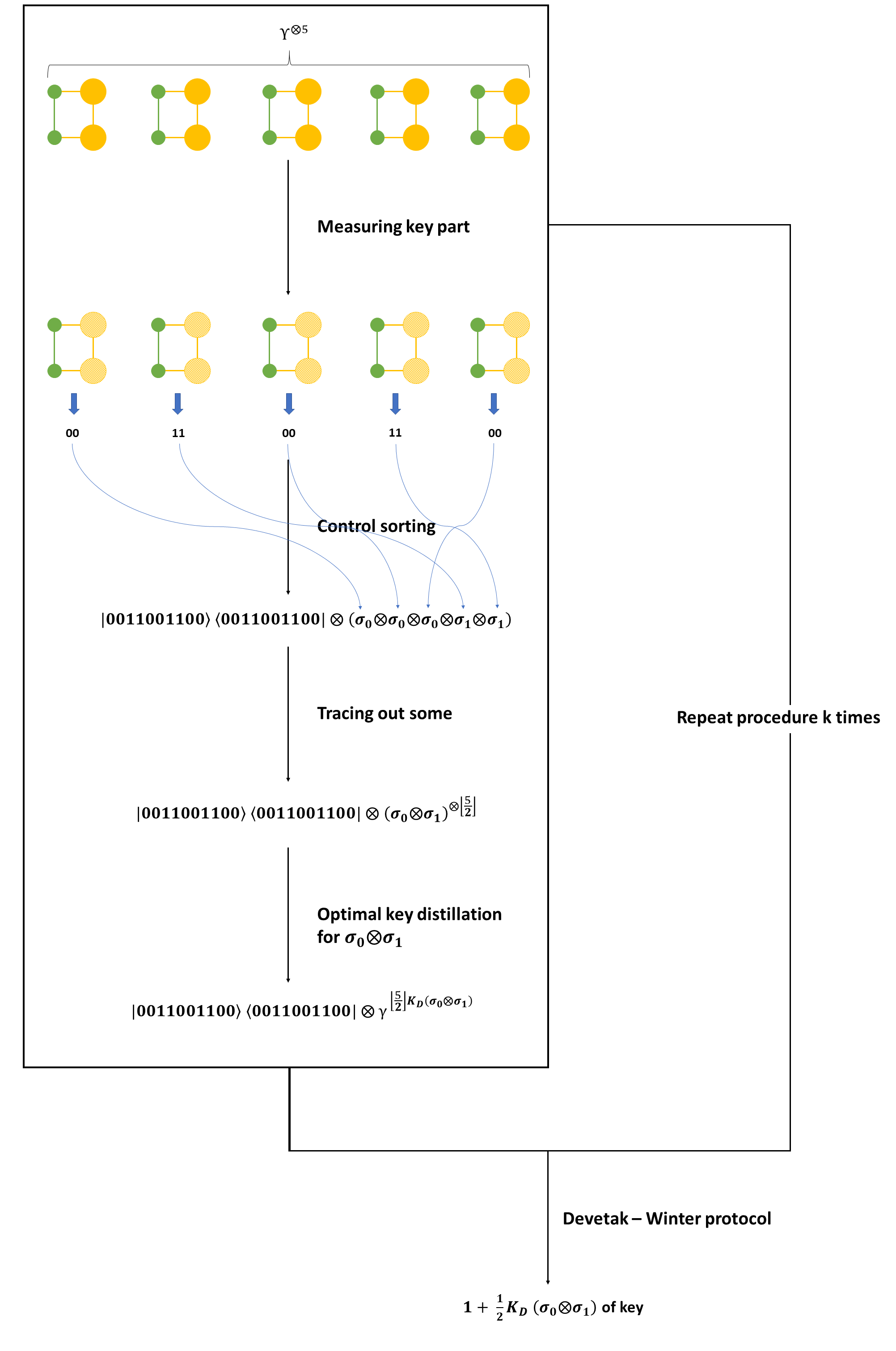}
		\caption{On this figure we present performance of the protocol from Theorem~\ref{th:main} for $n=5$ copies of the private pbit $\gamma$. Single private state $\gamma$ is represented as a collection of four dots. By smaller green dots we denote qubits forming the key part of pbit and by larger yellow its respective shield part. At the beginning Alice and Bob measure their subsystems of the key part $A$ and $B$ in the secure basis of $\gamma$ obtaining respective bit strings $00,11,00,11,00$. Assuming that parties can proceed (3rd point from Theorem~\ref{th:main} is not fulfilled.), they perform control sorting depending on the value of the key by permuting states on the parts $A'B'$ obtaining $\sigma_0\ot \sigma_0\ot \sigma_0 \ot \sigma_1\ot \sigma_1$, and eventually tracing out some of their subsystems. Having that, parties perform on systems $A'B'$ any optimal key distillation protocol. Finally, after $k$ repetitions of items 2-6 from the protocol, Alice and Bob perform the Devetak-Winter protocol on systems $AA'BB'$ obtaining $1+\frac{1}{2}K_D(\sigma_0\ot \sigma_1)$ of key.}
		\label{Protocol}
\end{figure}

From  theorem~\ref{th:main} we have immediate important corollary  which is in fact a necessary condition for irreducibility:

\begin{corollary}
For any irreducible private state $\gamma$ of the form
\be
\frac{1}{d} \sum_{ij} |ii\>\<jj| \ot U_i \sigma U_j
\ee
there is
\be
K_D(\sigma_0\otimes \cdots \otimes \sigma_{d-1})=0.
\ee
In particular we have
\be
\forall i \in \{0,\ldots,d-1\} \quad K_D(\sigma_i) = 0,
\ee
where $\sigma_i = U_i \sigma U_i^{\dagger}$. 
\end{corollary}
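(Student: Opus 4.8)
The plan is to derive the corollary directly from Theorem~\ref{th:main} by a simple contrapositive argument, together with the monotonicity of $K_D$ under local operations and classical communication (LOCC). First I would recall that by definition an irreducible private state $\gamma_d$ satisfies $K_D(\gamma_d) = \log d$. Plugging this into the bound~\eqref{rate-thm} gives
\be
\log d \;\geq\; \log d + \frac{1}{d}\,K_D(\sigma_0\otimes\cdots\otimes\sigma_{d-1}),
\ee
hence $K_D(\sigma_0\otimes\cdots\otimes\sigma_{d-1})\leq 0$. Since $K_D$ is nonnegative, this forces $K_D(\sigma_0\otimes\cdots\otimes\sigma_{d-1})=0$, which is the first claim.

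For the ``in particular'' part, I would invoke monotonicity of distillable key under LOCC: for any fixed $i$, tracing out all the tensor factors except the $i$-th is a (local) quantum operation, so $K_D(\sigma_i) \leq K_D(\sigma_0\otimes\cdots\otimes\sigma_{d-1}) = 0$. Again nonnegativity of $K_D$ yields $K_D(\sigma_i)=0$ for every $i\in\{0,\ldots,d-1\}$. One should be slightly careful here about the bipartite cut: the $\sigma_j$ live on $A'B'$ and the product $\sigma_0\otimes\cdots\otimes\sigma_{d-1}$ is taken with respect to the same $A':B'$ cut on each factor, so discarding factors is genuinely a local operation and the monotonicity applies without subtlety. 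Alternatively, superadditivity-type reasoning is not even needed: discarding systems can only decrease $K_D$.

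Honestly, there is no real obstacle here — the statement is an immediate corollary once Theorem~\ref{th:main} is in hand, and the only ``work'' is citing the two standard facts that $K_D\geq 0$ and that $K_D$ is monotone (non-increasing) under LOCC, in particular under discarding subsystems. The substance of the argument is entirely contained in the protocol proving Theorem~\ref{th:main}; the corollary merely reads off its consequence when the left-hand side is pinned to its minimal value $\log d$. If one wanted to state it even more carefully, one could remark that the conclusion $K_D(\sigma_i)=0$ does \emph{not} imply the $\sigma_i$ are separable (that would be the strictly-irreducible case), and that this gap is precisely the open question about entangled key-undistillable states discussed above; but that is commentary rather than part of the proof.
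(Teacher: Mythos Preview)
Your proof is correct and follows essentially the same reasoning as the paper: plug the irreducibility condition $K_D(\gamma_d)=\log d$ into Theorem~\ref{th:main} to force $K_D(\sigma_0\otimes\cdots\otimes\sigma_{d-1})=0$, then use monotonicity of $K_D$ under LOCC (discarding all but the $i$-th factor) to conclude $K_D(\sigma_i)=0$ for each $i$. The paper treats this as an immediate corollary and, where it spells out the argument (in the proof of Proposition~\ref{prop:charac}), uses exactly the same two steps.
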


We note here that our protocol needs for a pdit $\gamma_d$ by definition a huge number of its copies: $ks$ for some parameter $s$, $s$ and $k$ increasing asymptotically. It is really huge, in comparison with the fact that one can get $\log d$ of key from a {\it single} copy of $\gamma_d$ by measuring its key part. However, it shows what can be done in general, asymptotically, which allows to study the distillable key of reducible private states: so that it works, this number is larger than $d^{3/2}\log d'$.

\vspace{0.5cm}
It is tempting to ask if the protocol that we proposed is optimal. In what follows we will improve
the bounds on relative entropy of entanglement for private states which will show to what extend
performance of our protocol is not tight. So far the following bound was known \cite{keyhuge,karol-PhD}:
\be
K_D(\gamma) \leq E_r^{\infty}(\gamma) \leq \log d + {1\over d}\sum_i E^{\infty}_r(\sigma_i).
\ee
In the next theorem we present a bit tighter bound. The fact that it is tighter comes from
the subadditivity of $E_r$. The proof of it is directly inspired by the protocol from the above section.

\begin{theorem}
For any pdit $\gamma_{d,d'} $ we have
\be
\label{Er1}
E_r^{\infty}\left(\gamma_{ABA'B'}\right) \leq \operatorname{log}d+\frac{1}{d}E_r^{\infty}\left(\widetilde{\sigma}\right),
\ee
where $\widetilde{\sigma}=\sigma_0\ot \sigma_1\ot \cdots \ot \sigma_{d-1}$ and $\sigma_i$ are conditional states on $A'B'$ for $i\in \left\lbrace 0,1,\ldots,d-1\right\rbrace $.
\label{thm:upper-bound-er}
\end{theorem}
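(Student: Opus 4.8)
The plan is to upper bound $E_r^\infty(\gamma_{ABA'B'})$ by constructing, for each $n$, a separable state that is close to $\gamma^{\otimes n}$ in the sense required to control relative entropy, mirroring step by step the structure of the distillation protocol of Theorem~\ref{th:main} but run "in reverse" at the level of relative entropy. Concretely, I would first recall the standard facts: $E_r$ is monotone under LOCC (in particular under the control-permutation and partial-trace operations used in the protocol), it is subadditive, hence the regularization $E_r^\infty(\rho)=\lim_n \frac1n E_r(\rho^{\otimes n})$ exists; and for a state of the form $\frac1d\sum_i |ii\rangle\langle ii|\otimes \tau_i$ with a classical flag, the relative entropy of entanglement is bounded by $\log d$ plus the average $\frac1d\sum_i E_r(\tau_i)$ (measuring the flag can only decrease $E_r$, and then one optimizes the closest separable state blockwise). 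This last "flag lemma" is essentially the content of the previously known bound, so I would isolate it as the workhorse.

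Next, the key point is to improve $\frac1d\sum_i E_r(\sigma_i)$ to $\frac1d E_r(\sigma_0\otimes\cdots\otimes\sigma_{d-1})$, and this is where the permutation-sorting idea from the protocol enters. I would apply the flag lemma not to $\gamma$ but to $\gamma^{\otimes n}$: after measuring the $n$ key parts one gets flags labelled by strings $\mathbf{i}\in\{0,\dots,d-1\}^n$ and conditional shield states $\sigma_{i_1}\otimes\cdots\otimes\sigma_{i_n}$. By a control permutation on the shield (an LOCC map, so $E_r$ only drops) these can be sorted according to type into $\sigma_0^{\otimes t_0}\otimes\cdots\otimes\sigma_{d-1}^{\otimes t_{d-1}}$; restricting to the typical type $t_s\approx n/d$ and tracing the rest (again LOCC) leaves roughly $(n/d)$ copies of $\widetilde\sigma=\sigma_0\otimes\cdots\otimes\sigma_{d-1}$. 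Counting types and using that there are at most $2^{o(n)}$ of them, the flag carries entropy $o(n)$, so the flag lemma gives $E_r(\gamma^{\otimes n})\le n\log d + (n/d)\,E_r(\widetilde\sigma^{\otimes n/d}) + o(n)$. Dividing by $n$ and sending $n\to\infty$ yields $E_r^\infty(\gamma)\le \log d + \frac1d E_r^\infty(\widetilde\sigma)$, which is exactly \eqref{Er1}.

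I expect the main obstacle to be technical rather than conceptual: one must be careful that the "sorting then flag" manipulation is genuinely an LOCC (or at least separability-preserving) operation applied to $\gamma^{\otimes n}$ and that the closest-separable-state argument still goes through when the flag is only approximately a classical register. In particular, the states $\sigma_i$ are not orthogonal, so after measuring the key parts of $\gamma^{\otimes n}$ one does not get a clean classically-flagged state but one needs to argue that measuring the key part of a private state $\gamma$ does produce the key-attacked state $\hat\gamma$ with a genuinely classical register $\sum_i|ii\rangle\langle ii|\otimes\sigma_i$ — this is a standard property of private states (the key part is perfectly correlated and in the secure basis), so it is fine, but it must be invoked. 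The other delicate point is the convergence of the regularization: one should make sure that the $o(n)$ terms (from type counting and from any smoothing) really are sublinear uniformly, so that they vanish after dividing by $n$; choosing the typicality threshold $\delta(m)$ as in Appendix~\ref{app:A} handles this. Finally, I would remark, as the authors do, that this bound is tighter than $\log d+\frac1d\sum_i E_r^\infty(\sigma_i)$ precisely by subadditivity of $E_r^\infty$, and note that it need not be tight when entangled key-undistillable states exist, since then $E_r^\infty(\widetilde\sigma)>0$ while $K_D(\widetilde\sigma)$ may be $0$.
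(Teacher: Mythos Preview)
Your overall plan coincides with the paper's proof: apply the block bound $E_r(\gamma^{\otimes n})\le n\log d+\frac{1}{d^n}\sum_{\mathbf i}E_r(\sigma_{i_1}\ot\cdots\ot\sigma_{i_n})$ (your ``flag lemma'', which is precisely Theorem~3 of \cite{keyhuge} applied to $\gamma^{\otimes n}$), observe that by permutation invariance each summand depends only on the type, restrict to strongly typical types, and extract $E_r(\widetilde\sigma^{\otimes m})$ with $m\approx n/d$.

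There is, however, a genuine direction error at the step ``tracing the rest (again LOCC) leaves roughly $(n/d)$ copies of $\widetilde\sigma$''. You are trying to \emph{upper} bound each $E_r(\sigma_0^{\ot t_0}\ot\cdots\ot\sigma_{d-1}^{\ot t_{d-1}})$, but partial trace is LOCC and hence gives only
\[
E_r\big(\widetilde\sigma^{\ot m}\big)\;\le\;E_r\big(\sigma_0^{\ot t_0}\ot\cdots\ot\sigma_{d-1}^{\ot t_{d-1}}\big),
\]
which is the wrong inequality for your purpose. (Equivalently: applying LOCC to $\gamma^{\otimes n}$ yields lower bounds on $E_r(\gamma^{\otimes n})$, not upper bounds.) The paper fixes this by using \emph{subadditivity} rather than monotonicity at this point: for a typical type with $m=\min_i t_i$ and $\widetilde m_i=t_i-m$,
\[
E_r\big(\sigma_0^{\ot t_0}\ot\cdots\ot\sigma_{d-1}^{\ot t_{d-1}}\big)\;\le\;E_r\big(\widetilde\sigma^{\ot m}\big)\;+\;\sum_{i=0}^{d-1}E_r\big(\sigma_i^{\ot \widetilde m_i}\big),
\]
and then bounds the tail by $E_r(\sigma_i^{\ot \widetilde m_i})\le \widetilde m_i\log d_i$ together with $\widetilde m_i/n<2\delta$, so the tail is $O(\delta n)$ and vanishes after dividing by $n$ when $\delta\to 0$. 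With this correction your outline becomes exactly the paper's argument.

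Two smaller remarks. First, the claim ``the flag carries entropy $o(n)$'' is not what drives the bound here: the $n\log d$ already comes from the key part via the block flag lemma, and the number of types plays no role in the upper bound on $\frac{1}{d^n}\sum_{\mathbf i}E_r(\sigma_{\mathbf i})$ (it does matter in the distillation Theorem~\ref{th:main}, where Eve learns the type, but not in this relative-entropy estimate). Second, your displayed inequality should read $E_r(\gamma^{\otimes n})\le n\log d + E_r(\widetilde\sigma^{\ot n/d})+o(n)$, without the extra factor $n/d$; after dividing by $n$ this produces the correct $\frac{1}{d}E_r^\infty(\widetilde\sigma)$.
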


\section{Properties of the set of irreducible private states}
\label{sec:properties}
Recall that by $\mathcal{IR}_{d,d'}$ we denote the set of irreducible private states with the dimension of key $d \ot d$
and that of shield $d' \ot d'$. Not to overload notation, we will omit the dimensions writing $\mathcal{IR}$ if some fact holds for any fixed dimensions or the dimensions are known from the context. It is natural to imagine that the key obtained from two different states from $\mathcal{IR}$
is no greater than the sum of the keys from each of them. In what follows we will prove this for certain subsets of $\mathcal{IR}$. 

We are now going to show an important example where the above observation applies,  firstly by formulating the following:
\begin{observation}
	The following sets are closed under the tensor product.
\ben
\mathcal{R} \equiv \{ \rho : K_D(\rho) = E_r(\rho)\}, \\
\mathcal{R}_{\infty} \equiv \{\rho : K_D(\rho) = E_r^{\infty}(\rho)\}, \\
\mathcal{I} \equiv \{ \rho : K_D(\rho) = I_{sq}(\rho)\},
\een
where $E_r$ is the relative entropy of entanglement, $K_D$ is the distillable key, and $I_{sq}$ is the squashed entanglement measure.
\label{closed:sets}
\end{observation}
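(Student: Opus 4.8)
The plan is to prove each of the three set inclusions by the same template: show that for a measure $E \in \{E_r, E_r^\infty, I_{sq}\}$ we have the two-sided sandwich
\be
K_D(\rho\otimes\sigma) \;\le\; E(\rho\otimes\sigma) \;\le\; E(\rho)+E(\sigma) \;=\; K_D(\rho)+K_D(\sigma) \;\le\; K_D(\rho\otimes\sigma),
\ee
from which equality propagates through the chain and closure under tensor product follows. The first inequality is the standard fact that each of these entanglement measures is an upper bound on distillable key (for $E_r$ and $E_r^\infty$ this is in \cite{keyhuge}; for $I_{sq}$ this is the squashed-entanglement key bound \cite{Christandl-Ekert-etal,Wilde-squash}). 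The last inequality is superadditivity of $K_D$ under tensor product, which holds because one can run independent optimal key-distillation protocols on the two factors. So the only work is the middle inequality, and it is different for the three cases.

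For $\mathcal{R}_\infty$ the middle step is immediate: $E_r^\infty$ is by definition the regularization $\lim_n \tfrac1n E_r(\tau^{\otimes n})$, and a standard Fekete-type argument shows $E_r^\infty$ is subadditive on tensor products, $E_r^\infty(\rho\otimes\sigma)\le E_r^\infty(\rho)+E_r^\infty(\sigma)$ — indeed one computes the regularization of $(\rho\otimes\sigma)^{\otimes n}=\rho^{\otimes n}\otimes\sigma^{\otimes n}$ using subadditivity of $E_r$ itself. For $\mathcal{I}$, the middle step is subadditivity of squashed entanglement, $I_{sq}(\rho\otimes\sigma)\le I_{sq}(\rho)+I_{sq}(\sigma)$, which is a known property (obtained by extending with the tensor product of individual extensions) — in fact squashed entanglement is even additive, but subadditivity is all we need. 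For $\mathcal{R}$, however, $E_r$ itself is subadditive, $E_r(\rho\otimes\sigma)\le E_r(\rho)+E_r(\sigma)$: take the closest separable states $\rho_s,\sigma_s$ to $\rho,\sigma$ respectively and use $\rho_s\otimes\sigma_s$ as a (not necessarily optimal) separable ansatz for $\rho\otimes\sigma$, together with additivity of relative entropy on tensor products, $S(\rho\otimes\sigma\,\|\,\rho_s\otimes\sigma_s)=S(\rho\|\rho_s)+S(\sigma\|\sigma_s)$.

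The main subtlety — the step I would be most careful about — is the $\mathcal{R}$ case, because there equality $K_D=E_r$ for the two factors does not obviously survive unless the sandwich closes, and it does close only because we also invoke $K_D\le E_r$ on the product: we need $E_r(\rho\otimes\sigma)\le E_r(\rho)+E_r(\sigma)=K_D(\rho)+K_D(\sigma)\le K_D(\rho\otimes\sigma)\le E_r(\rho\otimes\sigma)$, forcing all inequalities to equalities, in particular $K_D(\rho\otimes\sigma)=E_r(\rho\otimes\sigma)$. So the argument is genuinely two-sided and the upper bound $K_D\le E_r$ must be used for the product state, not just for the factors. One should also note in passing that superadditivity of $K_D$ is strict enough for our purposes: we only need $K_D(\rho\otimes\sigma)\ge K_D(\rho)+K_D(\sigma)$, which is the easy direction. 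With these pieces in place each of the three sets is closed under $\otimes$, completing the proof.
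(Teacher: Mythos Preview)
Your proof is correct and follows essentially the same route as the paper's: the paper's proof simply notes that $K_D$ is superadditive, $E_r$ is subadditive, $I_{sq}$ is additive, and the latter measures upper bound $K_D$, which is exactly the sandwich argument you spell out in detail. Your writeup is more explicit (in particular you work out the $E_r^{\infty}$ case separately and highlight that the bound $K_D\le E$ must be applied to the product state), but the underlying logic is identical.
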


Because the tensor product of two private states is again a private state (up to a local unitary transformation \cite{Ferrara}), by the above
observation as an immediate consequence we have that $\mathcal{R}_{ir}$, $\mathcal{R}_{ir,\infty}$, and $\mathcal{I}_{ir}$ which are intersections of the set of irreducibly private states $\mathcal{IR}$ with respective sets from Observation~\ref{closed:sets}, are closed under the tensor product.

Moreover, the above sets are obviously non-empty, and have a non-empty intersections. Indeed, let us fix dimensions and consider:
$\mathcal{R}_{ir}(d,d') = \mathcal{IR}_{d,d'}\cap \mathcal{R}({d,d'})$, $\mathcal{R}_{ir,\infty} = \mathcal{IR}_{d,d'}\cap \mathcal{R}_{\infty}({d,d'})$ and $\mathcal{I}_{ir}(d,d') = \mathcal{IR}_{d,d'}\cap \mathcal{I}({d,d'})$.
\be
|\psi_+\>\otimes \sigma_d' \in \mathcal{R}_{ir}(d,d') \cap \mathcal{R}_{ir,\infty}(d,d') \cap \mathcal{I}_{ir}(d,d')
\ee
for any dimension $d'$, where $|\psi_+\> = {1\over \sqrt{d}}\sum_i |ii\>$, that is a maximally entangled state  and
$\sigma_d'$ is an arbitrary $d'^2$-dimensional separable state. In other words, all three sets have in their mutual intersection
the set of basic private state with a separable shield.

Let us note here, that there is $\mathcal{R}_{ir}(2,d') \neq \mathcal{I}_{ir}(2,d')$ for any dimension $d'$. This is because the class of states "flower state" $\gamma_f$
which are pbits with $d$ dimensional shield satisfy $E_r(\gamma_f) = 1$ but $I_{sq}(\gamma_f) = 1 + {\log d \over 2} $ \cite{ChristandlAs}.

\subsection{Irreducible private states vs existence of bound key states}
We say that a quantum state $\rho$ has the bound distillable key if $K_D(\rho) = 0$ and $\rho$ is entangled. It is a widely
open problem if such states exist. If so, entanglement and secrecy would be different resources. We are far from solving it;
we rather connect this with the problem of studying the irreducible private states. 
One might think, that existence of bound-key states
	implies that some irreducible private states are beyond the set $\mathcal{R}$
	trivially (see Observation~\ref{closed:sets}). Indeed, if bound key state exists, say $\rho_{bk}$,
	then a basic pdit $|\psi_+\>\<\psi_+|\otimes \rho_{bk}$ is example of such a state. There are however two issues:
	firstly it is not clear what is the lower bound on $E_r$ in this
	case, as this measure is subadditive. Secondly, and more importantly,
	it is not clear that $K_D$ of this state equals $\log d$. This is
	because, in principle, the singlet state could act as a catalyst
	unblocking the key of $\rho_{bk}$ (key is by definition superadditive).
	Therefore it is not immediate to construct an {\it irreducible}
	private state outside of the set $\mathcal{R}$ which is irreducible, basing
	solely on the fact that there exists some bound key state, and
	even not easy to show that it exists.
	In what follows, however, we show that such a state must exists
	in that case. More precisely, if there are no bound key states,
we have the full characterization of the set $\mathcal{IR}$:

Clearly, we have $\mathcal{R}_{ir}\subseteq \mathcal{IR}$. We consider now the condition under which the converse inclusion holds.

\begin{proposition}
	The set $\mathcal{R}_{ir}$ does not equal  set of irreducibly private states if and only if there exists a state with the bound key. The same holds
for the set $\mathcal{R}_{ir,\infty}$ in place of $\mathcal{R}_{ir}$.
\label{prop:charac}
\end{proposition}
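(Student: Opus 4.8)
The plan is to establish the claimed equivalence by proving the two implications separately, handling $\mathcal{R}_{ir}$ and $\mathcal{R}_{ir,\infty}$ in parallel. Since $\mathcal{R}_{ir}=\mathcal{IR}\cap\mathcal{R}\subseteq\mathcal{IR}$ (and likewise for the regularized set), the statement amounts to: there is an irreducible private state $\gamma$ with $K_D(\gamma)\neq E_r(\gamma)$ (resp.\ $\neq E_r^\infty(\gamma)$) if and only if there is an entangled state with vanishing distillable key.

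\textbf{The ``only if'' direction.} I would take a private state $\gamma\in\mathcal{IR}\setminus\mathcal{R}_{ir}$, so that $K_D(\gamma)=\log d$ while $\gamma\notin\mathcal{R}$; since $K_D\le E_r$ unconditionally, this forces $E_r(\gamma)>\log d$. The corollary to Theorem~\ref{th:main} already gives $K_D(\sigma_i)=0$ for each conditional state $\sigma_i$ of $\gamma$. The point is then that $\gamma$ cannot be strictly irreducible: if all $\sigma_i$ were separable, the key-attacked state $\hat\gamma=\frac1d\sum_i|ii\>\<ii|\ot\sigma_i$ would be separable in the cut $AA':BB'$, so $E_r(\gamma)\le S(\gamma\,\|\,\hat\gamma)$; writing $\gamma$ as the twisting $U(P^+_{AB}\ot\sigma)U^\dagger$ one gets $S(\gamma)=S(\sigma)$ and $S(\hat\gamma)=\log d+S(\sigma)$, and — since the diagonal blocks of $\gamma$ and $\hat\gamma$ coincide while $\log\hat\gamma$ is block diagonal — $\tr(\gamma\log\hat\gamma)=\tr(\hat\gamma\log\hat\gamma)$, whence $S(\gamma\,\|\,\hat\gamma)=S(\hat\gamma)-S(\gamma)=\log d$, contradicting $E_r(\gamma)>\log d$. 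Hence some $\sigma_{i_0}$ is entangled and satisfies $K_D(\sigma_{i_0})=0$: an entangled, key-undistillable state. The argument is the same for $\mathcal{R}_{ir,\infty}$: $\gamma\notin\mathcal{R}_\infty$ forces $E_r^\infty(\gamma)>\log d$ (as $K_D\le E_r^\infty$), but a strictly irreducible $\gamma$ would have $E_r^\infty(\gamma)\le E_r(\gamma)\le\log d$ by the same separable witness $\hat\gamma$, again a contradiction.

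\textbf{The ``if'' direction.} Given an entangled $\rho_{bk}$ with $K_D(\rho_{bk})=0$, I would fix $m$ with $m\,E_r^\infty(\rho_{bk})>1$ — possible because $E_r^\infty$ is faithful — and take $\gamma:=P^+_{AB}\ot\rho_{bk}^{\ot m}$, a basic pbit whose conditional states all equal the entangled state $\rho_{bk}^{\ot m}$, so $\gamma$ is certainly not strictly irreducible. It then suffices to check (i) $\gamma\in\mathcal{IR}$, i.e.\ $K_D(\gamma)=\log 2$, and (ii) $E_r(\gamma),E_r^\infty(\gamma)>\log 2$. Part (ii) is straightforward: here $\hat\gamma$ is a separable (classically correlated) flag state tensored with $\rho_{bk}^{\ot m}$; the key-attack map is LOCC, $E_r^\infty$ is LOCC-monotone, $E_r^\infty(\rho^{\ot m})=m\,E_r^\infty(\rho)$, and $E_r^\infty$ of a state tensored with a separable state is unchanged, so $E_r^\infty(\gamma)\ge E_r^\infty(\hat\gamma)=m\,E_r^\infty(\rho_{bk})>1$, and $E_r(\gamma)\ge E_r^\infty(\gamma)>1=\log 2$.

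\textbf{The main obstacle} is part (i). The bound $K_D(\gamma)\ge\log 2$ is trivial (measure the key part), but $K_D(\rho_{bk}^{\ot m})=m\,K_D(\rho_{bk})=0$, so Theorem~\ref{th:main} gives nothing sharper, while every entanglement-measure upper bound on $K_D$ one could invoke — $E_r^\infty$ via Theorem~\ref{thm:upper-bound-er}, or squashed entanglement — overshoots $\log2$, precisely because all these measures are faithful and $\rho_{bk}^{\ot m}$ is entangled. One must therefore show directly that a maximally entangled state cannot create extra distillable key, i.e.\ $K_D(P^+_{AB}\ot\tau)=\log 2+K_D(\tau)$ for all $\tau$ (hence $=\log 2$ for $\tau=\rho_{bk}^{\ot m}$) — exactly the ``the singlet could act as a catalyst unblocking the key'' worry raised in the introduction, which has to be dispatched. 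The route I would pursue is to prove this additivity across a maximally entangled state as a standalone lemma — an analogue, for distillable key, of the fact that a pure entangled state cannot activate bound entanglement — and, should a fully general such lemma be out of reach, to instead choose $\rho_{bk}$ on the boundary of the closed, convex, product-closed set $\mathcal{Z}$ of key-undistillable states so that $P^+_{AB}\ot\rho_{bk}^{\ot m}$ is provably irreducible. Essentially all the difficulty of this direction lies in (i); part (ii) and the ``only if'' direction are routine given the theorems already proved.
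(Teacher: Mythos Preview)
Your ``only if'' direction is essentially the paper's argument. The paper uses the upper bound $E_r(\gamma)\le\log d+\frac1d E_r(\sigma_0\ot\cdots\ot\sigma_{d-1})$ (resp.\ $E_r^\infty$) together with faithfulness of $E_r$ on the tensor product to force some $\sigma_{i_0}$ to be entangled, whereas you argue by contrapositive that if all $\sigma_i$ are separable then $\hat\gamma$ is separable and $E_r(\gamma)\le S(\gamma\|\hat\gamma)=\log d$. These are equivalent: your direct computation is exactly the special case of the cited bound when all conditional states are separable. Both then combine this with $K_D(\sigma_i)=0$ from Theorem~\ref{th:main} to exhibit the bound-key state.

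For the ``if'' direction, you should know that the paper's own proof \emph{does not} supply this implication either: it only shows ``no bound-key states $\Rightarrow\mathcal{IR}=\mathcal{R}_{ir}$'' and stops there. The obstruction you isolate --- that one must rule out the singlet catalysing key from $\rho_{bk}$, i.e.\ prove $K_D(P^+\ot\rho_{bk}^{\ot m})=\log 2$ --- is exactly the worry the authors flag in the introduction and then, despite announcing ``we show that such a state must exist in that case'', do not dispatch in the proof. So your proposal is in fact more complete than the paper's on this half: your part~(ii) is correct (LOCC monotonicity of $E_r^\infty$ under the key-attack, invariance under tensoring with a separable flag, and $E_r^\infty(\rho^{\ot m})=m\,E_r^\infty(\rho)$ all hold, and faithfulness of $E_r^\infty$ is a theorem of Piani), and you honestly flag part~(i) as an unresolved lemma. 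That lemma --- additivity of $K_D$ across a maximally entangled state --- is the genuine gap in both your proof and the paper's, and your proposed route (prove it as a standalone statement about non-activation by pure entanglement) is the natural one, but neither you nor the authors close it.
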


We have then immediate corollary characterizing the set of irreducible private states of arbitrary dimension of the key part $d$ and qubit dimension $d'=2$ of the shield. Additionally, in the particular case of qubit key part, $d=2$ we are able to calculate log-negativity of these states. 
\begin{corollary}
	\label{cor4}
The set of irreducible private bits $\mathcal{IR}_{d,2}$ consists of the strictly irreducible pdits, that is those states
$\gamma_d = {1\over d} \sum_{i=0}^{d-1}|ii\>\<ii|\ot U_i\sigma U_i^{\dagger}$ for which
there is:
\be
\label{sigmas}
U_0\sigma U_0^{\dagger}, \ldots,\  U_{d-1}\sigma U_{d-1}^{\dagger} \in \mathcal{SEP}.
\ee
Moreover, for $d=2$ with $X= U_0\sigma U_1^{\dagger}$, there is $E_N(\gamma_2) = \log \left( 1 + ||X^{\Gamma}||_1\right) $ where $\Gamma$ denotes the partial transposition
and $E_N$ is the log-negativity entanglement measure.
\end{corollary}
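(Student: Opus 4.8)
The statement splits into the characterisation of $\mathcal{IR}_{d,2}$ and the log-negativity formula for $d=2$; the plan is to prove the characterisation first and then feed it into the second part.

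For the characterisation I would prove the two inclusions separately. To see $\mathcal{IR}_{d,2}\subseteq\mathcal{SIR}$, take $\gamma\in\mathcal{IR}_{d,2}$ and apply the necessary condition for irreducibility established above: it gives $K_D(\sigma_i)=0$ for every $i$, where $\sigma_i=U_i\sigma U_i^{\dagger}$ acts on $\mathcal{C}^2\ot\mathcal{C}^2$. Here I would invoke the special feature of two-qubit systems: by the Peres--Horodecki criterion an entangled state on $\mathcal{C}^2\ot\mathcal{C}^2$ is necessarily NPT, and every NPT state on $\mathcal{C}^2\ot\mathcal{C}^2$ is distillable, so $K_D(\sigma_i)\geq E_D(\sigma_i)>0$ whenever $\sigma_i$ is entangled. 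Contraposing, $K_D(\sigma_i)=0$ forces $\sigma_i\in\mathcal{SEP}$ for all $i$, which is precisely~\eqref{sigmas}. For the reverse inclusion, suppose $U_0\sigma U_0^{\dagger},\ldots,U_{d-1}\sigma U_{d-1}^{\dagger}\in\mathcal{SEP}$; then $\widetilde{\sigma}=\sigma_0\ot\cdots\ot\sigma_{d-1}$ is separable, hence $E_r^{\infty}(\widetilde{\sigma})=0$, and Theorem~\ref{thm:upper-bound-er} gives $E_r^{\infty}(\gamma)\leq\log d$. Since $K_D(\gamma)\geq\log d$ trivially (measure the key part) and $K_D(\gamma)\leq E_r^{\infty}(\gamma)$, all three quantities coincide and $\gamma\in\mathcal{IR}_{d,2}$. (This is the two-qubit-shield instance of Proposition~\ref{prop:charac}: no bound entangled --- hence no bound key --- states exist on $\mathcal{C}^2\ot\mathcal{C}^2$.)

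For the log-negativity, with $d=2$ I would write $\gamma_2=\tfrac12\sum_{ij}|ii\>\<jj|_{AB}\ot U_i\sigma U_j^{\dagger}$ and compute $\gamma_2^{\Gamma}$ explicitly by transposing over $BB'$: on the key part $|ii\>\<jj|_{AB}\mapsto|ij\>\<ji|_{AB}$, while on the shield each block maps to its own partial transpose. This gives $\gamma_2^{\Gamma}=\tfrac12\bigl(|00\>\<00|\ot\sigma_0^{\Gamma}+|11\>\<11|\ot\sigma_1^{\Gamma}+|01\>\<10|\ot X^{\Gamma}+|10\>\<01|\ot(X^{\Gamma})^{\dagger}\bigr)$, which is supported on the three mutually orthogonal $AB$-subspaces spanned by $\{|00\>\}$, $\{|11\>\}$ and $\{|01\>,|10\>\}$. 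On the first two the trace norm is $\tfrac12||\sigma_0^{\Gamma}||_1$ and $\tfrac12||\sigma_1^{\Gamma}||_1$, which equal $\tfrac12$ each because by the first part $\sigma_0,\sigma_1$ are separable, so $\sigma_i^{\Gamma}\geq 0$ and $||\sigma_i^{\Gamma}||_1=\tr\sigma_i^{\Gamma}=1$; on the third, the block is purely off-diagonal, so a singular-value decomposition of $X^{\Gamma}$ shows its trace norm is $2\cdot\tfrac12||X^{\Gamma}||_1=||X^{\Gamma}||_1$. Adding the three orthogonal pieces, $||\gamma_2^{\Gamma}||_1=1+||X^{\Gamma}||_1$, and hence $E_N(\gamma_2)=\log\bigl(1+||X^{\Gamma}||_1\bigr)$.

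The only non-elementary ingredient is the classical fact that every NPT two-qubit state is distillable; granted that, the argument is essentially bookkeeping. The points requiring care are: using Theorem~\ref{thm:upper-bound-er} (rather than a direct ``key-attacked state is separable'' estimate) in the reverse inclusion, so as to get the clean squeeze $\log d\leq K_D(\gamma)\leq E_r^{\infty}(\gamma)\leq\log d$; and, in the second part, keeping track of which subsystems are transposed and noticing that the $\{|01\>,|10\>\}$ sector is entirely off-diagonal, which is exactly what turns $\tfrac12 X^{\Gamma}$ into a contribution $||X^{\Gamma}||_1$ to the trace norm (and which is why the separability of the $\sigma_i$ from the first part is genuinely needed, rather than only giving $\tfrac12(||\sigma_0^{\Gamma}||_1+||\sigma_1^{\Gamma}||_1)$ in general).
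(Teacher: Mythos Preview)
Your argument is correct. For the characterisation of $\mathcal{IR}_{d,2}$ you unpack essentially the same reasoning the paper compresses into a reference to Proposition~\ref{prop:charac}: the key fact in both is that in $\mathcal{C}^2\ot\mathcal{C}^2$ every entangled state is distillable, so $K_D(\sigma_i)=0$ forces $\sigma_i\in\mathcal{SEP}$.

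For the log-negativity formula you take a genuinely different route. The paper invokes Lemma~3 of \cite{keyhuge}, observing that $\sqrt{XX^{\dagger}}=U_0\sigma U_0^{\dagger}=\sigma_0$ and $\sqrt{X^{\dagger}X}=\sigma_1$ are separable (hence PPT), which is the hypothesis of that lemma. You instead compute $\gamma_2^{\Gamma}$ directly, split it into the three orthogonal $AB$-sectors, and evaluate the trace norm block by block. Your computation is correct and entirely self-contained, whereas the paper's version is shorter but relies on an external result; in substance the two coincide, since the separability of $\sigma_0,\sigma_1$ from the first part is exactly what both approaches need to turn the diagonal contributions into $\tfrac12+\tfrac12=1$.
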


The above corollary in particular  holds for the strictly irreducible pbits, considered in \cite{Ferrara}.
We note that the form of the matrix in (\ref{cor4}) implies that all diagonal blocks of the matrix of a private bit from $\mathcal{IR}_{2,2}$ are separable.

\begin{rem}
If Proposition~\ref{prop:charac} was true for some entanglement measure $M$ and then was a state $\gamma \in \mathcal{R}_M\setminus \mathcal{R}_{ir}$ then it would imply existence of bound key state.  Although we know that $\mathcal{I}_{ir} \neq \mathcal{R}_{ir}$, in the case of $I_{sq}$ the analogue of Proposition \ref{prop:charac} is not true in general as the analogue of the bound (\ref{eq:old-er-bound}) does not hold e.g. for the mentioned flower state $\gamma_f$.
\end{rem}

To ensure that the state $\sigma$ from (\ref{sigmas}) is separable after the action of an arbitrary unitary operations $\{U_0, U_1,\ldots, U_{d-1}\}$ is enough to take $\sigma$ from the set of the absolutely separable states $\mathcal{ASEP}$ \cite{Kus,Gur}. Such an idea of the construction was proposed in \cite{KH-phd}, but without an explicit presentation of allowed classes of states.  Here we construct an explicit class of above mentioned states in qubit-qubit case and we show by counterexample, which is different that previously known "flower state" $\gamma_{flower}$~\cite{,karol-PhD,2005PhRvL..94t0501H}, that such class does not saturate all possible choices. 

The problem of the absolute separability or the separability from the
spectrum \cite{PhysRevA.58.883,Knill1,Moor,John} asks for a characterization of the states $\rho\in \CC^{ d}\otimes \CC^{ d}$, such that $U\rho U^{\dagger}$ is separable for all unitary matrices $U$. We know that such full characterization was done for the qubit-qubit~\cite{Moor} case and then it was generalized for the qubit-qudit case \cite{John}. Here, as we mentioned, we restrict ourselves to the case of the two qubits by recalling the following theorem from~\cite{Moor}:
\begin{theorem}
	\label{asep}
	Consider the state $\rho\in \CC^{ 2}\otimes \CC^{ 2}$. Let $\operatorname{spec}^{\downarrow}(\rho)=\{\lambda_{i} \ : \ i=1,\ldots,4\}$
	be a set of eigenvalues of $\rho$ in a non-increasing order,
	then the state is absolutely separable if and only if its eigenvalues
	satisfy 	
	\begin{equation}
	\lambda_{1}\leq\lambda_{3}+2\sqrt{\lambda_{2}\lambda_{4}}.
	\label{eq:abssep22}
	\end{equation}
\end{theorem}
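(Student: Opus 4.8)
The plan is to reduce absolute separability to a scalar optimization over the spectrum of $\rho$ and solve it in closed form. Since $\rho$ acts on $\CC^{2}\ot\CC^{2}$, the Peres--Horodecki criterion (which on this Hilbert space is also sufficient for separability) says $U\rho U^{\dagger}\in\mathcal{SEP}$ iff its partial transpose $(U\rho U^{\dagger})^{\Gamma}$ is positive semidefinite. Hence $\rho$ is absolutely separable exactly when
\be
\min_{U\in U(4)}\ \lambda_{\min}\!\De{(U\rho U^{\dagger})^{\Gamma}}\ \geq\ 0 ,
\ee
and the whole task becomes to evaluate this minimum as a function of the ordered spectrum $\lambda_{1}\geq\lambda_{2}\geq\lambda_{3}\geq\lambda_{4}$ and check that its non-negativity is equivalent to \eqref{eq:abssep22}. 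Since non-negativity of the left-hand side will turn out to be equivalent to $\lambda_{3}+2\sqrt{\lambda_{2}\lambda_{4}}-\lambda_{1}\geq 0$, both implications of the theorem come out at once.

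First I would turn this into a double minimization over pure states. Writing $\lambda_{\min}(X)=\min_{\psi}\bra{\psi}X\ket{\psi}$ and using $\tr(A^{\Gamma}B)=\tr(A B^{\Gamma})$ gives $\lambda_{\min}((U\rho U^{\dagger})^{\Gamma})=\min_{\psi}\tr\!\big(\rho\, U^{\dagger}\ket{\psi}\bra{\psi}^{\Gamma}U\big)$, so the quantity above equals $\min_{U,\psi}\tr\!\big(\rho\, U^{\dagger}\ket{\psi}\bra{\psi}^{\Gamma}U\big)$. Putting $\ket{\psi}$ in Schmidt form with coefficients $s_{1}\geq s_{2}\geq 0$, $s_{1}^{2}+s_{2}^{2}=1$, a one-line computation in the Schmidt basis shows that $\ket{\psi}\bra{\psi}^{\Gamma}$ has spectrum $\DE{s_{1}^{2},\,s_{2}^{2},\,s_{1}s_{2},\,-s_{1}s_{2}}$; conjugating by the free unitary $U$ sweeps this Hermitian operator over its entire unitary orbit. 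By von Neumann's trace inequality the minimum of $\tr(\rho M)$ over the orbit of $M$ is the anti-sorted pairing $\sum_{i}\lambda_{i}^{\downarrow}\,\mu_{i}^{\uparrow}$, so after checking the order $-s_{1}s_{2}\leq s_{2}^{2}\leq s_{1}s_{2}\leq s_{1}^{2}$ and substituting $s_{1}^{2}=\tfrac{1+t}{2}$, $s_{2}^{2}=\tfrac{1-t}{2}$, $s_{1}s_{2}=\tfrac12\sqrt{1-t^{2}}$ with $t\in[0,1]$, the expression collapses to
\be
\tfrac12\,\min_{t\in[0,1]}\De{(\lambda_{3}-\lambda_{1})\sqrt{1-t^{2}}+\lambda_{2}(1-t)+\lambda_{4}(1+t)} .
\ee

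Finally I would carry out this one-variable minimization. The bracket is convex in $t$ (because $-\sqrt{1-t^{2}}$ is convex and $\lambda_{1}-\lambda_{3}\geq 0$), its derivative at $t=0$ equals $\lambda_{4}-\lambda_{2}\leq 0$, and it tends to $+\infty$ as $t\to 1^{-}$ unless $\lambda_{1}=\lambda_{3}$; hence the minimum is attained at the interior stationary point $t^{\ast}=(\lambda_{2}-\lambda_{4})/\sqrt{(\lambda_{1}-\lambda_{3})^{2}+(\lambda_{2}-\lambda_{4})^{2}}$, with the degenerate case $\lambda_{1}=\lambda_{3}$ handled separately and trivially. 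Substituting $t^{\ast}$ back, the minimal value is $\tfrac12\big(\lambda_{2}+\lambda_{4}-\sqrt{(\lambda_{1}-\lambda_{3})^{2}+(\lambda_{2}-\lambda_{4})^{2}}\big)$, so non-negativity is equivalent to $(\lambda_{2}+\lambda_{4})^{2}\geq(\lambda_{1}-\lambda_{3})^{2}+(\lambda_{2}-\lambda_{4})^{2}$, i.e. $4\lambda_{2}\lambda_{4}\geq(\lambda_{1}-\lambda_{3})^{2}$, i.e. exactly $\lambda_{1}\leq\lambda_{3}+2\sqrt{\lambda_{2}\lambda_{4}}$. I expect the main obstacle to be the bookkeeping in the middle paragraph: correctly identifying the spectrum of $\ket{\psi}\bra{\psi}^{\Gamma}$ and, above all, getting the pairing order right in von Neumann's inequality (deciding which of $s_{2}^{2}$ and $s_{1}s_{2}$ multiplies which $\lambda_{i}$); once the reduction to the convex scalar problem is set up correctly, the remainder is routine calculus, with only the boundary case $\lambda_{1}=\lambda_{3}$ deserving a separate glance.
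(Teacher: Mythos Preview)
Your argument is correct. The reduction via the Peres--Horodecki criterion to $\min_{U,\psi}\tr\big(\rho\,U^{\dagger}\ket{\psi}\bra{\psi}^{\Gamma}U\big)$, the identification of the spectrum $\{s_{1}^{2},s_{2}^{2},s_{1}s_{2},-s_{1}s_{2}\}$ of $\ket{\psi}\bra{\psi}^{\Gamma}$, the anti-sorted pairing from the von Neumann trace inequality, and the ensuing one-variable convex minimization all check out, including the boundary case $\lambda_{1}=\lambda_{3}$.

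As for comparison with the paper: there is nothing to compare. The paper does not prove this statement at all; it merely \emph{recalls} it from the literature (the Verstraete--Audenaert--De~Moor characterization, cited as \cite{Moor}) and then uses the criterion as a black box to discuss the set $\mathcal{TAS}$ of twisted absolutely separable private states. Your write-up is a self-contained derivation of that cited result, and it is essentially the standard route taken in the original reference: reduce absolute separability to absolute PPT via Peres--Horodecki, parametrize by Schmidt coefficients, and optimize. So you have supplied a proof where the paper only supplied a citation.
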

We see that thanks to (\ref{asep}) by putting states $\sigma$ from (\ref{cor4}) satisfying (\ref{eq:abssep22}) we always get an irreducible pbit. Private pbits obtained by twisting of absolutely separable state together with a singlet form the set $\mathcal{TAS}$, see Figure~\ref{sets2}. Of course we can find a state $\omega \notin \mathcal{ASEP}$ and non-local twisting operations for which conditions in (\ref{sigmas}) are fulfilled. We illustrate this by the following example:
\begin{example}
	\label{ex1}
	Let us consider a separable state
	\be
	\label{omega}
	\omega=\frac{1}{4}\left(\begin{array}{cccc}
	1 & \cdot & \cdot & 1\\
	\cdot & 1 & 1 & \cdot\\
	\cdot & 1 & 1 & \cdot\\
	1 & \cdot & \cdot & 1
	\end{array}\right),
	\ee
	which is not absolutely separable, since the condition (\ref{eq:abssep22}) does not hold. This means that there exists a unitary transformation
	$U$ for which $U\omega U^{\dagger}$ is entangled, so $\omega^{\Gamma}=\left(U\rho U^{\dagger}\right)^{\Gamma}\ngeq0$, where $\Gamma$ denotes the operation of the partial transposition with respect to one of the subsystems~\cite{SepHHH,SepAP}. On the other hand, there exist non-local unitary operations $V$, such that condition $\omega^{\Gamma}=\left(V\omega V^{\dagger}\right)^{\Gamma}\geq0$ holds - this means that the state $V\omega V^{\dagger}$ is separable. Here we provide the family
	of such operations $V$
	\be
	\label{twist}
	V=\left(\begin{array}{cccc}
	\cdot & \cdot & 1 & \cdot\\
	s & -c & \cdot & \cdot\\
	c & s & \cdot & \cdot\\
	\cdot & \cdot & \cdot & -1
	\end{array}\right),
	\ee
	where $c=\cos(\theta)$ and $s=\sin(\theta)$ for $\theta \in \left[0,2\pi \right]$. It is easy to see that $\omega^{\Gamma}=\left(V\omega V^{\dagger}\right)^{\Gamma}\geq0$, which means that $V\omega V^{\dagger}\in \mathcal{SEP}$. The unitary transformation $V$ is indeed non-local, since after an action on the separable state
	\be
	\widetilde{\omega}=\frac{1}{2}\left(\begin{array}{cccc}
	1 & \cdot & \cdot & \cdot\\
	\cdot & \cdot & \cdot & \cdot\\
	\cdot & \cdot & \cdot & \cdot\\
	\cdot & \cdot & \cdot & 1
	\end{array}\right)
	\ee
	we get $\left(V\widetilde{\omega} V^{\dagger}\right)^{\Gamma}\ngeq 0$. Private states $\gamma(\omega)$ constructed by the use of the state $\omega$ form equation~\eqref{omega} and twisting operations from~\eqref{twist}  are outside of the set $\mathcal{TAS}$ (twisted absolutely separable states), but still within the set of strictly irreducible pbits $\mathcal{SIR}$, see Figure~\ref{sets2}.
\end{example}
	\begin{figure}[h!]	
		\includegraphics[width=0.25\textwidth]{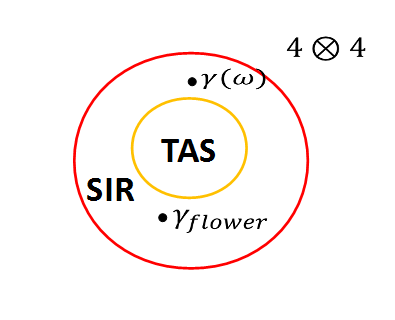}
		\caption{Within the set of all strictly irreducible private states of dimension $4 \otimes 4$ we distinguish the set $\mathcal{TAS}$ of twisted absolutely separable private states. States from $\mathcal{TAS}$ are obtained by twisting a singlet together with an absolutely separable state on the shield part. We see that outside of $\mathcal{TAS}$ but still within the set of strictly irreducible private states $\mathcal{SIR}$ we can find the "flower state" $\gamma_{flower}$ as well as class of the private states $\gamma(\omega)$ constructed in Example~\ref{ex1}.}
		\label{sets2}
	\end{figure}

\section{Properties of the key-undistillable states}
\label{sec:prop-kundist}
In this section for the first time we investigate properties of the conditional states of the irreducible private states without presenting their explicit form. For the key-attacked state of an irreducible private state we are able to show even more, namely their key-undistillability. We then show a  lower bound on the trace distance between the key-undistillable and private states.

\begin{observation}
	The key-attacked state $\hat{\gamma}$ of an irreducible private state $\gamma$
is key-undistillable. The same holds for $\hat{\gamma}^{\otimes n}$ for $n \geq 2$.
\label{obs-keyun}
\end{observation}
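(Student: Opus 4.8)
The plan is to deduce this from the corollary to Theorem~\ref{th:main}, which for irreducible $\gamma$ gives $K_D(\widetilde\sigma)=0$ with $\widetilde\sigma=\sigma_0\otimes\cdots\otimes\sigma_{d-1}$, together with the observation that the flag registers of $\hat\gamma$ are effectively public. First I would record the structural point: since $\hat\gamma=\frac1d\sum_i|ii\>\<ii|_{AB}\otimes\sigma_i$ is block diagonal in the orthogonal family $\{|ii\>\<ii|_{AB}\}$, \emph{every} purification of $\hat\gamma$ hands the eavesdropper a register perfectly correlated with the flag value $i$; equivalently, Alice's and Bob's flag registers (already classical, in the secure basis) are known to Eve. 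Hence announcing the flags over the public channel is cost-free, and in analysing $K_D(\hat\gamma^{\otimes N})$ one may assume the parties first measure and broadcast the entire flag string $\vec i\in\{0,\dots,d-1\}^N$.

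Second, conditioned on $\vec i$ the shield carries $\sigma_{\vec i}:=\sigma_{i_1}\otimes\cdots\otimes\sigma_{i_N}$ with Eve holding the matching purifications, and, since Eve also holds $\vec i$, the security constraint forces the output restricted to the branch $\vec i$ to be a bona fide key state distilled from $\sigma_{\vec i}$ alone. Now $\sigma_{\vec i}$ is, up to a local permutation, the tensor sub-factor $\sigma_0^{\otimes t_0}\otimes\cdots\otimes\sigma_{d-1}^{\otimes t_{d-1}}$ of $\widetilde\sigma^{\otimes T}$, where $(t_0,\dots,t_{d-1})$ is the type of $\vec i$ and $T=\max_s t_s$; as $K_D$ (and its one-shot counterpart) never increases when a tensor factor is discarded, the key on the branch $\vec i$ is at most that of $\widetilde\sigma^{\otimes T}$. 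By concentration of the type around $(N/d,\dots,N/d)$ one has $T\le N/d+\delta N$ outside an event of probability $o(1)$ (whose contribution to the expected key is $o(N)$), so $\tfrac1N K_D(\hat\gamma^{\otimes N})\le \tfrac TN\cdot\tfrac1T K_D(\widetilde\sigma^{\otimes T})+o(1)\to \tfrac1d K_D(\widetilde\sigma)=0$, using $T/N\to 1/d$ and $\tfrac1T K_D(\widetilde\sigma^{\otimes T})\le K_D(\widetilde\sigma)$. This gives $K_D(\hat\gamma)=0$. (Equivalently: $\hat\gamma^{\otimes N}$ is, up to trace-norm error $o(1)$, preparable from $\widetilde\sigma^{\otimes\lceil N/d+\delta N\rceil}$ by LOCC assisted by public shared randomness — draw a random balanced-type string, sort the conditional states, reattach the flags — whence $K_D(\hat\gamma^{\otimes N})\le K_D(\widetilde\sigma^{\otimes\lceil N/d+\delta N\rceil})+o(N)$.) For $\hat\gamma^{\otimes n}$, $n\ge 2$, repeating this with $N=nk$ copies of $\hat\gamma$ and dividing by $k$ gives $K_D(\hat\gamma^{\otimes n})=\lim_k\tfrac1k K_D(\hat\gamma^{\otimes nk})\le\lim_k\tfrac1k K_D(\widetilde\sigma^{\otimes\lceil nk/d\rceil})=\tfrac nd K_D(\widetilde\sigma)=0$; alternatively it is immediate from weak additivity of $K_D$ on tensor powers, $K_D(\hat\gamma^{\otimes n})=nK_D(\hat\gamma)=0$.

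The step I expect to be the main obstacle is making the reduction of the second paragraph airtight as an \emph{upper} bound on $K_D$. One must be precise that the flags are genuinely in Eve's hands (so that broadcasting them leaks nothing and the per-branch analysis is legitimate), that the sub-linear classical "type" overhead and the vanishing-probability atypical branch do not degrade the rate, and that the monotonicity of $K_D$ is invoked in the direction that bounds $K_D(\hat\gamma^{\otimes N})$ from above — i.e.\ by \emph{simulating} $\hat\gamma^{\otimes N}$ from copies of $\widetilde\sigma$ plus public randomness, rather than the reverse. This is essentially the upper-bound counterpart of the distillation protocol proving Theorem~\ref{th:main}, and, as there, it is the careful accounting of the type information that carries the argument.
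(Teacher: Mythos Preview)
Your argument is correct, but you have worked much harder than necessary. The paper's proof, like yours, starts from $K_D(\widetilde\sigma)=0$ (the corollary to Theorem~\ref{th:main}), but then observes that a \emph{single} copy of $\widetilde\sigma=\sigma_0\otimes\cdots\otimes\sigma_{d-1}$ already yields a single copy of $\hat\gamma$ \emph{exactly} by LOCC: locally prepare the separable flag state $\rho_a=\frac1d\sum_i|ii\rangle\langle ii|$, do the controlled swap that moves $\sigma_i$ to a fixed register (Alice controls on her flag, Bob on his, which coincide), and trace out the remaining $\sigma_j$. Monotonicity of $K_D$ then gives $K_D(\hat\gamma)\le K_D(\widetilde\sigma)=0$ in one line; for $n\ge 2$ one applies the same map coordinatewise to get $\hat\gamma^{\otimes n}$ from $\widetilde\sigma^{\otimes n}$. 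Your parenthetical simulation is the asymptotic, approximate version of exactly this construction, but the typicality machinery, the per-branch security analysis, and the asymptotic-continuity bookkeeping are all unnecessary once you notice that $\widetilde\sigma$ contains every $\sigma_i$ simultaneously, so no ``collecting enough copies of each $\sigma_i$'' is needed. What your route does buy is a more explicit operational picture of why the flags are useless (Eve holds them), and it would generalise to situations where one only knew $K_D(\sigma_i)=0$ individually rather than $K_D(\widetilde\sigma)=0$ jointly; here, however, Theorem~\ref{th:main} hands you the joint statement, and the single-copy LOCC map is the clean way to cash it in.
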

Having an operation of mapping a bipartite state on systems $AB$ to a classical-classical-quantum (ccq) state on systems $ABE$ from \cite{KH-phd} (or see Section~\ref{sec:ccq} of Appendix~\ref{app:cdk} )
we are ready now for an attempt to consider a general question, interesting on its own: {\it how far are the key-undistillable states from the private states}? We base on the proof of a lower bound on distillable key given in \cite{smallkey} (see \cite{KH-phd} for more elaborative explanation):

\begin{theorem} \cite{smallkey, KH-phd} For any bipartite state $\rho_{AB}$ there is:
\be
K_D(\rho_{AB}) \geq C_{DW}\left(  (\rho_{psq})_{ccq}\right),
\ee
where $\rho_{psq}$ is the privacy squeezed state of the state $\rho$.
\end{theorem}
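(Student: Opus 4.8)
The plan is to realise the right-hand side as the asymptotic rate of an explicit key-distillation protocol run on $\rho_{AB}^{\ot n}$, following \cite{smallkey, KH-phd}. The protocol has two stages: measure the key part of $\rho$ in the computational basis, producing a classical-classical-quantum state shared with the purifying (eavesdropping) system, and then run the Devetak--Winter protocol \cite{DevetakWinter-hash} on that ccq state. The privacy-squeezed state appears because one must control how much the eavesdropper learns in the first stage; this is governed by the privacy-squeezing lemma of \cite{keyhuge, KH-phd}.

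I would first recall the structure. The state $\rho_{psq}$ is produced from $\rho$ by the twisting construction: writing $\rho$ in block form with respect to the computational basis of its key part, one applies the controlled unitary $U_{tw}=\sum_{ij}|ij\>\<ij|\ot U_{ij}$, which is block-diagonal in that basis and is chosen so as to maximise the trace norms of the off-diagonal blocks, and then traces out the shield; applying the ccq map of \cite{KH-phd} (measure the key part in the computational basis, hand a purifying register to the eavesdropper) then yields $(\rho_{psq})_{ccq}$ on $ABE$. The single elementary fact driving the argument is that $U_{tw}$, being block-diagonal in the computational basis of the key part, commutes with that measurement. Concretely, fixing a purification $|\psi\>$ of $\rho$ and projecting onto a key-part outcome $|ij\>$, the eavesdropper holds, in the privacy-squeezed picture, a conditional state which is $\<ij|\psi\>$ with $U_{ij}$ applied to its shield factor, whereas in the real picture it holds only the partial trace of this state over the shield; since $U_{ij}$ acts only on the shield, the real conditional state is obtained from the privacy-squeezed one by one and the same channel, namely the partial trace over the shield, independently of the outcome $ij$. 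As the classical statistics of the outcomes are moreover untouched by $U_{tw}$, the data-processing inequality gives $I(A\!:\!B)_{(\rho)_{ccq}}=I(A\!:\!B)_{(\rho_{psq})_{ccq}}$ and $I(A\!:\!E)_{(\rho)_{ccq}}\le I(A\!:\!E)_{(\rho_{psq})_{ccq}}$, hence $C_{DW}\big((\rho)_{ccq}\big)\ge C_{DW}\big((\rho_{psq})_{ccq}\big)$; informally, the ccq-eavesdropper of $\rho_{psq}$ is at least as powerful as the real one.

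The protocol is then: on $\rho^{\ot n}$ Alice and Bob measure the key parts in the computational basis, obtaining correlated strings, while the eavesdropper is left with a quantum register purifying the rest. Running one-way information reconciliation from Alice to Bob followed by privacy amplification against the eavesdropper \cite{DevetakWinter-hash}, in the i.i.d.\ limit and up to vanishing trace-distance error they distil an ideal ccq (equivalently, private) state of size $n\big[\,C_{DW}((\rho)_{ccq})-o(1)\,\big]$, which by the inequality above is at least $n\big[\,C_{DW}((\rho_{psq})_{ccq})-o(1)\,\big]$; dividing by $n$ and letting $n\to\infty$ gives $K_D(\rho_{AB})\ge C_{DW}((\rho_{psq})_{ccq})$.

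The hard part is the privacy-squeezing lemma together with the composability of the Devetak--Winter step: one must verify that passing from $\rho$ to $\rho_{psq}$ only transfers information to the eavesdropper, so that $\varepsilon$-security against the (stronger) ccq-eavesdropper of $\rho_{psq}$ implies $\varepsilon$-security against the real one, which is precisely where the finite-block-length smoothing estimates are needed; these details are carried out in \cite{keyhuge, KH-phd}. A secondary remark, explaining why this bound is usually not tight, is that the protocol discards the shield outright, so both $\rho_{psq}$ and its ccq version may carry strictly more distillable key than the protocol extracts.
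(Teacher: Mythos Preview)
Your argument is correct and matches the approach of the cited references, as also reflected in the paper's proof of the generalization (Theorem~\ref{thm:nec-ku}): one uses that twisting leaves the ccq state unchanged, so $(\rho)_{ccq}=(U_{tw}\rho U_{tw}^{\dagger})_{ccq}$, and that tracing out the shield then only enlarges Eve's system (equivalently, the real Eve is a degraded version of the privacy-squeezed Eve), giving $C_{DW}((\rho)_{ccq})\ge C_{DW}((\rho_{psq})_{ccq})$, while $K_D(\rho)\ge C_{DW}((\rho)_{ccq})$ follows since Devetak--Winter is a valid protocol. The paper itself does not reprove this theorem but invokes exactly this chain in the proof of Theorem~\ref{thm:nec-ku}.
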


We will now generalize the above theorem to obtain the {\it necessary condition of key-undistillability}, which in fact can be viewed
as non-one-way-key-distillability. In what follows by $\zcal$ we will denote the set of key-undistillable states, where the dimension
is assumed to be understood from the context.

\begin{theorem}	
 For any key-undistillable state $\rho \in \zcal$, and for any local unitary transformations $U_1$ and $U_2$ which act on $\rho$ 
so that $U_1\ot U_2 \rho U_1^{\dagger}\ot U_2^{\dagger} =: \rho_{ABA'B'}$, as well as
any unitary transformation of the form $U=\sum_{ij} |ij\>\<ij|_{AB}\ot U^{(ij)}_{A'B'}$ there is:
\be
C_{DW}\left( \left( \tr_{A'B'}U\rho_{ABA'B'} U^{\dagger}\right) _{ccq}\right)  \leq 0,
\ee
where the ccq state is w.r.t. to the computational basis $\{|ij\>\}$.
\label{thm:nec-ku}
\end{theorem}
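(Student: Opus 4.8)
The plan is to reduce the statement to the hypothesis that $\rho$ is key-undistillable by showing that the left-hand side quantity is itself a lower bound on $K_D$ of (a local-unitary image of) $\rho$. First I would note that local unitaries do not change distillable key, so $K_D(\rho_{ABA'B'}) = K_D(\rho) = 0$ by assumption. Hence it suffices to prove that for the chosen $\rho_{ABA'B'}$ and any controlled unitary $U = \sum_{ij}|ij\rangle\langle ij|_{AB}\otimes U^{(ij)}_{A'B'}$ we have
\be
K_D(\rho_{ABA'B'}) \geq C_{DW}\left(\left(\tr_{A'B'} U\rho_{ABA'B'}U^\dagger\right)_{ccq}\right).
\ee
Combining this with $K_D(\rho_{ABA'B'})=0$ immediately gives the claim.

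To prove that inequality I would build a key-distillation protocol for $\rho_{ABA'B'}$ whose rate is the right-hand side. The protocol: (i) Alice and Bob apply the local controlled unitary $U$ (this is a local operation: the control registers $A$, $B$ are held locally and the target $A'$, $B'$ likewise, so $U$ factorizes as $U_A^{(\cdot)}\otimes U_B^{(\cdot)}$ only if each $U^{(ij)}$ is itself a product — here I should be a bit careful and either restrict to product target unitaries or, more robustly, observe that measuring $AB$ in the computational basis first and then applying the conditional unitary is LOCC, which is all we need since $C_{DW}$ of a ccq state is computed after tracing out $A'B'$ anyway); (ii) Alice and Bob then discard the shield $A'B'$, arriving at the reduced state $\tr_{A'B'}(U\rho U^\dagger)$ on $AB$; (iii) they measure in the computational basis and run the Devetak–Winter protocol, which extracts key at rate $C_{DW}$ of the associated ccq state — here the ccq state is defined, following \cite{KH-phd} (or Section~\ref{sec:ccq} of Appendix~\ref{app:cdk}), by Eve holding the purification of $\rho$, and the classical registers being the computational-basis outcomes on $AB$. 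The theorem relating $K_D$ to $C_{DW}$ of the privacy-squeezed ccq state, quoted just above, guarantees that this composed protocol is valid and achieves the stated rate; in fact the privacy-squeezing step is exactly what the controlled unitary $U$ plus discarding the shield accomplishes, so one can invoke that theorem directly with the privacy-squeezed state of $\rho_{ABA'B'}$ being $\tr_{A'B'}(U\rho_{ABA'B'}U^\dagger)$ for the optimal choice of $U$, and a fortiori a lower bound for every $U$.

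The main obstacle I anticipate is purely bookkeeping about the ccq state and Eve's system: one must check that applying $U$ and tracing out $A'B'$ on Alice and Bob's side, while Eve retains the full purification, produces precisely the ccq state appearing in the statement, with Eve's knowledge only increasing (never decreasing) under the partial trace — i.e. that discarding the shield can only help Eve, so the resulting $C_{DW}$ is a legitimate lower bound on what is extractable. This monotonicity, together with the fact that $C_{DW}(\sigma_{ccq}) \le K_D$ for any ccq state obtained this way, closes the argument. A secondary subtlety is ensuring the controlled unitary is genuinely implementable by LOCC; the clean way around it is to have the parties first dephase $AB$ in the computational basis (which commutes with the subsequent measurement and only degrades their state, hence does not increase $K_D$), after which the conditional unitary $\sum_{ij}|ij\rangle\langle ij|\otimes U^{(ij)}$ is manifestly local. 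Putting these pieces together yields $0 = K_D(\rho) \ge C_{DW}\big((\tr_{A'B'}U\rho U^\dagger)_{ccq}\big)$, which is the assertion.
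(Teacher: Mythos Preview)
Your overall strategy --- reduce to $K_D(\rho)=0$ via local-unitary invariance, then exhibit a protocol achieving rate $C_{DW}\bigl((\tr_{A'B'}U\rho_{ABA'B'}U^{\dagger})_{ccq}\bigr)$ --- matches the paper's. The gap is in how you handle the twisting $U$. The operation $U=\sum_{ij}|ij\rangle\langle ij|_{AB}\otimes U^{(ij)}_{A'B'}$ is \emph{not} LOCC across $AA'{:}BB'$ in general, because each $U^{(ij)}_{A'B'}$ can be an arbitrary (global) unitary on the shield. Neither of your proposed workarounds fixes this: measuring $AB$ first does not make the subsequent $U^{(ij)}$ local on $A'B'$, and dephasing $AB$ is irrelevant to the locality of the target unitaries. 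The claim ``after which the conditional unitary \ldots\ is manifestly local'' is simply false.

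The paper sidesteps the issue entirely by never implementing $U$. It uses the fact (stated in the appendix on ccq states, and as Theorem 3.3 of \cite{KH-phd}) that the ccq state with respect to the control basis is invariant under twisting:
\[
(U\rho_{ABA'B'}U^{\dagger})_{ccq}=(\rho_{ABA'B'})_{ccq}.
\]
This is because, after measuring $AB$ and tracing out $A'B'$, any unitary on $A'B'$ disappears from Eve's reduced state. The chain then reads
\[
0\geq C_{DW}\bigl((\rho_{ABA'B'})_{ccq}\bigr)=C_{DW}\bigl((U\rho_{ABA'B'}U^{\dagger})_{ccq}\bigr)\geq C_{DW}\bigl((\tr_{A'B'}U\rho_{ABA'B'}U^{\dagger})_{ccq}\bigr),
\]
where the first inequality is your ``$K_D\geq C_{DW}$ for this ccq state'' step and the last is the monotonicity under handing $A'B'$ to Eve (which you correctly identified). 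So the missing ingredient in your argument is precisely the twisting-invariance of the ccq state; once you invoke it, you no longer need $U$ to be LOCC at all, and the rest of your outline goes through.
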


The above theorem will allow us to prove a lower bound on distance between key-undistillable states and private states, improving the result of \cite{KH-phd} where a dual fact is proved, namely that the states closer to private bits by $\delta < 0.001$ are key distillable.

\begin{theorem}
	For any state which has $K_D(\rho)=0$, and any private state $\gamma$ with the $d\otimes d$ dimensional key-part:
\be
|| \rho - \gamma ||_1 \geq z(d)
\ee
with
\be
z(d)=\inf_{1\geq \epsilon\geq 0} \left\lbrace \epsilon : \epsilon \geq {1\over 6} - \frac{2 \eta(\epsilon)}{3\log d}\right\rbrace ,
\ee
where $\eta(x)=-x\log x$.
\label{thm:cbound}
\end{theorem}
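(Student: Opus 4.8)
The plan is to contradict Theorem~\ref{thm:nec-ku} by showing that a state $\rho$ with $K_D(\rho)=0$ sitting too close to a private state $\gamma$ would, after untwisting, produce a ccq state with strictly positive Devetak--Winter rate. So suppose $K_D(\rho)=0$ and let $\gamma=\frac1d\sum_{ij}|ii\>\<jj|_{AB}\ot U_i\sigma U_j^{\dagger}$ be a private state with $d\ot d$ key part, written in the secure basis, which by the convention of this paper we take to be computational (any change of basis can be absorbed into the local unitaries $U_1,U_2$ of Theorem~\ref{thm:nec-ku}). Put $\epsilon:=\|\rho-\gamma\|_1$ and apply Theorem~\ref{thm:nec-ku} with $U_1=U_2=\mathbf{1}$ and with the block-diagonal --- hence admissible --- unitary $U=\sum_{ij}|ij\>\<ij|_{AB}\ot W^{(ij)}_{A'B'}$, where $W^{(ii)}=U_i^{\dagger}$ and $W^{(ij)}=\mathbf{1}$ for $i\neq j$. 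Then $U\gamma U^{\dagger}=P^+_{AB}\ot\sigma_{A'B'}$, so $\tr_{A'B'}(U\gamma U^{\dagger})=P^+_{AB}$; since conjugation by a unitary is isometric and the partial trace is a trace-norm contraction, $\rho'_{AB}:=\tr_{A'B'}(U\rho U^{\dagger})$ satisfies $\|\rho'_{AB}-P^+_{AB}\|_1\le\epsilon$. Theorem~\ref{thm:nec-ku} then gives $C_{DW}\bigl((\rho'_{AB})_{ccq}\bigr)\le 0$.

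The core of the argument is a quantitative lower bound on this $C_{DW}$ in terms of $\epsilon$. Writing $\tau_{ccq}$ for the ccq state of a bipartite $\tau_{AB}$ (classical registers $A,B$ obtained by measuring in the computational basis, Eve's system $E$ being a purifying system of $\tau_{AB}$), one has $C_{DW}(\tau_{ccq})\ge I(A:B)-I(A:E)$ evaluated on $\tau_{ccq}$, and at $\tau_{AB}=P^+_{AB}$ this equals $\log d-0=\log d$. For $\tau_{AB}=\rho'_{AB}$ I would estimate the two mutual informations separately. For the first, write $I(A:B)=S(A)-S(A|B)$ on the dephased state $\mathcal{D}(\rho'_{AB})$, which is $\epsilon$-close to $\mathcal{D}(P^+_{AB})=\frac1d\sum_i|ii\>\<ii|$ (dephasing is a contraction); Fannes' inequality controls $S(A)$ (dimension $d$) and the tight Alicki--Fannes inequality controls $S(A|B)$ (the un-conditioned system $A$ has dimension $d$), each up to a term linear in $\epsilon\log d$ plus an $O(\eta(\epsilon))$ entropy term. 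For the second, use that for a purification $\psi_{ABE}$ of $\rho'_{AB}$ the pure-state complementarity identity $S(A|E)_\psi=-S(A|B)_{\rho'_{AB}}$ holds, whence $I(A:E)_{\rho'_{AB}}=S(A)_{\rho'_{AB}}+S(A|B)_{\rho'_{AB}}$, a quantity that by the same two inequalities differs from its value $0$ at $P^+_{AB}$ by at most a term linear in $\epsilon\log d$ plus $O(\eta(\epsilon))$ --- crucially with no entropy of the (a priori unbounded) system $E$ appearing --- and data processing under dephasing of $A$ gives $I(A:E)_{(\rho'_{AB})_{ccq}}\le I(A:E)_{\rho'_{AB}}$. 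Collecting the estimates yields a bound of the form $C_{DW}\bigl((\rho'_{AB})_{ccq}\bigr)\ge \log d-6\,\epsilon\log d-4\,\eta(\epsilon)$, which is the shape needed.

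Putting the two bounds together gives $0\ge\log d-6\epsilon\log d-4\eta(\epsilon)$, i.e. $\epsilon\ge\frac16-\frac{2\eta(\epsilon)}{3\log d}$. Hence $\epsilon=\|\rho-\gamma\|_1$ lies in the set $\{\epsilon\in[0,1]:\epsilon\ge\frac16-\frac{2\eta(\epsilon)}{3\log d}\}$, which is closed because $\eta$ is continuous on $[0,1]$; therefore $\|\rho-\gamma\|_1$ is at least the infimum $z(d)$ of that set, which is exactly the claim.

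I expect the main obstacle to be precisely the estimate of $I(A:E)$: Eve's system has unbounded dimension, so no Fannes-type bound applies to it directly. The resolution is the complementarity identity $S(A|E)_\psi=-S(A|B)_{\rho'_{AB}}$ for a pure $\psi_{ABE}$, which re-expresses the quantity purely in terms of the $d$-dimensional key part and lets Alicki--Fannes take over. A secondary technical point is the bookkeeping of the Fannes and (tight) Alicki--Fannes constants so that they collapse exactly to the coefficients $\frac16$ and $\frac23$ in $z(d)$, together with the verification that the defining set of $z(d)$ is closed, so that its infimum is attained and the final inequality is meaningful.
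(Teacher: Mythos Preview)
Your strategy is exactly the paper's: untwist $\gamma$ to expose the maximally entangled state, invoke Theorem~\ref{thm:nec-ku} to force $C_{DW}\bigl((\rho'_{AB})_{ccq}\bigr)\le 0$, and then lower-bound $C_{DW}$ via continuity of entropies to squeeze $\epsilon=\|\rho-\gamma\|_1$ from below. The untwisting, the application of Theorem~\ref{thm:nec-ku}, and the closedness argument for the set defining $z(d)$ are all fine.

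The place where your write-up diverges from the paper --- and where the constants slip --- is the treatment of $I(A{:}E)$. You bound it via complementarity, $I(A{:}E)_{(\rho')_{ccq}}\le I(A{:}E)_{\psi}=S(A)_{\rho'}+S(A|B)_{\rho'}$, and then apply Fannes and Alicki--Fannes. With the paper's Alicki--Fannes constant ($4\epsilon\log d_A+h(\epsilon)$) this contributes roughly $5\epsilon\log d+\eta(\epsilon)+h(\epsilon)$; combined with your $I(A{:}B)$ estimate (another Alicki--Fannes call) one ends up closer to $10\epsilon\log d$ with mixed $\eta/h$ terms, not $6\epsilon\log d+4\eta(\epsilon)$. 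So the ``bookkeeping'' concern you flagged is real: your decomposition does \emph{not} collapse to the coefficients $1/6$ and $2/3$ that define $z(d)$.

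The paper avoids Alicki--Fannes entirely here. It expands $C_{DW}=S(A)+S(B)-S(AB)-I(A{:}E)$ on the ccq state and bounds the first three terms by Fannes against $\mathbf{1}/d$, $\mathbf{1}/d$ and $\tfrac1d\sum_i|ii\rangle\langle ii|$ respectively, giving $(1+1+2)\epsilon\log d+3\eta(\epsilon)$. For the last term it uses the much cheaper observation
\[
I(A{:}E)_{(\rho')_{ccq}}\;\le\;S(E)_{(\rho')_{ccq}}\;=\;S(\rho'_{AB})\;\le\;2\epsilon\log d+\eta(\epsilon),
\]
the equality because $E$ purifies $\rho'_{AB}$ and is untouched by dephasing $A,B$, the first inequality because $A$ is classical so $S(E|A)\ge 0$, and the last by Fannes against the pure $P^+_{AB}$. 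Summing gives exactly $6\epsilon\log d+4\eta(\epsilon)$ and hence the stated $z(d)$. If you swap your complementarity step for this single-line bound, your proof matches the paper and the constants fall out.
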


Having above theorem, let us mention an easy corollary.

\begin{corollary}  
For any private state $\gamma_d$, there is:
\be
\inf_{\Lambda \in LOCC}\inf_{\rho \in \zcal} ||\Lambda({\rho}) - \gamma_d||_1 \geq \frac{1}{6} - \frac{2}{3 \log d}.
\ee
The state $\Lambda(\rho)$ and $\gamma_d$ in the above are assumed to be properly embedded so that their dimensions are compatible.
\label{cor:num4}
\end{corollary}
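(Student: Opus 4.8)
The goal is Corollary~\ref{cor:num4}, which asserts that the LOCC-closure of the key-undistillable set $\zcal$ stays a fixed trace-norm distance away from every private state $\gamma_d$, with the explicit bound $\frac16 - \frac{2}{3\log d}$. The plan is to reduce this statement to Theorem~\ref{thm:cbound} by showing that applying an LOCC map to a key-undistillable state cannot escape $\zcal$, and then cleaning up the constant. First I would note the monotonicity of distillable key under LOCC: if $K_D(\rho)=0$ then $K_D(\Lambda(\rho))=0$ for any $\Lambda\in LOCC$, since $K_D$ is an entanglement measure (it cannot increase under LOCC) and is nonnegative, so $\Lambda(\zcal)\subseteq\zcal$. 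Hence the inner infimum over $\rho\in\zcal$ followed by $\Lambda\in LOCC$ is just an infimum over $\zcal$ again, i.e.\ $\inf_{\Lambda}\inf_{\rho\in\zcal}\|\Lambda(\rho)-\gamma_d\|_1 = \inf_{\rho'\in\zcal}\|\rho'-\gamma_d\|_1$.

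Next I would invoke Theorem~\ref{thm:cbound} directly: for each fixed $\rho'\in\zcal$ (so $K_D(\rho')=0$) and the given private state $\gamma_d$ with $d\otimes d$ key part, we have $\|\rho'-\gamma_d\|_1 \geq z(d)$, where $z(d)=\inf\{\epsilon\in[0,1]: \epsilon\geq \frac16 - \frac{2\eta(\epsilon)}{3\log d}\}$. Taking the infimum over $\rho'\in\zcal$ preserves the inequality, so the left-hand side of the corollary is $\geq z(d)$. It then remains only to check the elementary numerical fact $z(d)\geq \frac16 - \frac{2}{3\log d}$. This follows because $\eta(\epsilon)=-\epsilon\log\epsilon \leq \frac{1}{e\ln 2}<1$ for $\epsilon\in[0,1]$ (indeed $\eta$ is bounded by its maximum value $1/(e\ln2)\approx 0.53$, and crudely by $1$), so any $\epsilon$ satisfying the defining inequality $\epsilon\geq \frac16 - \frac{2\eta(\epsilon)}{3\log d}$ must in particular satisfy $\epsilon \geq \frac16 - \frac{2}{3\log d}$; hence the infimum $z(d)$ over such $\epsilon$ is at least $\frac16-\frac{2}{3\log d}$ as well (one should also note this lower bound is the relevant regime, i.e.\ it is below $1$, which holds for all $d\geq 2$). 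This gives the claimed bound.

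The main — in fact the only nontrivial — obstacle here is entirely upstream: it is Theorem~\ref{thm:cbound} itself, together with Theorem~\ref{thm:nec-ku} (the necessary condition for key-undistillability via $C_{DW}$ of privacy-squeezed ccq states) on which it rests. Granting those, the corollary is essentially a bookkeeping step: LOCC-monotonicity of $K_D$ collapses the double infimum, and a one-line estimate $\eta(\epsilon)\le 1$ converts the implicit bound $z(d)$ into the explicit, slightly weaker closed form. I would therefore present it as a short remark-style proof, emphasizing that the substantive content has already been established and that the embedding/dimension-compatibility caveat in the statement is exactly what is needed so that Theorem~\ref{thm:cbound} applies to $\Lambda(\rho)$ and $\gamma_d$ with matching Hilbert spaces.
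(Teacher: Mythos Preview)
Your proposal is correct and follows essentially the same route as the paper: closure of $\zcal$ under LOCC reduces the double infimum to Theorem~\ref{thm:cbound}, and then the bound $\eta(\epsilon)\le 1$ (the paper routes this through $\eta\le h\le 1$) converts the implicit $z(d)$ into the explicit $\frac{1}{6}-\frac{2}{3\log d}$. The embedding remark is handled the same way in both.
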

For such a choice of embedding as described in above proof any operation applied initially to states $\rho_{d_1},\gamma_d$ has to be extended by the identity operator $\mathbf{1}$ acting on additional subsystems. Further we always understand such extensions as a part of the shield in the case of private states.

With a bit of self criticism let us note that the bound in Theorem~\ref{thm:cbound} is actually small, reaching asymptotically $\lim_{d\rightarrow \infty} z(d) = 1/6 =0.1(6)$  as $1/\log d$ instead of $1/d$, although perhaps our considerations can be improved by more careful approach. The bound however is non-trivial even for the smallest $d=2$, with the value $0.041$. It is worth to mention here that proof of theorem~\ref{thm:cbound} can be recalculated also for embedded private state $\gamma'\equiv \gamma_d \ot |0\>\<0|_{d_1}^{\ot 2}$ as it is described in the proof of corollary~\ref{cor:num4} and below it. Namely, the only thing we  have to do is treat additional subsystems $|0\>\<0|^{\ot 2}_{d_1}$ as a part of the shield. Then partial trace in expression~\ref{eq:monot} is well defined as well as untwisting operation $U$.
	
After completing this paper authors where pointed to paper~\cite{Winter2016a}, where estimation tighter than in~\eqref{eq:s} is presented. Unfortunately, in the limit case  both approaches lead to the same value of the lower bound.

\section{Approximate irreducible private states}
\label{sec:approx}
We note that there is a natural way to define approximate irreducible private states as follows (compare with \cite{Ferrara}, where approximate strictly reducible private states were defined):
\begin{definition}
	The state $\rho$ is called $\epsilon$-approximate irreducible private state if it satisfies $||\rho - \tilde{\gamma}_{d,d'}||_1 \leq \epsilon$ for some irreducible private state $\tilde{\gamma}_{d,d'}$.
\end{definition}

A natural property that we would expect from the approximate irreducible private states is that they have the distillable key close to $\log d$. We prove
a weaker condition, namely only for states that approximate those irreducible private states which are from the set $\mathcal{R}_{ir,M}$ with $M$ being asymptotically continuous and bound on the distillable key. The bound then applies for $E_r, E_{r}^{\infty}, I_{sq}$ as they are all asymptotically continuous, and upper bound $K_D$. To be precise, in what follows
by asymptotic continuity of $M$ we mean the fact that for any two states satisfying $||\rho - \rho'||= \epsilon\leq 1$ there is:
\be
|M(\rho) -M(\rho')| \leq O(\epsilon \log d  + h(\epsilon))
\ee
where $h$ is the binary Shannon entropy, and the constant in $O(\cdot)$ notation is independent of the dimension $d$.
\begin{observation}
	\label{obs2a}
	For an $\epsilon$-approximate irreducible private state $\rho$ with $\epsilon \leq 0.367879$ there is:
\be K_D(\rho) \geq \log d - 6\epsilon\log d - 4\eta(\epsilon), \ee
where $\eta(x) = - x \log x$.
If it also satisfies $||\rho - \tilde{\gamma}||_1\leq \epsilon$ such that $\tilde{\gamma} \in \mathcal{R}_{ir,M}$ where $M$ is asymptotically continuous and $M\geq K_D$, then there is:
\be K_D(\rho)\leq \log d + O(\epsilon \log d + h(\epsilon)).
\ee
In particular it is true for $M \in \{E_r,E_r^{\infty},I_{sq}\}$.
\end{observation}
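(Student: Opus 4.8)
The plan is to treat the two bounds separately, using the sandwich $K_D(\tilde\gamma)=\log d$ for the underlying irreducible private state $\tilde\gamma$ and asymptotic continuity to propagate the $\epsilon$-closeness through the relevant entanglement measure. For the \emph{upper bound}, I would first note that $\tilde\gamma\in\mathcal{R}_{ir,M}$ gives $K_D(\tilde\gamma)=M(\tilde\gamma)=\log d$. Since $M\geq K_D$ and $M$ is asymptotically continuous, from $\|\rho-\tilde\gamma\|_1\leq\epsilon$ we immediately get
\be
K_D(\rho)\leq M(\rho)\leq M(\tilde\gamma)+O(\epsilon\log d+h(\epsilon))=\log d+O(\epsilon\log d+h(\epsilon)),
\ee
which is the claimed inequality; here one should be a little careful that the dimension appearing in the asymptotic-continuity estimate is the total dimension of $\rho$, but since the shield dimension is fixed and the key dimension is $d$, the $\log$ of the total dimension is $\log d + O(1)$, absorbed into the $O(\cdot)$. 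The fact that $E_r,E_r^{\infty},I_{sq}$ are all asymptotically continuous and upper bound $K_D$ then yields the ``in particular'' clause.

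For the \emph{lower bound}, I would not go through an entanglement measure but instead reuse the structure already developed in Section~\ref{sec:prop-kundist}, in particular the privacy-squeezing / ccq construction and the Devetak--Winter lower bound $K_D(\rho)\geq C_{DW}((\rho_{psq})_{ccq})$. The idea is: apply the optimal untwisting $U$ associated to $\tilde\gamma$ (which turns $\tilde\gamma$'s key part into a maximally correlated state perfectly decoupled from Eve) to $\rho$ as well; since $U$ is a (controlled) unitary it does not change trace distance, so $U\rho U^\dagger$ is still $\epsilon$-close to $U\tilde\gamma U^\dagger$, and tracing out the shield gives a ccq state $\epsilon'$-close (with $\epsilon'\leq\epsilon$ or a small multiple, via monotonicity of trace distance under partial trace) to the ideal state $\frac1d\sum_i|ii\rangle\langle ii|\otimes\rho_E^{ideal}$ for which $C_{DW}=\log d$. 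Then I invoke continuity of the coherent-information-type quantity $C_{DW}=I(A\!:\!B)-I(A\!:\!E)$ in trace norm — each mutual information is asymptotically continuous with the standard $O(\epsilon\log d+h(\epsilon))$ bound — to conclude $C_{DW}((\rho_{psq})_{ccq})\geq\log d-O(\epsilon\log d+h(\epsilon))$, and the explicit constants $6\epsilon\log d+4\eta(\epsilon)$ come from bookkeeping two Fannes-type terms (one for each mutual information) plus the factor picked up from partial trace, exactly as in the proof of Theorem~\ref{thm:cbound}.

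The main obstacle I anticipate is the lower bound, specifically making sure the privacy-squeezed ccq state of $\rho$ is genuinely close to that of $\tilde\gamma$: privacy squeezing is defined via $\rho$'s own purification/untwisting, not $\tilde\gamma$'s, so one cannot simply apply $\tilde\gamma$'s untwisting to $\rho$ and call the result $\rho_{psq}$. The clean way around this is to observe that $C_{DW}((\rho_{psq})_{ccq})$ is, by its definition as an optimization, at least the value obtained from \emph{any} admissible controlled-unitary untwisting of $\rho$ followed by partial trace — in particular the one inherited from $\tilde\gamma$ — so we get a valid lower bound without needing $\rho$'s optimal squeezing to coincide with anything. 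After that it is just asymptotic continuity of $I(A\!:\!B)$ and $I(A\!:\!E)$ applied to the two ccq states, and tracking constants to land on $6\epsilon\log d+4\eta(\epsilon)$; the numerical threshold $\epsilon\leq0.367879\approx 1/e$ is exactly where $\eta(\epsilon)$ (hence the bound) is still controlled/monotone in the relevant range, so it should drop out of the same estimate.
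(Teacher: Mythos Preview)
Your proposal is correct and follows essentially the same route as the paper's proof: the upper bound is obtained exactly as you describe via $K_D\leq M$ and asymptotic continuity of $M$ at $\tilde\gamma$, and the lower bound is obtained by applying the untwisting $U$ of $\tilde\gamma$ to $\rho$, tracing out the shield, and invoking the continuity estimate for $C_{DW}$ that already appears in the proof of Theorem~\ref{thm:cbound}, with the threshold $\epsilon\leq 1/e$ coming from monotonicity of $\eta$. One small remark: privacy squeezing in this paper is not literally defined as an optimization over twistings, but your resolution is the right one in substance --- the chain of inequalities in the proof of Theorem~\ref{thm:nec-ku} shows $K_D(\rho)\geq C_{DW}\bigl((\tr_{A'B'}U^{\dagger}\rho U)_{ccq}\bigr)$ for \emph{any} controlled unitary $U$, so you are free to take the one associated with $\tilde\gamma$.
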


Let us note here, that if we are not interested in optimal constants in the formulation of the asymptotic continuity, then the 
above observation can be stated in much compressed way. Namely that for $||\rho -\gamma_d ||\leq \epsilon \leq {1\over 2}$ there is:
\be
| K_D(\rho) - \log d| \leq O(\epsilon \log d + h(\epsilon)),
\ee
if only $\gamma_d \in {\cal R}_{ir,M}$ with $M \geq K_D$ and $M$ asymptotic continuous.

Finally, it is interesting to note, that there are private states $\gamma_d$, for which it is not known if they are $\epsilon$-approximate irreducible private states, but they possess the major property of the latter, namely $|K_D(\gamma_d)-\log d|\leq \epsilon$. We leave as an open question if the states with that property are indeed the $\epsilon$-irreducible private states, and invoke now the construction of a private state which has this property.
To begin with, we consider a family of states on \newline
$\CC^2\ot\CC^2\ot(\CC^{\widetilde{d}^k}\ot\CC^{\widetilde{d}^k})^{\ot m}$ given in \cite{pptkey}:
\begin{widetext}
\be
\hat{\rho}_{p,\widetilde{d},k,m} ={1\over N_{m}}\left[\begin{array}{cccc}
[p({\tau_1+\tau_2\over 2})]^{\ot m} &0&0&[p({\tau_1-\tau_2\over 2})]^{\ot m} \\
0& [({1\over 2}-p)\tau_2]^{\ot m}&0&0 \\
0&0&[({1\over 2}-p)\tau_2]^{\ot m}& 0\\
{[p({\tau_1-\tau_2\over 2})]}^{\ot m}&0&0&{[p({\tau_1+\tau_2\over 2})]}^{\ot m}\\
\end{array}
\right],
\label{eq:rec-state-presented}
\ee
\end{widetext}
where $N_m = 2(p^m)+2\left( {1\over 2} -p\right) ^m$, $\tau_1 = \left( {\rho_a + \rho_s\over 2}\right) ^{\ot k}$ and $\tau_2=(\rho_s)^{\ot k}$, while $\rho_s$ and $\rho_a$ are the $\widetilde{d}$-dimensional symmetric and antisymmetric Werner state, respectively.

The state $\hat{\rho}_{p,\widetilde{d},k,m}$ is PPT iff $p \leq \frac13$ and
$\frac{1-p}{p} \geq \left( \frac{\widetilde{d}}{\widetilde{d}-1}\right) ^k$ \cite{pptkey}.
We satisfy this condition by setting $p=\frac13$,
$\widetilde{d} =m^2$ and $k = m$, as then $\left( {\widetilde{d}\over \widetilde{d}-1}\right) ^k < 2$ for $m \geq 2$.
Then we define
\be
  \rho_m := \hat{\rho}_{1/3, m^2,m,m},
\ee
with $m \geq 2$.

Now in the proof of Corollary 23 of \cite{BCHW2015} it is shown (together with construction), that there exists $\gamma$ such that $||\rho_m  -\gamma||_1\leq f(\epsilon)$, where $f(\epsilon) = 2\sqrt{4\sqrt{2\epsilon}+ \eta\left( 2\sqrt{2\epsilon}\right) } + 2\sqrt{2\epsilon}$ where $\epsilon = {2\over 3}\left( 1-\left( 1-{1\over 2^m}\right) ^m \times {1\over 1 +{1\over 2^m}}\right) $.

We will argue now that this $\gamma$ is the $O(\exp(-m))$-almost irreducible private state. For the upper bound
on $K_D(\gamma)$ we note that by Lemma 24 of \cite{BCHW2015} there is $K_D(\gamma)\leq E_f(\gamma)\leq 1 + O(\exp(-m))$, where $E_f$ is the entanglement of formation measure. On the other hand the state is private, hence $K_D(\gamma) \geq 1$ by the definition.

\section{ Discussion}

We have considered distillation of the key from private states with non-trivial shield part. We have provided the first protocol, which distills the key not
only from the key part of the private state but also exploits its shield. The protocol is rather intuitive: in the first step we decouple the key part
and the shield by "sorting" the states on the shield conditionally on the value of the key on key part. In the second one we distill the key from the states on shield, and to the
total output state of the original key part and resulting the state we apply the Devetak-Winter protocol. It is plausible, that the protocol is optimal,
however we have proven only a better bound on its rate which reads:
\be
\log d + \frac{1}{d}E_r^{\infty}(\sigma_0\otimes \cdots \otimes\sigma_{d-1}).
\ee
It would be interesting to amen this question. We also leave the problem open  whether our protocol has rate close to~\eqref{rate-thm}, when run on an
approximate irreducible private state. At first it seems possible, as $C_{DW}$ is asymptotically continuous, yet our protocol uses as subprotocol an optimal one for distillation of state $\sigma_0\ot \cdots \ot\sigma_{d-1}$, which need not yield optimal value on a close by state to the latter.

The presented protocol allowed us to characterize the irreducible private states - those for which distillable key equals a logarithm of the dimension of the key part. These states in matrix form have key-undistillable states on diagonal. Formally, for a private is irreducible if and only if
\be
\gamma = {1\over d}\sum_{ij}  |ii\>\<jj|\ot U_i\sigma U_j^{\dagger}
\ee
with $K_D\left( U_i\sigma U_i^{\dagger}\right) =0$, that is, its conditional states are key-undistillable. In turn the private states with $2\otimes 2$ dimensional key part and shield are {\it only} strictly irreducible.
To further characterize them, we observe, that the latter form strictly larger set than the set based on {\it absolutely separable states} as proposed in \cite{KH-phd}.

A major problem left over, which stays behind our considerations is whether there exist entangled key-undistillable states. We show, that if they do not
exist, then the class of irreducible private states collapses to the class of strictly irreducible ones, having $K_D=E_r$. If not, then we may have other
subsets, with $K_D = M$ where $M$ is some entanglement measure which is an upper bound on $K_D$, like $E_r^{\infty}$ and $I_{sq}$. We show that these sets
have useful property of being closed under tensor product.

Finally our result may lead to the generalization of one of the results of \cite{Ferrara}. It is shown there that the one-way key repeater rate of any
	strictly irreducible states $\gamma$ (in two copies) $R_D^{\rightarrow}(\gamma\ot\gamma)$ is upper bounded by the one-way distillable entanglement $E_D^{\rightarrow}(\gamma\ot\gamma)$ (see definitions in Appendix~\ref{AppC}). The result of Ferrara and Christandl~\cite{Ferrara} in a part bases on the fact~\cite{keyhuge}, that the distance between separable and private states approaches 1 exponentially fast in number of qubits that the states occupy. Since it works for separable states, only strictly irreducible private states were considered there~\cite{Ferrara}. It is an open question if our new bound on distance between key undistillable states and
		private ones may lead to analogous bound on repeated key of any irreducible private state.
Answering this question
would show that irreducible private states are useless for transferring key for long distances unless they have high distillable entanglement.

\vspace{0.2cm}
{\bf Acknowledgments} KH thanks Matthias Christandl and Roberto Ferrara for discussions and Mark Wilde for useful comments. KH and MS would like to thank also Marek Winczewski for fruitful discussion on the connection between strictly irreducible private states and the key repeaters. Authors acknowledge grant Sonata Bis 5 (grant num-
ber: 2015/18/E/ST2/00327) from the National Science
Centre. KH and P\'C acknowledge also the EU grant ERC AdG QOLAPS. AR acknowledges support of National Science Centre, Poland, grant 2014/14/M/ST2/00818 and  National Science Centre, Poland, grant OPUS 9. 2015/17/B/ST2/01945. MS also thanks the Seventh framework programme EU grant RAQUEL No 323970 and National Science Centre, Poland, grant OPUS 9.  2015/17/B/ST2/01945 for the support.

\bibliographystyle{apsrev}
\bibliography{references-intro,rmp15-hugekey,references}

\newpage
\widetext
\appendix
\section{Auxiliary definitions from the main text}
\label{app:cdk}
\subsection{Entanglement measures and distillable key}
Here, we introduce all entanglement measures which are employed in this manuscript.  Additionally, the definition of the distillable key is also given.
\begin{definition}
	\label{Er}
	The relative entropy of entanglement for an arbitrary density operator $\rho $ is defined as
	\be
	E_r(\rho) \equiv\mathop{\inf}\limits_{\omega \in \mathcal{SEP}} D(\rho|\omega),
	\ee
	where the infimum runs over the set of separable states $\mathcal{SEP}$, and $D(\cdot|\cdot)$ denotes relative entropy, i.e. $ D(\rho|\sigma)\equiv \tr\rho \log \rho - \tr \rho \log \sigma$, for an arbitrary density operators $\rho,\sigma$.
\end{definition}

\begin{definition}
	\label{ErInf}
	The regularized relative entropy of entanglement for an arbitrary density operator $\rho$ is defined as
	\be
	E_r^{\infty}(\rho)\equiv \mathop{\lim}\limits_{n \rightarrow \infty} \frac{1}{n}E_r\left(\rho^{\ot n} \right),
	\ee
	where $E_r$ is the relative entropy of entanglement given in (\ref{Er}).
\end{definition}

\begin{definition}
	\label{Isq}
	The squashed entanglement \cite{Squashed} for an arbitrary bipartite sate $\rho_{AB}$ is defined as
	\be
	I_{sq}\left(\rho_{AB} \right)\equiv  \mathop{\inf}\limits_{\rho_{ABE}}\left\lbrace \frac{1}{2}I(A;B|E) \ | \ \rho_{ABE} \ \text{extension of} \ \rho_{AB} \right\rbrace.
	\ee
	The infimum is taken over all extensions of $\rho_{AB}$, i.e. over all density operators $\rho_{ABE}$ with $\rho_{AB}=\tr_E\rho_{ABE}$. By $I(A;B|E)\equiv S(AE)+S(BE)-S(ABE)-S(E)$ we denote the quantum conditional mutual information of $\rho_{ABE}$ \cite{CerfAdami}. $S(A)\equiv S(\rho_A)$ is the von Neumann entropy of the underlying state.
\end{definition}

\begin{definition}
	\label{disstkey}
	The distillable key for an arbitrary bipartite state $\rho_{AB}$ is defined as the rate at which private states can be distilled under bipartite LOCC operations $\Lambda_{A:B}$\cite{pptkey}:
	\be
	K_D\left( \rho_{AB}\right)\equiv \mathop{\lim}\limits_{\epsilon \rightarrow 0} \mathop{\lim}\limits_{n \rightarrow \infty} \mathop{\sup}\limits_{\Lambda_{A:B}} \left\lbrace K \ : \ \left|\left|  \Lambda\left(\rho^{\ot n}\right)-\gamma^{nK}\right|\right|_1   \leq \epsilon   \right\rbrace,
	\ee
	where $||\cdot||_1$ denotes the trace norm.
\end{definition}

\begin{definition}
	For an arbitrary bipartite state $\rho_{AB} $ by distillable entanglement~\cite{Plenio1} we understand the following quantity:
	\be
	E_D(\rho_{AB})\equiv \sup \left\lbrace r: \lim_{n \rightarrow \infty} \left[ \inf_{\Lambda }\left| \left| \Lambda(\rho_{AB}^{\ot n})-\Phi(2^{rn})\right|\right|_1 \right] =0\right\rbrace ,
	\ee
	where $\Lambda$ denotes trace preserving bipartite LOCC map and $\Phi(2^{rn})$ denotes maximally entangled state of Schmidt rank $2^{rn}$. The distillable entanglement of a state $\rho_{AB}$ can be understood as the rate at which maximally entanglement states can be distilled under bipartite LOCC operations.
\end{definition}
\subsection{Distillable classical key}
\label{sec:distkey}
Let us recall here the definition of the distillable classical key, which we adapt from \cite{keyhuge}.
\begin{definition}
	\label{cladisstkey}
	The classical distillable key for any given state
$\rho_{ABE}\in B(\hcal_A\ot \hcal_B\ot\hcal_C)$ is defined as the rate at which private states can be distilled under bipartite LOPC operations (local operations and public communication - the difference between the standard local
operations and classical communication (LOCC) and
LOPC lies in the fact that in the latter we need to remember
that any classical message announced by the involved
parties may be registered by Eve) $\Lambda_{A:BE}'$\cite{keyhuge}:
	\be
	C_D\left( \rho_{AB}\right)\equiv \mathop{\lim}\limits_{\epsilon \rightarrow 0} \mathop{\lim}\limits_{n \rightarrow \infty} \mathop{\sup}\limits_{\Lambda_{A:BE}'} \left\lbrace K \ : \ \left|\left|  \Lambda'\left(\rho^{\ot n}\right)-(\rho)_{ccq}^{nK}\right|\right|_1   \leq \epsilon   \right\rbrace,
	\ee
	where $||\cdot||_1$ denotes the trace norm.
\end{definition}

The above definition works for any input tripartite state $\rho_{ABE}$.
However, in the case
where the total state is pure, we observe that the latter is determined
by the state $\rho_{AB}=\tr_E \rho_{ABE}$ up to unitary transformations
on Eve's side. Since from the very definition $C_D$ does not change
under such  transformations,  the latter freedom is not an issue,
so that we can say the state $\rho_{AB}$  completely determines
the total state. Thus, we  get definition of distillable classical
secure key from {\it bipartite } state $\rho_{AB}$
\begin{definition}
For given bipartite state $\rho_{AB}$ the distillable
classical secure key is given by
\be
C_D(\rho_{AB})\equiv  C_D(\psi_{ABE}),
\ee
where $\psi_{ABE}$ is the purification of $\rho_{AB}$.
\label{def:ckd}
\end{definition}
\subsection{Definition of key one-way swapping rate and one-way distillable entanglement}
\label{AppC}
In this section, for the sake of completeness, we recall the definition of the one-way key swapping rate introduced in~\cite{Ferrara} and one-way distillable entanglement used by us in discussion.
\begin{definition}
	\label{def:key swap}
	For all bipartite states $\rho$ and $\widetilde{\rho}$, we define the one-way key swapping rate achieved with one-way key swapping protocols as:
	\be
	\label{eq:key swap}
	\begin{split}
		R_D^{\rightarrow}\left(\rho,\widetilde{\rho} \right)\equiv \lim_{\substack{\delta \rightarrow 0 \\ \epsilon \rightarrow 0 \\ \widetilde{\epsilon}\rightarrow 0}} \lim_{n\rightarrow \infty} \sup_{\substack{\Lambda_{C\rightarrow A:B}\\ \Gamma_{C\rightarrow A}\\ \widetilde{\Gamma}_{C'\rightarrow B}}} \left\lbrace R \ : \ \tr_C \Lambda\left(  \gamma^{\<nr\>} \ot \gamma^{\<n\widetilde{r}\>}\right)  \approx_{\delta}\gamma_{nR}, \ \Gamma(\rho^{\ot n})\approx_{\epsilon}\gamma^{\<nr\>}, \ \widetilde{\Gamma}(\widetilde{\rho}^{\ot n})\approx_{\widetilde{\epsilon}}\gamma^{\<n\widetilde{r}\>} \right\rbrace,
	\end{split}
	\ee
	where by $\gamma^{\<n\widetilde{r}\>},\gamma^{\<nr\>}$ we denote the strictly irreducible private state with dimension of the key part $\lceil n\widetilde{r} \rceil $ and $\lceil nr \rceil$ respectively, by $\gamma_{nR}$ the private state with dimension of the key part equal to $\lceil nR \rceil$. Moreover, by $\Lambda_{C\rightarrow A:B}, \Gamma_{C\rightarrow A}$ and $\widetilde{\Gamma}_{C'\rightarrow B}$ we denote one-way swapping protocols acting on respective subsystems.
\end{definition}
\begin{definition}
	For all bipartite states $\rho$ we define one-way distillable entanglement
	\be
	E_D^{\rightarrow}\equiv=\lim_{\epsilon \rightarrow 0}\lim_{n\rightarrow \infty}\sup_{\Lambda_{A\rightarrow B}}\left\lbrace E: \Lambda \left(\rho^{\ot n} \right)\approx_{\epsilon}\Phi_{AB}i^{nE}  \right\rbrace, 
	\ee
	where maps $\Lambda_{A\rightarrow B}$ are restricted to one-way $LOCC$ and $\Phi_{AB}$ is maximally entangled state between $A$ and $B$.
\end{definition}
In the above expressions we use the notation $\rho \approx_{\epsilon} \sigma$ for $\left| \left|\rho-\sigma \right| \right|_1\leq \epsilon$ to compress the definitions.

\subsection{Classical-classical-quantum states and privacy squeezing}
\label{sec:ccq}
{\definition For a bipartite state $\rho_{ABA'B'}$ its ccq state with respect to system $AB$ and basis ${\cal B} =\{|e_i\>\ot |f_j\>\}$ is the state
	\be
	(\rho)_{ccq} \equiv \tr_{A'B'}\left(\sum_{ij} P_i\ot Q_j \ot \mathbf{1}_E |\psi_{\rho}\>\<\psi_{\rho}| P_i\ot Q_j\ot \mathbf{1}_E\right),
	\ee
	where $P_i \equiv |e_i\>\<e_i|\ot \mathbf{1}_{A'B'}$ and $Q_j=|f_j\>\<f_j|\ot \mathbf{1}_{A'B'}$, while $|\psi_{\rho}\>$ is a purification of the state $\rho_{AB}$ to system $E$. In case when the system $A'B'$ is absent, we say that $(\rho)_{ccq}$ is a ccq state of $\rho$ with respect to basis $\cal B$.
}

Let us note here, that such defined ccq state is not unique, however any two ccq states of $\rho$ can differ only by an isometry
on system $E$, hence the secure content of them is the same, as such operation is at hand of Eve, and is reversible.  In particular we may fix for the rest of our considerations that we use the standard purification:
\be
\sum_i p_i|\psi_i\>\<\psi_i|_{AB} \mapsto \sum_i \sqrt{p_i} |\psi_i\>_{AB}\ot|i\>_E.
\ee

An important fact for what follows is  that the twisting with control basis ${\cal B}=\{|e_i\>\ot|f_i\>\}$  (that is unitary $U=\sum_{ij}|e_i\>\<e_i|\ot|f_j\>\<f_j|_{AB}\ot U^{(e_i,f_j)}_{A'B'}$)  does not change
the ccq state with respect to system $AB$ and  basis $\cal B$ of a given state $\rho_{ABA'B'}$ (\cite{keyhuge}, see also Theorem 3.3 of \cite{KH-phd}). Formally
we have:
\be
(U\rho_{ABA'B'} U^{\dagger})_{ccq} = (\rho_{ABA'B'})_{ccq}.
\ee

We note now that the rate of the one-way Devetak-Winter  protocol on some ccq state $\rho$ which we invoke in Theorem $1$ reads:
\be
C_{DW}(\rho_{ccq}) \equiv I(A:B)_{\rho_{ccq}} - I(A:E)_{\rho_{ccq}},
\label{eq:cdw}
\ee
where $I(X:Y)_{\rho}$ is the (quantum) mutual information of a state of $X,Y$ subsystem of the state $\rho_{ccq}$. For the more formal definition of $C_{DW}$ (the classical distillable key) we refer to Section \ref{sec:distkey} of this appendix.

Finally, let us recall the notion of privacy squeezed state \cite{keyhuge}. It involves the operation of privacy squeezing. Privacy squeezing of $\rho$ according to basis $\{|e_i\>\ot|f_j\>\}$ is composition of  (i) operation of the form $U=\sum_{ij}|e_i\>\<e_i|_{AB}\ot|f_j\>\<f_j|\ot U^{(e_i,f_j)}_{A'B'}$ with certain special $U^{(e_i,f_j)}$ defined by the state $\rho$ and subsequently (ii) tracing out the subsystem $A'B'$. The result of this operation is called the {\it privacy squeezed state} of $\rho$.
\section{Proofs from Section~\ref{sec:protocol}}
\label{app:A}
First, we are going to prove here the main theorem of the manuscript, namely that we have a protocol that can distill key not only from the key part
but also from the shield of a private state in case where $\sigma_0\ot \cdots \ot\sigma_{d-1}$ is distillable.

Let us recall that it key rate is given the following theorem:
\begin{Th}
For a private state $\gamma_d$, there is
\ben
K_D(\gamma) \geq \log d + \frac{1}{d}K_D(\sigma_0 \ot \sigma_1 \ot \cdots \ot \sigma_{d-1}).
\een
\end{Th}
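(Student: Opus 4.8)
The plan is to exhibit an explicit protocol achieving the claimed rate and then to evaluate that rate through the Devetak--Winter bound applied to an appropriate ccq state. First one reduces to the nontrivial case $K_D(\widetilde{\sigma})>0$, where $\widetilde{\sigma}\equiv\sigma_0\ot\cdots\ot\sigma_{d-1}$: if $K_D(\widetilde{\sigma})=0$ the bound is trivial, since measuring the key part of a single copy of $\gamma_d$ already yields $\log d$ bits of secure key. Assuming $K_D(\widetilde{\sigma})>0$, one runs the protocol $P(m,k)$ on $\gamma_d^{\ot(m\times k)}$ in $k$ independent blocks of $m$ copies each, applying the Devetak--Winter protocol to the $k$ outputs at the end; the whole analysis then reduces to (i) understanding one block and (ii) combining $k$ of them.

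The per-block analysis carries the real weight. After Alice and Bob measure the key parts in the secure basis, the outcome string $\vec{i}=(i_1,\dots,i_m)\in\{0,\dots,d-1\}^m$ is uniform and, by the defining property of a private state, initially uncorrelated with Eve. The typicality step is then shown to be essentially free: the symbol counts $t_s(m)$ are sums of i.i.d. indicators of mean $m/d$, so by a Hoeffding bound and a union bound over the $d$ symbols the abort probability is at most $2d\exp(-2\delta(m)^2/m)$; choosing $\delta(m)$ with $\delta(m)=\omega(\sqrt m)$ and $\delta(m)=o(m)$ — for instance $\delta(m)=m^{3/4}$ — makes this vanish while keeping $t_{\min}(m)=\lfloor m/d-\delta(m)\rfloor$ with $t_{\min}(m)/m\to 1/d$. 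Next comes the decoupling: the controlled permutation $U=\sum_{\vec{i}}|\vec{i}\>\<\vec{i}|_{AB}\ot U^{(\vec{i})}_{A'B'}$ is a twisting, hence leaves the ccq state with respect to the $AB$ basis unchanged, and after sorting and tracing down to $t_{\min}(m)$ copies of $\widetilde{\sigma}$ in a fixed arrangement, every subsequent operation on the shield — including the optimal key-distillation subprotocol and all its public communication — depends on $\vec{i}$ only through its type $(t_0,\dots,t_{d-1})$. The subprotocol on $\widetilde{\sigma}^{\ot t_{\min}(m)}$ outputs a state that is $f(m)$-close in trace norm to a private state $\gamma_{A'B'A''B''}$ carrying $\approx t_{\min}(m)K_D(\widetilde{\sigma})$ bits of key, with $f(m)\to 0$.

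For the global step one passes to the ccq state of a single successful block and lower bounds its Devetak--Winter rate $C_{DW}=I(A\!:\!B)-I(A\!:\!E)$. The $AB$ key-string register together with the distilled-key register give $I(A\!:\!B)\ge m\log d+t_{\min}(m)K_D(\widetilde{\sigma})-o(m)$, the $o(m)$ absorbing the approximation $f(m)$ and the conditioning on typicality; Eve's information is bounded by the number of types, $I(A\!:\!E)\le\log\binom{m+d-1}{m}\le (m+d-1)\,h\!\left(\tfrac{m}{m+d-1}\right)=o(m)$, plus the $o(m)$ leakage from the approximate private state on the shield. Running Devetak--Winter on the $k$ (mostly identical, rarely error-flagged) ccq states then distills a private state with $r(m,k)\ge k\bigl(m\log d+t_{\min}(m)K_D(\widetilde{\sigma})-o(m)\bigr)-o(k)$ bits of key; since $mk$ input copies were consumed, $K_D(\gamma_d)\ge r(m,k)/(mk)$, and letting $k\to\infty$ and then $m\to\infty$, using $t_{\min}(m)/m\to 1/d$ and $o(m)/m\to 0$, yields $K_D(\gamma_d)\ge\log d+\tfrac1d K_D(\widetilde{\sigma})$.

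I expect the main obstacle to be the decoupling/ccq bookkeeping in the per-block step: rigorously proving that after the controlled permutation and the shield subprotocol Eve's residual correlation with the $AB$-key is genuinely confined to the type (so that $I(A\!:\!E)=o(m)$), and that the only approximately private output of the subprotocol still contributes its full $\approx t_{\min}(m)K_D(\widetilde{\sigma})$ bits additively under Devetak--Winter. This requires handling the public communication correctly within the LOPC model, controlling the error terms by asymptotic-continuity estimates, and verifying that the vanishing-probability abort branches cost only $o(k)$ overall.
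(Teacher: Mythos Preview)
Your proof plan is correct and follows essentially the same route as the paper: the same protocol $P(m,k)$, the same typicality step (the paper uses a types bound from Renner's thesis and a different normalization of $\delta(m)$, but this is cosmetic), the same $\binom{m+d-1}{m}\le 2^{(m+d-1)h(m/(m+d-1))}$ bound on Eve's type information, and the same Devetak--Winter endgame on the $k$ block outputs. The obstacle you flag is handled in the paper exactly as you anticipate: it introduces an idealized comparison state $\rho_m'$ with the exact ideal ccq output $\alpha_{\mathrm{ideal}}$ in place of the approximate $\hat\alpha$, shows $\|\rho_m-\rho_m'\|_1\le\epsilon_{t(m)}$, computes $C_{DW}(\rho_m')$ directly, and transfers the bound via Fannes and Alicki--Fannes continuity.
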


Recall that the input to the protocol $P(m,k)$ is $\gamma_d^{\ot (m\, k)}$, where $\gamma$ is a private state with $d\ot d$ dimensional key part. The private state $\gamma_d$ has also finite dimensional shield, which without loss of generality we may assume to be of dimension $d'\ot d'$. Parameters $m$ and $k$ will be chosen later. We first recall the protocol:

\begin{enumerate}
\item For each of $k$ blocks of $m$ states Alice and Bob repeat items $2$-$6$ as follows:
\item Alice and Bob measure their subsystems of the key part $A$ and $B$.
\item If number $t_s(m)$ of any symbol $s\in \{0,...,d-1\}$ in the key $|i_0\>\ot \cdots \ot|i_{m}\>$ with $i_j\in\{0,...,d-1\}$ is below the threshold $\delta(m)$, i.e. $|{m\over d} - t_s(m)|>\delta(m)$, they trace out the original state, produce the error state $|e\>_{AA'}\ot|e\>_{BB'}$. They then proceed with step $2$ on the next block if such is left, or go directly to step $7$ otherwise. If however $|{m\over d} -t_s(m)|\leq  \delta(m)$ for all $s$, they proceed with the steps $4-6$ of the protocol.
\label{item:ab1}
\item Alice and Bob perform each control-permutation operation: depending on the value of the key $|i_0\>\ot\cdots\ot|i_{m}\>$ with $i_j\in\{0,\ldots,d-1\}$
they permute states on the systems $A'B'$ from its original form $\sigma_{i_0}\ot \cdots\ot\sigma_{i_{m}}$ to $\sigma_0^{\ot t_0(m)}\ot \cdots\ot\sigma_{d-1}^{\ot  t_{d-1}(m)}$.
\item If necessary Alice and Bob trace out some of the states $\sigma_0,\ldots,\sigma_d$ leaving only $t_{min}(m)=\lfloor \frac{m}{d} - \delta(m) \rfloor$ of copies of the state $\widetilde{\sigma} \equiv \sigma_0\ot \cdots \ot\sigma_{d-1}$.
\label{item:trunc}
\item As a subprotocol, Alice and Bob perform on systems $A'B'$ any optimal key distillation protocol of the state $\sigma$, acting on $(\sigma_0\ot \cdots \ot \sigma_{d-1})^{\ot t_{min}(m)}$, with output close in the trace norm
to the state $\gamma_{A'B'A''B''}$. The state on systems $AB$ from the item \ref{item:ab1} and state on systems $A'B'A''B''$ form one of the $k$ input states for the final step $7$ of the protocol.
\label{item:best-dist}
\item On $k$ output states of the $k$ repetitions of items $2$-$6$ (or items $2-3$ if unsuccessful in item $3$ respectively) Alice and Bob perform the Devetak-Winter protocol \cite{DevetakWinter-hash}.

\end{enumerate}

As it was argued in the main text, in what follows, we can w.l.g. assume $K_D(\sigma_0\ot \cdots \ot \sigma_{d-1})>0$, as otherwise the bound from theorem is trivially satisfied without need for the above protocol.

We will now examine how the output state of the protocol looks like. Let us fix some $\delta'(m) > 0$. As we already noticed, we start from a pbit $\gamma^{\ot m}_{ABA'B'E}$. In what follows we will write $|\psi\>$ instead of $|\psi\>\<\psi|$ to simplify the notation. After projection on the key part and control-sorting of the key part and shield, the state is:
\be
\label{eq:state-in}
\begin{split}
\sum_{t \in G} {|Q_t|\over d^m}\left[ \sum_{i \in Q_t}{1\over|Q_t|} |ii\>\<ii|_{AB}\right] \otimes |\psi_{0}\>^{\otimes t_0(m)}\otimes \cdots \otimes |\psi_{d-1}\>^{\otimes t_{d-1}(m)}
+ p_B |e\>_{AA'} \otimes |e\>_{BB'} \otimes |e\>_E.
\end{split}
\ee
Here $|\psi_0\>$ is purification of $\sigma_0 = U_0\sigma_{A'B'}U^{\dagger}_{0}$, and similar for $|\psi_k\>$ for $k \in \{0,\ldots,d-1\}$. The multi-index $i =(i_1,\ldots,i_m)$, where $i_j \in \{0,\ldots,d-1\}$, the set $G$ is the set of types satisfying
$|t_s  - \frac{1}{d}|\leq \delta(m)$ for all $s \in \{0,\ldots,d-1\}$, and the set of all strings $i$ of the type $t$ is denoted as $Q_t$. To simply the notation, we will write Eq. (\ref{eq:state-in}) as
\be
\sum_{t \in G} p(t) \rho_t \otimes |\psi_{0}\>^{\otimes t_0(m)}\otimes \cdots \otimes |\psi_{d-1}\>^{\otimes t_{d-1}(m)}
+ p_B |e\>_{AA'} \otimes |e\>_{BB'} \otimes |e\>_E,
\ee
where $p(t)=\frac{|Q_t|}{d^m}$ and $\rho_t=\sum_{i \in Q_t}{1\over|Q_t|} |ii\>\<ii|_{AB}$.
Moreover we have
\be
p_B = 1 -  \sum_{t\in G} p(t).
\ee
The state $|e\>_{AA'} \otimes |e\>_{BB'} \otimes |e\>_{E'}$ is an error state produced when there are
too little of $\sigma_0 \ldots \sigma_{d-1}$. From properties of types, there is $p_B \leq p(m)$ \cite{Renner-Phd} with
\be
p(m)= 2^{-m\left( \frac{\delta(m)^2}{2 \ln 2} - d\frac{\log (m+1)}{m}\right) },
\ee
i.e. decays exponentially fast with $m$. Moreover the cardinality  of the set $\cal T$ of all types reads the following value and a bound:
\be
|{\cal T}| = {m+d-1\choose m} \leq 2^{(m+d-1)h\left( \frac{m}{m+d-1}\right) },
\label{eq:nroftypes}
\ee
where $h$ is the Shannon binary entropy.

According to item \ref{item:trunc} of the protocol Alice and Bob trace (if needed) some of states $\sigma_i$, leaving $t_{min}(m) = \lfloor \frac{m}{d} - \delta(m)\rfloor$ of states $\sigma_0\ot \cdots\ot \sigma_{d-1}$. For simplicity we will denote $t_{min}(m)$ as $t(m)$. Without loss of generality we may assume that Eve knows the very type of string $i$ of system $AB$ which we denote as $|t\>\<t|_{\tilde{E}}$. Indeed, she may actually conclude $t$ from the kind and number of states traced out in item \ref{item:trunc}.
Altogether the input state in item \ref{item:best-dist} of the protocol is of the form :
\be
\rho_{in} = \sum_{t \in G} p(t) \rho_t \otimes (|\psi_{0}\>\otimes \ldots \otimes |\psi_{d-1}\>)_{A'B'E}^{\otimes t(m)}\otimes |t\>\<t|_{\tilde{E}}
+ p_B |e\>\<e|_{AA'} \otimes |e\>\<e|_{BB'} \otimes |e\>\<e|_{E\tilde{E}},
\label{eq:state-rhoin}
\ee
where $|\psi_k\> \equiv |\psi_k\>\<\psi_k|$.

On the system $A'B'$ Alice and Bob have then $A'B'$ subsystem of the state $(|\psi_0\>\ot|\psi_1\>\ot \cdots\ot |\psi_{d-1}\>)^{\ot t(m)}$. Now from the very definition of distillable key, for any good $t$ (that is $t \in G$), from this they distill by some operation $\Lambda_{t(m)}$ a state
\be
||\hat{\gamma}_{d_{t(m)}} - \gamma_{d_{t(m)}} ||_1\leq \epsilon_{t(m)}'
\label{eq:close}
\ee
on systems ${A'B'A''B'}$
such that
\be
{\log d_{t(m)} \over t(m)} \geq K_D(\sigma_0\ot \cdots \ot\sigma_{d-1}) - \delta'(m)
\ee
$\gamma_{A'B'A''B'}$ is a private state. For bad $t$ (that is $t \notin G$), $\Lambda_{t(m)}$ creates additional flags $|e\>_{A''}\ot |e\>_{B''}$. Without loss of generality we
may assume that it is accompanied by creation of state $|e_{E''}\>$.
To analyse the key rate of the considered protocol it will be convenient to use the notion of classical distillable key \cite{DevetakWinter-hash} $C_D$ (for the definition, see Appendix). It is shown
that if there exists protocol which distills from $\rho$ a state close to private state by $\epsilon$ via LOCC operation, then there also exists a protocol with {\it local operations and public communication}, which works on purification of the input state $\rho$ and distills state that is close to an {\it ideal ccq state} by $2\sqrt{\epsilon}$ (see theorem 4.11 of \cite{KH-phd}), with the same rate of key ($K_D(\rho)=C_D(\psi_{\rho})$). In our case $K_D\left( (\sigma_0\ot \cdots \ot \sigma_{d-1})^{\ot t(m)}\right) =C_D\left( (|\psi_0\>\ot \cdots \ot |\psi_{d-1}\>)^{\ot t(m)}\right) $. Hence, there exists LOPC operation $\tilde{\Lambda}_{t(m)}$ which distills state $\hat{\alpha}_{A'B'E'}(d_{t(m)})$ of the ccq form $\sum_{i,j=0}^{d_{t(m)}}|ij\>\<ij|_{A'B'}\ot \rho^{(ij)}_{E'}$, such that:
\be
||\hat{\alpha}_{A'B'E'}(d_{t(m)}) - \alpha_{ideal}(d_{t(m)})||_1\leq 2\sqrt{\epsilon'_{t(m)}},
\label{eq:close-alf}
\ee
where
\be
\alpha_{ideal}(d_{t(m)}) = {1\over d_{t(m)}} \sum_{k=0}^{d_{t(m)-1}} |kk\>\<kk|_{A'B'}\ot \rho_{E'}.
\ee

As we will see, to lower bound the key rate of the considered protocol, it is enough to analize the state which would be the output of $\tilde{\Lambda}_{t(m)}$:
\be
\begin{split}
\rho_m = \sum_{t \in G} p(t) \rho_t \otimes \hat{\alpha}_{A'B'E'}\left( d,t(m)\right) \ot |t\>\<t|_{\tilde{E}}
+ p_B |e\> \<e|_{AA'}\otimes |e\> \<e|_{BB'} \otimes |e\> \<e|_{E''},
\end{split}
\label{eq:rhom}
\ee
where $|t\>\<t|$ is the state which describes Eve's knowledge about type $t$ of her total state on $E'' = E'\tilde{E}$.
Our figure of merit will be state which is close by in the trace norm distance, and simpler to analyse. This is the above state with $\alpha_{ideal}$ in place of $\hat{\alpha}(d_{t(m)})$:

\ben
\label{rho-prime}
\rho_m' =\sum_{t \in G} p(t) \rho_t \otimes (\alpha_{ideal})_{A'B'E'}\ot|t\>\<t|_{\tilde{E}}
+ p_B |e\> \<e|_{AA'} \otimes |e\> \<e|_{BB'}  \ot |e\> \<e|_{E''}.
\een
To see that the above state is close to $\rho_m$ observe that for all $t,t' \in G$ such that $t' \neq t$ there is $\rho_t\ot|t\>\<t| \perp \rho_{t'}\ot |t'\>\<t'|$. We can also assume w.l.g. that symbol $e$ of the error states $|e\>_{AA'}\ot|e\>_{BB'}\ot|e\>_{E''}$  does not belong to the set $\Sigma=\{0,\ldots,d-1\}^{\times m}$. Thanks to these two facts, via properties of the trace norm and  Eq. (\ref{eq:close-alf}) we obtain:
\be
||\rho_m - \rho_m'||_1 < \epsilon_{t(m)},
\label{eq-close-rho}
\ee
where $\epsilon_{t(m)} \equiv 2\sqrt{\epsilon'_{t(m)}}$.

We will now reduce the problem of calculating the rate of the considered protocol to study of the Devetak-Winter formula of certain state as follows, which
shows that it is enough to analyse the state (\ref{eq:rhom})

\ben
\begin{split}
\forall_{m,k}\quad  mkK_D(\gamma_d) \geq  K_D(\gamma_d^{\ot m\, k})=  C_D(\psi_{\gamma_d}^{\ot m\, k})
\geq C_D(\rho_{in}^{\ot k}) \geq C_D(\rho_m^{\ot k})\geq  {\cal R}_{DW}(\rho_m^{\ot k}),
\end{split}
\een
In first inequality we use definition of $K_D$ which, as operational entanglement measure is regularized by definition. We then use the aforementioned equivalence
between $C_D$ and $K_D$ where $C_D$ is calculated on the purification of input state $|\psi_{\gamma_d}\>$ to system of Eve. We then use the fact that $C_D$ is non-increasing under operations of Alice and Bob. By ${\cal R}_{DW}$ we denote the rate of the (one-way) Devetak-Winter protocol run for $k$ copies of the state. In the last step we follow \cite{Christandl-Ekert-etal}, observing that it need not be optimal, yet is good enough for our purpose, hence the last inequality follows.
From \cite{DevetakWinter-hash} we have that:
\be
\label{cdw-rho}
{\cal R}_{DW}(\rho_m^{\ot k}) = k\left( I(AA':BB')_{\rho_m} - I(AA':E'')_{\rho_m}\right)  - o(k) \equiv kC_{DW}(\rho_m) - o(k),
\ee
hence, in the limit of large $k$ there is:
\be
K_D(\gamma_d) \geq {1\over m}\left( I(AA':BB')_{\rho_m} - I(AA':E'')_{\rho_m}\right) \equiv {1\over m} C_{DW}(\rho_m),
\ee
for all $m$. It is then enough to find lower bound on the RHS of the above inequality. We will find it using asymptotic continuity of entropy (and conditional entropy), based on the fact that $\rho_m$ is close to $\rho_m'$ for which it is rather easy to calculate the entropies. Lemma \ref{lemma-cdw} states that
\be
C_{DW}(\rho_m)\geq m\log d + (1- p_B)t(m)(K_D(\sigma)-\delta(m)) - (m+d-1) h\left( \frac{m}{m+d-1}\right)  - f(\epsilon_{t(m)},d,d_{t(m)},m)
\ee
with $\lim_{m\rightarrow \infty,\epsilon_n \rightarrow 0} {1\over m}f(\epsilon_{t(m)},d,d_{t(m)},m) = 0$. Thus in limit of large $m = s d$  ($ s \rightarrow \infty$) we obtain the desired lower bound. To see this observe that we can set  $\delta(m) = \frac{2 \log m \sqrt{d}}{\sqrt{m}}$ which assures $p_B\rightarrow 0$. Further, $\frac{t(m)}{m} \geq \frac{1}{d} - \delta(m)$, thus we have:
\be
t(m)\left( K_D(\widetilde{\sigma}) - \delta'(m)\right)  \geq \frac{1}{d} K_D(\widetilde{\sigma}) - \delta(m)K_D(\widetilde{\sigma}) -\delta'(m).
\ee
Finally $K_D(\widetilde{\sigma})\leq d\log d'$ where $d'$ is the dimension of the shield part of $\gamma_d$, hence the pre-last term in the above tends to $0$ with $m$ for our choice of $\delta(m)$. Now, since $h(1)=0$, the term $h(\frac{m}{m+d-1})$ vanishes asymptotically as well. Finally $\delta'(m)$ was arbitrarily small, hence the thesis. \qed




\begin{lemma}
\label{lemma-cdw}
For the state $\rho_m$ from Eq. \eqref{eq:rhom}, there is:
\be
\begin{split}
C_{DW}(\rho_m) \geq \,&m\log d + (1- p_B)t(m)\left( K_D(\sigma_0 \ot \cdots \ot \sigma_{d-1})-\delta(m)\right)\\
  &- (m+d-1)h\left( \frac{m}{m+d-1}\right)  - f(\epsilon_{t(m)},d,d_{t(m)},m)
\end{split}
\ee
with $\lim_{m\rightarrow \infty,\epsilon_n \rightarrow 0} {1\over m}f(\epsilon_{t(m)},d,d_{t(m)},m) = 0 $.
\end{lemma}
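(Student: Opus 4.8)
The plan is to first replace $\rho_m$ by the more transparent state $\rho_m'$ of Eq.~\eqref{rho-prime} using asymptotic continuity, and then compute $C_{DW}(\rho_m')$ essentially in closed form. Writing $C_{DW}(\rho)=I(AA'\!:\!BB')-I(AA'\!:\!E'')=H(BB')-H(AA'BB')+H(AA'E'')-H(E'')$, each of the four von Neumann entropies is taken on a marginal of a state living in a Hilbert space of dimension at most $d^{2m}\,d_{t(m)}^2\,(\dim E')^2\,|\mathcal T|$, whose logarithm is $O(m)$ for fixed $d,d'$ (since $\log d_{t(m)}\le t(m)\log d'^2=O(m)$ and $\log|\mathcal T|=O(\log m)$ by Eq.~\eqref{eq:nroftypes}). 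By Eq.~\eqref{eq-close-rho} we have $\|\rho_m-\rho_m'\|_1<\epsilon_{t(m)}$, so all corresponding marginals are $\epsilon_{t(m)}$-close in trace norm, and the Fannes--Audenaert continuity estimate gives $|C_{DW}(\rho_m)-C_{DW}(\rho_m')|\le O(\epsilon_{t(m)}\,m+h(\epsilon_{t(m)}))$, a quantity that becomes $o(m)$ in the joint limit $m\to\infty$, $\epsilon_{t(m)}\to 0$; it is absorbed into $f$.

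It then remains to evaluate $C_{DW}(\rho_m')$. On $AA'BB'$ the state $\rho_m'$ is a perfectly correlated classical state: in the ``good'' branch, of total weight $1-p_B$, the registers $A$ and $B$ both hold a string $i$ drawn uniformly from the $N_g:=\sum_{t\in G}|Q_t|=(1-p_B)d^m$ typical strings, while $A'$ and $B'$ both hold $k\in\{0,\dots,d_{t(m)}-1\}$, the pair $(i,k)$ being uniform; in the ``bad'' branch the state is the pure product error state $|e\rangle\langle e|_{AA'}\otimes|e\rangle\langle e|_{BB'}$, orthogonal to the good branch. Hence $H(AA')_{\rho_m'}=H(BB')_{\rho_m'}=H(AA'BB')_{\rho_m'}=h(p_B)+(1-p_B)(\log N_g+\log d_{t(m)})$, so $I(AA'\!:\!BB')_{\rho_m'}=h(p_B)+(1-p_B)(\log N_g+\log d_{t(m)})$. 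For the eavesdropper, in $\rho_m'$ (with $\widehat{\alpha}$ replaced by $\alpha_{ideal}$) the subsystem $E'$ carries the fixed state $\rho_{E'}$, uncorrelated with $k$, and Eve's only correlation with $AA'$ is through the type register $|t\rangle\langle t|_{\tilde E}$; since the type is a deterministic function of $i$, one gets $H(AA'E'')_{\rho_m'}=H(AA')_{\rho_m'}+(1-p_B)H(\rho_{E'})$ and $H(E'')_{\rho_m'}=h(p_B)+(1-p_B)(H(\rho_{E'})+H_T)$, where $H_T\le\log|\mathcal T|$ is the entropy of the good-branch type distribution. Therefore $I(AA'\!:\!E'')_{\rho_m'}=h(p_B)+(1-p_B)H_T$ and
\be
C_{DW}(\rho_m')=(1-p_B)\bigl(\log N_g+\log d_{t(m)}-H_T\bigr).
\ee

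To finish, I would bound the three summands. First, $(1-p_B)\log N_g=(1-p_B)\bigl(m\log d+\log(1-p_B)\bigr)=m\log d-p_B m\log d+(1-p_B)\log(1-p_B)$, and since $p_B\le p(m)$ decays exponentially in $m$, the last two terms are $o(m)$ and enter $f$. Second, the distillation subprotocol secures $\log d_{t(m)}\ge t(m)\bigl(K_D(\widetilde{\sigma})-\delta(m)\bigr)$, hence $(1-p_B)\log d_{t(m)}\ge(1-p_B)t(m)\bigl(K_D(\sigma_0\ot\cdots\ot\sigma_{d-1})-\delta(m)\bigr)$. Third, $0\le H_T\le\log|\mathcal T|\le(m+d-1)h\!\bigl(\tfrac{m}{m+d-1}\bigr)$ by Eq.~\eqref{eq:nroftypes} and $1-p_B\le 1$, so $-(1-p_B)H_T\ge-(m+d-1)h\!\bigl(\tfrac{m}{m+d-1}\bigr)$. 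Combining these with the Step-1 error yields the claimed lower bound, with $f(\epsilon_{t(m)},d,d_{t(m)},m)$ collecting the Fannes--Audenaert term together with the exponentially small $p_B$-corrections, so that $\tfrac1m f\to0$ in the joint limit. The main obstacle I expect is in the middle step: one must argue carefully that in $\rho_m'$ Eve's register is correlated with the key \emph{only} through the classical type $t$ (the content of the ``without loss of generality Eve knows the type'' reduction preceding Eq.~\eqref{eq:state-rhoin}) and that the eavesdropper part $\rho_{E'}$ of $\alpha_{ideal}$ is genuinely decoupled from $k$; once the structure of $\rho_m'$ as a perfectly correlated classical state carrying a type-flag is recognized, the remaining entropy bookkeeping and the sorting of which small terms go into $f$ are routine.
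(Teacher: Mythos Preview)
Your approach is essentially the paper's: pass from $\rho_m$ to $\rho_m'$ by continuity and then evaluate $C_{DW}(\rho_m')$ explicitly. Your closed-form computation of $C_{DW}(\rho_m')=(1-p_B)(\log N_g+\log d_{t(m)}-H_T)$ is in fact cleaner than the paper's, which computes $I(AA':BB')$ and $I(AA':E'')$ separately and bounds the latter via the Holevo quantity $\chi\le H(\{p(T)\})$; the two routes coincide numerically.

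There is, however, one genuine gap in your continuity step. You split $C_{DW}=H(BB')-H(AA'BB')+H(AA'E'')-H(E'')$ and apply Fannes--Audenaert term by term, asserting that the relevant log-dimension is $O(m)$. But two of the four entropies involve $E''=E'\tilde E$, and you never justify $\log(\dim E')=O(m)$: the register $E'$ is the eavesdropper's output of the LOPC subprotocol $\tilde\Lambda_{t(m)}$, which carries both the original purifying system and the entire public communication transcript, and nothing in the definition of an optimal key-distillation protocol bounds the length of that transcript. So the claimed $O(\epsilon_{t(m)}\,m+h(\epsilon_{t(m)}))$ continuity error is unproven as stated. The paper sidesteps exactly this issue by writing $I(AA':E'')=S(AA')-S(AA'|E'')$ and applying the Alicki--Fannes inequality to the conditional entropy, whose continuity constant depends only on $\dim(AA')=d^m d_{t(m)}$ and not on $\dim E''$. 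Your argument is repaired immediately by grouping $H(AA'E'')-H(E'')=H(AA'|E'')$ and invoking Alicki--Fannes there instead of Fannes twice; with that single change the rest of your proof goes through.
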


Inserting the state $\rho_m$ into Eq. \eqref{cdw-rho} gives
\be
C_{DW}(\rho_m) = I(AA':BB')_{\rho_m} - I(AA':E'')_{\rho_m}.
\ee
However, we will compute the above quantity for $\rho_m'$ and will use the asymptotic continuity of $C_{DW}$.

We will first lower bound $I(AA':BB')$. From the definition of quantum mutual entropy, it is equal to
\be
I(AA':BB') = S(AA') - S(AA'|BB'),
\ee
where $S(AA'|BB')$ is the quantum conditional entropy. We will also use the Fannes inequality for the asymptotic continuity of the von Neumann entropy \cite{Fannes1973}, i.e. for $||\rho_1 - \rho_2||_1 \leq \epsilon$ we have
\be
\label{Fannes-cont}
|S(\rho_1) - S(\rho_2)| \leq ||\rho_1 - \rho_2||_1 \log d + \eta(||\rho_1 - \rho_2||_1),
\ee
where $\eta(x) = -x \log x$, and the Alicki-Fannes inequality for continuity of the conditional entropy \cite{Alicki-Fannes-04}:
\be
\label{Fannes-Alicki-cont}
|S_{X|Y}(\rho_1) - S_{X|Y}(\rho_2)| \leq 4||\rho_1 - \rho_2||_1 \log d_X + h\left( ||\rho_1 - \rho_2 ||_1\right),
\ee
where here $X,Y$ denotes two subsystems of $\rho_1$ and $\rho_2$.

Direct calculations of entropies for the state $\rho_m'$ give for each part
\be
S(AA') = (1 - p_B) \log d_{t(m)} + S\left( \sum_{T} p(T) \tilde{\rho}_T\right) ,
\label{eq:entrAA}
\ee
where for $T =\left\{t: t \in G \vee t = t_B \right\}$, $p(T) = p(t)$ for $t \in G$ and $p(T = t_B) = p_B$. Then, for $T \in G$, $\rho_T = \rho_t = \tr_B \sum_{i \in Q_t}{1\over|Q_t|} |ii\>\<ii|_{AB}$
and for $T = t_B$, $\tilde{\rho}_T = |e\>\<e|_{A}$.

We know that
\be
\log d_{t(m)} \geq t(m)\left( K_D(\sigma) - \delta(m)\right) .
\label{eq:tm-delta}
\ee
The entropy of $BB'$ is the same as $S(AA')$. It remains to calculate the joint entropy $S(AA'BB')$, which is equal to $S(AA')$, since the systems $AA'$ and $BB'$
are maximally correlated.
We are going to lower bound $I(AA':BB')_{\rho_m}$ as follows:
\be
\left| I(AA':BB')_{\rho_m} - I(AA':BB')_{\rho_m'}\right| \leq \left| S(AA')_{\rho_m} -S(AA')_{\rho_m'}\right|   + \left| S(AA'|BB')_{\rho_m} -S(AA'|BB')_{\rho_m'}\right| .
\ee
Now in the RHS we apply the asymptotic continuity of the entropies obtaining :
\be
\left| I(AA':BB')_{\rho_m} - I(AA':BB')_{\rho_m'}\right| \leq 5\epsilon_{t(m)} \log \left( d^m\,d_{t(m)}\right)  + 3 h\left( \epsilon_{t(m)}\right) .
\ee
hence for $g_1'\left( \epsilon_{t(m)},d,d_{t(m)},m\right)  = 5\epsilon_{t(m)} \log \left( d^m\,d_{t(m)}\right)  + 3 h\left( \epsilon_{t(m)}\right) $ we have:
\be
I(AA':BB')_{\rho_m} \geq (1 - p_B) \log d_{t(m)} + S\left( \sum_{T} p(T) \tilde{\rho}_T\right)  - g_1\left( \epsilon_{t(m)},d,d',m\right).
\ee
Finally we observe that
\be
\left| \left| \sum_{T} p(T) \tilde{\rho}_T - \frac{\mathbf{1}}{d^m}\right| \right|_1 \leq 2 p_B.
\ee
thus, again via the Fannes inequality:
\be
I(AA':BB')_{\rho_m} \geq (1 - p_B) \log d_{t(m)} + m\log d - 2p_B m\log d - h(2 p_B) - g_1'(\epsilon_{t(m)},d,d_{t(m)},m)
\label{eq:ab}
\ee
denote
\be
g_1\left( \epsilon_{t(m)},d,d_{t(m)},m\right)  \equiv 2p_B m\log d + h(2 p_B) + g_1'\left( \epsilon_{t(m)},d,d_{t(m)},m\right).
\ee


We are left with calculating $I(AA':E'')$. It is equal to
$ I(AA':E'') = S(AA') - S(AA'|E'')$, where $S(AA'|E'') = S(AA'E'') - S(E'')$ is the conditional entropy.
The entropy $S(AA')$ is already bounded in (\ref{eq:entrAA})
Next, we have
\be
S(E'') = (1 - p_B) S(\rho_{E'}) + H\left( \left\{ p(T) \right\}\right) ,
\ee
where $\rho_{E'} = \tr_{A'B'} \alpha_{ideal}$ and $H(\cdot)$ is the Shannon entropy. Finally,
\be
S(AA'E'') = H\left( \left\{ p(T) \right\}\right)  + (1 - p_B)\left( \log d_{t(m)} + S(\rho_{E'})\right)  + \sum_{t \in G} p(t) S(\rho_t).
\ee

Summing up, we have
\ben
\begin{split}
I(AA':E'')_{\rho_m'}& \leq - H\left( \left\{ p(T) \right\}\right)  + H\left( \left\{ p(T) \right\}\right)  + S\left( \sum_{T} p(T) \tilde{\rho}_T\right)  - \sum_{t \in G} p(t) S(\rho_t)\\
&= S\left( \sum_{T} p(T) \tilde{\rho}_T\right)  - \sum_{T} p(T) S(\tilde{\rho}_T) =: \chi,
\end{split}
\een

By the asymptotic continuity of entropy and conditional entropy, we have:
\be
\left| I(AA':E'')_{\rho_m} - I(AA':E')_{\rho_m'}\right|  \leq \epsilon_{t(m)} \log d_{t(m)} + \eta(\epsilon_{t(m)}) + 4\epsilon_{t(m)} \log \left( d^m\,d_{t(m)}\right)  + 2h\left( \epsilon_{t(m)}\right) ,
\ee
but, since $I(AA':E')_{\rho_m'} \leq \chi$,
\be
I(AA':E'')_{\rho_m} \leq \chi + \epsilon_{t(m)} \log d_{t(m)} + \eta(\epsilon_{t(m)}) + 4\epsilon_{t(m)} \log (d^m\,d_{t(m)}) + 2h(\epsilon_{t(m)} ).
\label{eq:ae}
\ee
Let us set
\be
g_2(\epsilon_{t(m)},d,d',m) \equiv \epsilon_{t(m)} \log d_{t(m)} + \eta(\epsilon_{t(m)}) + 4\epsilon_{t(m)} \log (d^m\,d_{t(m)}) + 2h(\epsilon_{t(m)} ).
\ee
Knowing that the Holevo quantity is upper bounded by the entropy of the signal distribution we have \cite{Nielsen-Chuang}
\be
\chi =S\left( \sum_{T} p(T) \tilde{\rho}_T\right)  - \sum_{T} p(T) S(\tilde{\rho}_T) \leq H(\left\{p(T)\right\}) \leq H(\{p(t)\}).
\ee
Now entropy of the "distribution of types" is less than the $\log $ of the support of this distribution which is the number of types.
Hence by (\ref{eq:nroftypes}) we have:
\be
S\left( \sum_{T} p(T) \tilde{\rho}_T\right)  - \sum_{T} p(T) S(\tilde{\rho}_T) \leq (m+d-1) h\left( \frac{m}{m+d-1}\right) .
\ee
and thus $I(AA':E'') \leq (m+d-1) h\left( \frac{m}{m+d-1}\right)  + g_2\left( \epsilon_{t(m)},d,d_{t(m)},m\right) $.

Summing up bounds (\ref{eq:ab}) and (\ref{eq:ae}) we obtain:
\be
\begin{split}
C_{DW}(\rho_m) \geq  (&1 - p_B) \log d_{t(m)} + m\log d -g_1\left( \epsilon_{t(m)},d,d_{t(m)},m\right) \\
&- (m+d-1) h\left( \frac{m}{m+d-1}\right)  - g_2\left( \epsilon_{t(m)},d,d_{t(m)},m\right) .
\end{split}
\ee
Due to (\ref{eq:tm-delta}) there is:
\be
\begin{split}
C_{DW}(\rho_m) \geq  (&1 - p_B) t(m)\left( K_D(\widetilde{\sigma})  - \delta(m)\right) + m\log d -g_1\left( \epsilon_{t(m)},d,d_{t(m)},m\right) \\
&- (m+d-1) h\left( \frac{m}{m+d-1}\right)  - g_2\left( \epsilon_{t(m)},d,d_{t(m)},m\right).
\end{split}
\ee
Define now $f= g_1 + g_2$ and observe, that for $m = s d$ with $s \rightarrow \infty$ such defined $f\left( \epsilon_{t(m)},d,d_{t(m)},m\right) $ divided by $m$ approaches 0 with $\epsilon_{t(m)}$ approaching $0$. Indeed apart from bounded terms in $f$ such as $h\left( \epsilon_{t(m)}\right) $ which are clearly sublinear in $m$, we have terms
$O\left( \epsilon_{t(m)} \log \left( d^m d_{t(m)}\right) \right) $. Such term breaks into two $\epsilon_{t(m)}m \log d$ and $\epsilon_{t(m)}\log d_{t(m)}$. The first term is sublinear in $m$
if only $\epsilon_{t(m)} \rightarrow 0$. This is assured by $ m = s d$ so that $t(m) \approx \frac{m}{d} \rightarrow \infty$. The second term is equal to
\be
\begin{split}
\epsilon_{t(m)} \log {\left( (d')^{d}\right) }^{t(m)} =\epsilon_{t(m)}t(m) d \log d'\leq \epsilon_{t(m)} \left(  {m\over d} + \delta(m)\right) d\log d' =  \epsilon_{t(m)} m \log d' + \epsilon_{t(m)} \delta(m) d \log d'.
\end{split}
\ee
The first term of the RHS of last equality is sublinear in $m$ as we argued above, while the second vanishes for our choice of
$\delta(m) = \frac{2 \log m \sqrt{d}}{\sqrt{m}}$. Hence the assertion follows.
\qed

\begin{proof}[ Proof of Theorem~\ref{thm:upper-bound-er}]
	The first part of the proof is based on Theorems 3 and 4 from \cite{keyhuge}. For the sake of completeness we present here the crucial steps which are necessary to prove our result. From Theorem 3 of the above-mentioned paper for the state $\gamma_{ABA'B'}^{\ot n}$ we have $E_r\left( \gamma_{ABA'B'}^{\ot n}\right) \leq \operatorname{log}d^n+\frac{1}{d^n}\sum_{k=0}^{d^n-1}E_r(\rho_k)$, where $k=(i_1,\ldots,i_n)$ is a multi-index with $i_l\in \left\lbrace 0,1,\ldots,d-1 \right\rbrace $ for every $l\in \left\lbrace 1,\ldots,n\right\rbrace $, and $\rho_k=\sigma_{i_1}\ot \cdots \ot \sigma_{i_n}$. This equation divided by $n$ from the both sides can be easily rewritten as $\frac{1}{n}E_r\left( \gamma_{ABA'B'}^{\ot n}\right)\leq \operatorname{log}d+\frac{1}{nd^n}\sum_{k=0}^{d^n-1}E_r(\rho_k)$, where the left-hand-side for $n\rightarrow \infty$ approaches $E_r^{\infty}\left(\gamma_{ABA'B'}^{\ot n}\right) $.  First of all, let us recall that $E_r(\rho_k)=E_r(\rho_{k'})$ which has the same numbers of occurrence of symbols from the set $\left\lbrace 0,\ldots,d-1 \right\rbrace $, i.e. when $k=k'$. Additionally, considering only those $\rho_k$ for which $k$ is $\delta-$strongly typical ($\delta>0$) we have
	\be
	\label{typicality}
	\forall a\in \left\lbrace 0,\ldots,d-1\right\rbrace\quad  \left|\frac{a(k)}{n}-\frac{1}{d}\right|< \delta,
	\ee
	where $a(k)$ denotes frequency of a symbol $a$ in a sequence $k$. The set of such $k$ we denote as $\mathcal{ST}^n_{\delta}$. Rewriting $\sum_k E_r(\rho_k)$ over $k\in \mathcal{ST}_{\delta}^n$ and $k\notin \mathcal{ST}_{\delta}^n$ due to\cite{keyhuge} we have
	\be
	\label{s1}
	\frac{1}{nd^n}\sum_{k=1}^{d^n-1}E_r(\rho_k) \leq \frac{1}{nd^n}\sum_{k\in \mathcal{ST}_{\delta}^n}E_r(\rho_k)+\epsilon \operatorname{log}d,
	\ee
	where $\epsilon>0$. Now, our goal is to improve the bound on $E_r(\rho_k)=E_r\left(\sigma_0^{\ot m_0}\ot \sigma_1^{\ot m_1}\ot \cdots \ot \sigma_{d-1}^{\ot m_{d-1}} \right)$ when $n\rightarrow \infty$, where $\sigma_i$ are conditional states on $A'B'$.
	Defining $m\equiv \operatorname{min}\left\lbrace m_0,m_1,\ldots,m_{d-1} \right\rbrace $ we can make the following observation using the fact on strong typicality (\ref{typicality})
	\be
	\forall i\in\{0,1,\ldots, d-1\} \ \left|\frac{m_i}{n}-\frac{1}{d} \right|<\delta \ \wedge \ \left|\frac{m}{n}-\frac{1}{d} \right|<\delta.
	\ee
	By the triangle inequality we have
	\be
	\label{5}
	\left|\frac{m}{n}-\frac{m_i}{n} \right|\leq \left|\frac{m}{n}-\frac{1}{d} \right|+\left|\frac{1}{d}-\frac{m_i}{n} \right|<2\delta.
	\ee
	Using the definition of the parameter $m$ and triangle inequality we are able to rewrite the first term on the RHS of (\ref{s1}) as
	\be
	\label{6}
	\begin{split}
		&\frac{1}{nd^n}\sum_{k\in \mathcal{ST}_{\delta}^n}E_r\left(\sigma_0^{\ot m_0}\ot \sigma_1^{\ot m_1}\ot \cdots \ot \sigma_{d-1}^{\ot m_{d-1}} \right) \\ &\leq \sum_{k\in \mathcal{ST}_{\delta}^n}\frac{m}{nd^n}\frac{1}{m}E_r\left(\widetilde{\sigma}^{\ot m} \right)+\sum_{k\in \mathcal{ST}_{\delta}^n}\frac{\widetilde{m}_0}{nd^n}\frac{1}{\widetilde{m}_0}E_r\left(\sigma_0^{\ot \widetilde{m}_0} \right)\\
		&+\cdots+ \sum_{k\in \mathcal{ST}_{\delta}^n}\frac{\widetilde{m}_{d-1}}{nd^n}\frac{1}{\widetilde{m}_{d-1}}E_r\left(\sigma_{d-1}^{\ot \widetilde{m}_{d-1}}\right),
	\end{split}
	\ee
	where $\widetilde{\sigma}=\sigma_0\ot \sigma_1\ot \cdots \ot \sigma_{d-1}$ and $\forall i \ \widetilde{m}_i\equiv m_i-m$. The first term on the RHS of (\ref{6}) by the condition given by (\ref{typicality}) and the definition of the regularized relative entropy we get $\left(\frac{1}{d}+\delta \right)(1-\delta)E_r^{\infty}\left(\widetilde{\sigma}\right)$, which approaches $\frac{1}{d}E_r^{\infty}\left(\widetilde{\sigma}\right)$, since $\delta$ can be arbitrarily small. Now, we have to bound the rest of terms (the tail)  in (\ref{6}). To do so, we use the definition of the relative entropy of entanglement $E_r$ and the fact that $\mathbf{1}/d\in \mathcal{SEP}$:
	\be
	\label{Erbound}
	\forall \widetilde{\sigma} \quad E_r(\widetilde{\sigma}) \equiv\mathop{\inf}\limits_{\omega \in \mathcal{SEP}} D(\widetilde{\sigma}|\omega)\leq D\left(\widetilde{\sigma} | \frac{\mathbf{1}}{d}\right) ,
	\ee
	where $D(\cdot|\cdot)$ denotes the relative entropy. Thanks to this $D(\widetilde{\sigma} | \frac{\mathbf{1}}{d})=-\tr\widetilde{\sigma} \log \widetilde{\sigma} - \tr \widetilde{\sigma} \log (\mathbf{1}/d)=\log d-S(\widetilde{\sigma})\leq \log d$, since $\forall \widetilde{\sigma}  \ S(\widetilde{\sigma})\geq 0$, where $S(\cdot)$ is the von Neumann entropy. Because of that, and the subadditivity of $E_r$ we can bound every term from the tail in (\ref{6}) as $(1/\widetilde{m}_i) E_r(\widetilde{\sigma}^{\ot \widetilde{m}_i}) \leq E_r(\widetilde{\sigma})\leq \log d_i$. Finally, we are in the position to bound the tail in (\ref{6}), namely by (\ref{5}) and by above considerations the desired bound is of the form $2\delta(1-\delta)\sum_i \operatorname{log}d_i$. Now since  the sum $\sum_i \operatorname{log}d_i$ is constant and $\delta$ can be arbitrarily small we conclude that the tail in (\ref{6}) vanishes.
\end{proof}

\section{Proofs from Section~\ref{sec:properties}}
\begin{proof}[Proof of Observation~\ref{closed:sets}]
	For the proof it amounts to observe that $K_D$ is superadditive, $E_r$ is subadditive and $I_{sq}$ is additive \cite{Winter-squashed-ent},
	while the latter two are bounds on $K_D$ \cite{BB84,pptkey,keyhuge,Winter-squashed-ent} .
\end{proof}

\begin{proof}[Proof of Proposition~\ref{prop:charac}]
	We have by the definition $\mathcal{R}_{ir} \subseteq \mathcal{IR}$. We show that $\mathcal{IR} \subseteq \mathcal{R}_{ir}$ unless a bound key state exists.
	The proof goes by contradiction. Suppose there exists $\gamma_{0} \in \mathcal{IR}\setminus \mathcal{R}_{ir}$. We will show that it implies existence of
	bound key states. Let it have the form
	$ \gamma_{0} = {1\over d}\sum_{i,j=0}^{d-1} |ii\>\<jj|\ot U_i\sigma U_j^{\dagger}$ and $\sigma_j \equiv U_j\sigma U_j^{\dagger}$.
	For the state $ \gamma$ by Theorem \ref{th:main} we have
	\be
	K_D( \gamma_{0}) \geq \log d + {1\over d}K_D(\sigma_0\otimes \cdots \otimes \sigma_{d-1}),
	\ee
	since the protocol from Theorem \ref{th:main} may be suboptimal.
	
	Since $\gamma_{0}$ is from $\mathcal{IR}$, we have $K_D( \gamma_{0}) = \log d$
	which proves $K_D(\sigma_0\otimes \cdots \otimes\sigma_{d-1})=0$. By monotonicity of $K_D$ this proves, that for all $i$ we have
	$K_D(\sigma_i) = 0$, as $\sigma_i$ can be obtained from $\sigma_0\ot \cdots \ot \sigma_{d-1}$ by LOCC.
	
	On the other hand, $ \gamma_{0}$ does not belong to $\mathcal{R}_{ir}$, that is $K_D( \gamma_{0}) \neq E_r( \gamma_{0})$.
	More precisely $\log d < E_r( \gamma_{0})$, since $K_D\leq E_r$ in general. 
	Due to subadditivity of $E_r$ via Theorem \ref{thm:upper-bound-er}, there is
	\be
	\log d < E_r(\gamma_{0}) \leq \log d + {1\over d} E_r(\sigma_0\ot \cdots \ot \sigma_{d-1})
	\label{eq:old-er-bound}
	\ee
	and since $E_r( \gamma_{0}) > \log d$, we have by the above the inequality that
	\be
	E_r(\sigma_0\ot \cdots \ot \sigma_{d-1}) > 0.
	\ee
	From the faithfulness of the relative entropy of entanglement \cite{Piani} it follows that at least one $\sigma_i$ has to be entangled. Otherwise, if all $\sigma_i$ were separable, it would mean, that the product $\sigma_0\ot \cdots \ot \sigma_{d-1}$ is also separable and hence $E_r(\sigma_0\ot \cdots \ot \sigma_{d-1})=0$.
	Recall however, that as we argued above, $K_D(\sigma_{i_0}) = 0$ therefore $\sigma_i$ has the bound key.
	Thus if there are no bound key states, we have $\mathcal{IR} \subseteq \mathcal{R}_{ir}$ which proves $\mathcal{IR} = \mathcal{R}_{ir}$ under
	no key bound states condition as claimed.
\end{proof}

One can see that proof of Proposition~\ref{prop:charac} can be obtained using bound~\eqref{Er1} of Theorem~\ref{thm:upper-bound-er} for regularized relative entropy by substituting it into~\eqref{eq:old-er-bound} and using argument of faithfulness.

\begin{proof}[Proof of Corollary~\ref{cor4}]
	From paper \cite{HHH1997-distill} we know that any state in $2\otimes 2$ which is entangled, has non-zero distillable entanglement.  Since we work with irreducible private states as a direct consequence of  Proposition \ref{prop:charac} if follows that $K_D(\sigma_i)=0$ and therefore each state $\sigma_i$ is separable as there are no bound key states in $2\otimes 2$.
	The second part follows from the application of
	Lemma 3 of \cite{keyhuge}, as $\sqrt{XX^{\dagger}}$ and $\sqrt{X^{\dagger}X}$ are separable, hence have the positive partial transposition and this lemma applies.
\end{proof}

\section{Proofs from Section~\ref{sec:prop-kundist}}
\begin{proof}[Proof of Observation~\ref{obs-keyun}]
	$K_D(\gamma)=\log d$ implies via Theorem \ref{th:main}, that $K_D(\sigma_0\otimes \cdots \otimes\sigma_{d-1})\equiv K_D(\widetilde{\sigma})=0$.
	However the state $\hat{\gamma}$ can be obtained from $\widetilde{\sigma}$ via an LOCC operation. Indeed, attaching
	a state $\rho_a={1\over d}\sum_i |ii\>\<ii|$ is LOCC as the latter state is separable. Then making the control-partial trace over all but
	the $i-th$ state $\sigma_i$  from $\widetilde{\sigma}$ is an another bi-local LOCC operation. The partial trace over the ancillary state
	leaves us with $\hat{\gamma}$. Since the LOCC operation on the key-udistillalbe state leaves it key-undistillable, for $n=1$ the assertion follows. The proof
	for $n>1$ is analogous: just with $\rho_a^{\ot n}$ in the place of $\rho_a$.
\end{proof}

\begin{proof}[Proof of Theorem~\ref{thm:nec-ku}]
	Let us set $U_1\ot U_2$ arbitrarily. We have a chain of (in)equalities, which we explain below.
	\be
	\begin{split}
		&0 \geq C_{DW}((U_1\ot U_2\rho U_1^{\dagger}\ot U_2^{\dagger})_{ccq})\\
		& = C_{DW}( (U U_1\ot U_2\rho U_1^{\dagger}\ot U_2^{\dagger} U^{\dagger})_{ccq})\\ &\equiv C_{DW}((\rho')_{ccq})  \geq
		C_{DW}\left( \left( \tr_{A'B'}\rho'\right) _{ccq}\right).
	\end{split}
	\ee
	The first inequality comes from the fact that composition of $U_1\ot U_2$ and the Devetak-Winter protocol
	applied to subsystem of the state $U_1\ot U_2^{\dagger}\rho U_1\ot U_2^{\dagger}$ is a particular key-distillation protocol, which on a key-undistillable state cannot
	achieve a positive rate (in fact $C_{DW}$ as one can see in Eq. (\ref{eq:cdw}) can be negative). The ccq state considered there
	is with respect to subsystem $AB$ and computational basis, so as in the next equality.
	This equality is due to the aforementioned fact that unitary transformation $U^{\dagger}$ does not change the ccq state (w.r.t. to subsystem $AB$ and control-basis of $U$) of the state $U_1\ot U_2^{\dagger}\rho U_1\ot U_2^{\dagger}$ (see Theorem 3.3 \cite{KH-phd}). Now the partial trace of the subsystem $A'B'$ cannot increase the key, so that the final inequality holds, and the assertion follows.
\end{proof}

\begin{proof}[Proof of Corollary~\ref{cor:num4}]
	The statement of  corollary follows from the fact that the set of key-undistillable states is closed under LOCC operations.
	We also use the fact that $\eta(x)\leq h(x)$ with $h$ being the binary Shannon entropy function and that binary Shannon entropy is less than 1. 
	Idea of the embedding is based on least common multiple for dimensions. Suppose that $\Lambda(\rho)=\rho_{d_1}$ with $\dim \rho_{d_1}=d_1$. Then we define $\rho'\equiv \rho_{d_1}\ot |0\>\<0|^{\ot 2}_d$ and $\gamma'\equiv \gamma_d \ot |0\>\<0|_{d_1}^{\ot 2}$ with $\dim |0\>\<0|_d=d$, $\dim |0\>\<0|_{d_1}=d_1$. Because of that we have $\dim \rho'=\dim \gamma'$ and the trace norm $||\rho'-\gamma'||_1$ can be calculated. Moreover such embedding belongs to the set of $LOCC$ operations, since we add only additionally ancilla systems. 
\end{proof}
Now we are in the position to prove theorem \ref{thm:cbound}.
\begin{proof}[Proof of Theorem~\ref{thm:cbound}]
	We will use the fact, that $C_{DW}$ is asymptotic continuous. Clearly for the singlet state it reads $\log d$ and
	for the state emerging from the key-undistillable state via control-unitary and the partial trace, must be zero through Theorem \ref{thm:nec-ku}. This in turn implies the bound on the distance between states. Let us note here, that it is actually rather uncommon usage of the asymptotic continuity,
	aiming at the proof that states must be far as continuous faction differs on them enough, rather than finding evaluation of functions close on some close states.
	
	We first invoke the asymptotic continuity of the entropy and the conditional entropy function \cite{Alicki-Fannes,Synak05-asym}.
	As long as the states satisfy $||\rho_1 - \rho_2||_1 \leq {1\over 2}$ (each on the dimension $d$), we have:
	\be
	\left| S(\rho_1) - S(\rho_2)\right|  \leq ||\rho_1 -\rho_2||_1\log d + \eta\left( ||\rho_1 - \rho_2||_1\right) .
	\label{eq:s}
	\ee

	Let us fix $0<\epsilon' < 1$, and note, that $\gamma$ by the definition is of the form $U |\Psi_+\>\<\Psi_+| \ot \sigma_{A'B'} U^{\dagger}$ for $U=\sum_{ij} |ij\>\<ij|\ot U^{(ij)}_{A'B'}$, some state $\sigma$ on $A'B'$ and $|\Psi_+\>\<\Psi_+| =\sum_{ij}{1\over d}|ii\>\<jj|$ the singlet state. Let us note here, that we  assumed that the control basis of twisting $U$ is the computational basis, since the proof for other product basis $\{|e_i\>\ot|f_i\>\}$ is analogous.
	
	We first note that due to the monotonicity of operation under measurements, for an arbitrary state $\rho$ and the private state $\gamma$ there is:
	\be
	\epsilon' = ||\gamma - \rho ||_1 \geq \left| \left| \,|\Psi_+\>\<\Psi_+| - \tr_{A'B'} U^{\dagger}\rho U\right| \right| _1,
	\label{eq:monot}
	\ee
	where $\rho' := \tr_{A'B'} U^{\dagger}\rho U$ the result of "untwisting" $U^{\dagger}$ and the partial trace on $\rho$.
	
	As announced, we use now the fact that $C_{DW}\left( (|\Psi_+\>\<\Psi_+|)_{ccq}\right) = \log d$ and $C_{DW}\left( (\rho')_{ccq}\right)  \leq 0$ due to Theorem \ref{thm:nec-ku}, as $K_D(\rho)=0$.
	Hence, unless $\epsilon' > 1$, throughout asymptotic continuity of entropic functions,
	we have the following chain of inequalities:
	\be
	\begin{split}
		\log d \leq &\left| C_{DW}\left( (|\Psi_+\>\<\Psi_+|) _{ccq}\right) - C_{DW}\left( (\rho')_{ccq}\right) \right| \\ &\leq \left| \log d - S(\tilde{\rho}_A)\right|  + \left| \log d - S(\tilde{\rho}_B)\right| \\
		& +\left| \log d - S(\tilde{\rho}_{AB})\right|  + I(A:E)_{\tilde{\rho}},
	\end{split}
	\label{eq:cdw-bound}
	\ee
	where $\tilde{\rho} = (\rho')_{ccq}$ and we consider particular subsystems of $\tilde{\rho}$.
	Following \cite{KH-phd} we now observe, that $I(A:E)_{\rho'_{ccq}} \leq S(A:B)_{\rho'}$. Since $\rho'$ is close to the singlet state which is pure it satisfies:
	\be
	\begin{split}
		S(A:B)_{\rho'}\leq &||\,|\Psi_+\>\<\Psi_+| - \rho'||_1\log d^2\\
		& + \eta\left( ||\,|\Psi_+\>\<\Psi_+| - \rho'||_1\right).
	\end{split}
	\ee
	It is crucial to observe now, that the state $\rho'$ after measurement in basis of untwisting is equal to the $AB$  subsystem of the state $\widetilde{\rho}$, by definition of the action of taking a $ccq$ state. We have then:
	\be
	\begin{split}
		\epsilon_1&=||\widetilde{\rho}_A-\mathbf{1}/d||_1\leq \epsilon\\
		\epsilon_2&=||\widetilde{\rho}_B-\mathbf{1}/d||_1\leq \epsilon\\
		\epsilon_3&=\left| \left| \widetilde{\rho}_{AB}-\frac{1}{d}\sum_i |ii\>\<ii|\right| \right| _1\leq \epsilon
	\end{split}
	\ee
	and hence from~\eqref{eq:cdw-bound} and the above inequalities via~\eqref{eq:s} there is
	\be
	\begin{split}
		\log d\leq &\epsilon_1 \log d+\eta(\epsilon_1)+\epsilon_2 \log d+\eta(\epsilon_2)+\\
		&\epsilon_3 \log d^2+
		\eta(\epsilon_3)+\epsilon \log d^2+\eta(\epsilon).
	\end{split}
	\ee
	Now we use the fact, that function $\eta(x)$ is increasing for $x\in [0 \ , 0.367879]$. This assures that we can upper bound the RHS of above by $\epsilon$ in place of $\epsilon_1,\epsilon_2,\epsilon_3$. We then obtain:
	\be
	\begin{split}
		\log d &\leq 2\epsilon \left( \log d + \eta(\epsilon)\right)  + 2\left(\epsilon \log d^2 + \eta(\epsilon)\right)  \\
		&= 6\epsilon\log d + 4\eta(\epsilon)
		\label{eq:epsilons}
	\end{split}
	\ee
	which implies
	\be
	\epsilon \geq \frac{1}{6} - \frac{2 \eta(\epsilon)}{3\log d},
	\ee
	as claimed.
\end{proof}
\section{Proofs from Section~\ref{sec:approx}}
\begin{proof}[Proof of Observation~\ref{obs2a}]
	For the first part of theorem we base on results of section \ref{sec:prop-kundist}. Namely, from the proof of Theorem \ref{thm:nec-ku} it follows that
	\be
	K_D(\rho) \geq C_{DW}( (\tr_{A'B'}(U^{\dagger}\rho U))_{ccq})
	\ee
	where $U$ is the unitary which defines $\widetilde{\gamma}$.
	Now if $||\rho - \widetilde{\gamma} ||= \epsilon' \leq \epsilon$  by asymptotic continuity of $C_{DW}$ obtained in inequalities (\ref{eq:cdw-bound}-\ref{eq:epsilons}), there is
	\be
	C_{DW}((\tr_{A'B'}(U^{\dagger}\rho U)_{ccq}) \geq \log d - 6\epsilon' \log d - 4\eta(\epsilon'),
	\ee
	which for $\epsilon'\leq \epsilon \leq {1\over 2}$ implies the same inequality for $\epsilon$ due to monotonicity of the function $\eta$ on interval $[0 \ ,0.367879]$ This proves the first part of thesis of this theorem.
	For the second part, by the assumption there exists an irreducible private state $\tilde{\gamma}$ close by $\epsilon$ in the trace norm to $\rho$ which has the property that $M(\rho) = \log d$. We use the asymptotic continuity of $M$ to obtain that $\left| M(\rho) - M(\tilde{\gamma})\right| \leq O\left( \epsilon \log d+ h(\epsilon)\right) $. Hence
	$K_D(\rho) \leq M(\rho) \leq \log d + O\left( \epsilon \log d + h\left( \epsilon\right) \right) $.
\end{proof}
\section{Multipartite private states}
\label{sec:c}
\begin{definition} (compare \cite{AugusiakH2008-multi})
Let $U_{i}$ be some unitary operations for every $i$ and
let $\varrho_{A'B_1'\ldots B_l'}$ be a density matrix acting on $\mathcal{H}'$. By the {\it
multipartite private state} or {\it multipartite pdit} of $l+1$ subsystems we mean the
following
\begin{equation}\label{mpbit}
\gamma_{AB_1\ldots B_lA'B_1'\ldots B_l'}^{(d)}=\frac{1}{d}\sum_{i,j=0}^{d-1}|{{i}^{(1)}\ldots
{i}^{(l)}}\>\<{{j}^{(1)}\ldots {j}^{(l)}}|\ot
U_{i}\varrho_{A'B_1'\ldots B_l'}U_{j}^{\dag}.
\end{equation}
\end{definition}
Naturally, for $l=1$ the above reproduces the bipartite private
states. Moreover, in that case, the multipartite distillable key satisfies \cite{AugusiakH2008-multi}
\begin{equation}\label{DWbound1}
C_{D}(\varrho_{AB_1\ldots B_lE})\geq
\min_{j\in\{1,\ldots,l\}}I(A\!:\!B_{j})(\varrho_{AB_{j}E}^{(\mathrm{cqq})})-
I(A\!:\!E)(\varrho_{AB_{j}E}^{(\mathrm{cqq})}).
\end{equation}
Here, $\varrho_{AB_{j}E}^{(\mathrm{cqq})}$ denotes the cqq
state, which arises from the general one by tracing out all the
parties but the first and $j$th one and Eve.

\subsection{Proof of Theorem \ref{th:main} for multipartite private states }

In this section we will argue, that the same lower bound as given in Theorem \ref{th:main} holds in the case of multipartite private states.

We show this by studying the rate of the analogous protocol as described in Section \ref{sec:protocol},
with $B \rightarrow B_1,\ldots,B_l$ for some finite number $l$ of the parties, so that the total private state
has $l+1$ subsystems ($l$th+1 denoted as A for the Alices' subsystem). We note that the protocol is realizable,
since multipartite private state has maximally correlated subsystems in secure basis, so that each
of the parties can perform control-sorting of the conditional states $\sigma_i$ which is the main step of the protocol.

We will show now the cases when the proof for bipartite case needs to be modified. The state in (\ref{eq:state-in}) is the same with the only change that 
\begin{enumerate}
\item  $\rho_t = \sum_i \frac{1}{|Q_t|} |i{\bf i}\>\<i{\bf i}|_{AB_1...B_l}$,
where $\bf i$ is the multiindex with $l$ values $i$ each for one of several Bob's system 
\item the error state has more error flags:
$|e\>_A\ot|e\>_{B_1}\ot \cdots \ot|e\>_{B_l}\ot|e\>_E$. 
\end{enumerate}

These two changes are the only modifications of the state $\rho_{in}$ of Eq. (\ref{eq:state-rhoin}) to the multipartite case. Further, although the state $E'$ is different than in bipartite case, due to larger number of communication, we never used actual form of its state, hence it is still denoted as $\rho_{E'}$. In particular in Eq. (\ref{eq:close-alf}) the only difference is that the state $\alpha_{ideal}$ is of the form:
\be
\alpha_{ideal} = {1\over d_{t(m)}}\sum_{k=0}^{d_{t(m)}-1} |k {\bf k}\>\<k{\bf k}|_{AB_1...B_l} \ot \rho_{E'}
\ee
with ${\bf k} = k\ldots k$ for $l$ systems.
In this step we have used analogous fact to $K_D = C_D$ in multipartite case give in \cite{AugusiakH2008-multi}. Now the analogues of states $\rho_m$ and $\rho_m'$
(denote them $\rho_m{(l)}$ and ${\rho_m{(l)}}'$), differ from $\rho_m$ and $\rho_m'$ respectively only by the two aforementioned changes $1$ and $2$ as well
as difference in definition of $\alpha_{ideal}$ and $\hat{\alpha}_{ideal}$.
It is then clear that
\be
||\rho_{m}{(l)} - {\rho_m{(l)}}'||_1\leq \epsilon_{t(m)}.
\label{eq:mmprimclose}
\ee
Further, we have by analogous considerations and the inequality \ref{DWbound1} the lower bound:
\be
K_D(\gamma_d) \geq \frac{1}{m}\left (\min_{i \in \{1,...,l\}} I(AA':B_iB_i')_{{\rho_m{(l)}}}  - I(AA':E'')_{{\rho_m{(l)}}}\right) \equiv \frac{1}{m}\tilde{R}_{DW}({\rho_m{(l)}}).
\ee
Without loss of generality we may assume that the above minimization is realized for $i=1$. It is now crucial to observe, that
\be
\forall_{i \in \{1,...,l\}}\,\, {\rho_{m|AA'B_iB_i'E''}{(l)}}'= {\rho_{m|AA'B_1B_1'E''}{(l)}}'
\ee
where ${\rho_{m|AA'B_iB_i'E''}{(l)}}'$ is the state ${\rho_{m}{(l)}}'$ after tracing out all systems $B_j \neq B_i$. We then use the fact that due to
(\ref{eq:mmprimclose}), the state $\rho_{m|AA'B_1B_1'E''}(l)$ is close in trace norm to ${\rho_{m|AA'B_1B_1'E''}(l)}'$, so that we can compute
$\tilde{R}_{DW}(\rho_{m}(l)')$ and use asymptotic continuity of the mutual information to conclude about $\tilde{R}_{DW}(\rho_{m}(l))$.
It is then crucial to observe that $\rho_{m|AA'B_1B_1'E''}(l)'$ has the same form  as $\rho_m'$. Thus, we conclude that the main claim of the Lemma \ref{lemma-cdw} stays the same in multipartite case. What changes is the function $f(\epsilon_m, d_{t(m)}, d, m)$, with a substitution $d' \rightarrow d'^l$, as there are $l$ parties. However we still have $\frac{1}{m}f(\epsilon_m, d_{t(m)}, d, m) \rightarrow 0$ for the same $\delta(m)$ as in the bipartite case. It is just more demanding for $\epsilon_m$ and $\delta(m)$ to make $l \log d'$ zero, which is yet possible as long as $l$ is constant. We have then proved, that the proof for the multipartite case is {\it mutatis mutandis} equivalent to the bipartite one.
\end{document}